\newtheorem{theorem}{\textbf{Theorem}}[section]
\newtheorem{lemma}{\textbf{Lemma}}[section]
\newtheorem{definition}{\textbf{Definition}}[section]
\newtheorem{corollary}{\textbf{Corollary}}[section]
\newtheorem{remark}{\textbf{Remark}}[section]
\newtheorem{assumption}{\textbf{Assumption}}[section]
\newcommand{\cF}{\mathcal{F}}
\newcommand{\cC}{\mathcal{C}}
\newcommand{\R}{\mathbb{R}}
\newcommand{\E}{\mathbb{E}}
\newcommand{\vi}{\mathbbm{x}}
\newcommand{\sgn}{\mathrm{sgn}}
\newcommand{\strat}{({\bm \xi},\mathcal{T})}
\newcommand{\stratd}{({\bm x}, {\bm t})}
\newenvironment{proofT}[1]{\textbf{Proof of Theorem $\mathbf{\ref{#1}}$:}}{\hfill$\Box$
\\\bigskip}
\newenvironment{proofC}[1]{\textbf{Proof of Corollary $\mathbf{\ref{#1}}$:}}{\hfill$\Box$
\\\bigskip}
\title{Optimal execution and price manipulations in time-varying limit order books}
\author{Aurélien Alfonsi and José Infante Acevedo\\
  Université Paris-Est, CERMICS,\\ Project team MathRISK ENPC-INRIA-UMLV\\  Ecole des Ponts,   6-8 avenue
Blaise Pascal,  77455
Marne-la-vall\'{e}e, France\\  {\tt alfonsi@cermics.enpc.fr, \ jose-infante.acevedo@cermics.enpc.fr
}}
\begin{document}  
\maketitle
{\noindent {\bf Abstract:} This paper focuses on an extension of the Limit
  Order Book (LOB) model with general shape introduced by Alfonsi, Fruth and Schied~\cite{AFS}. Here, the additional feature allows a time-varying LOB
  depth. We solve the optimal execution problem in this framework for both
  discrete and continuous time strategies. This gives in particular sufficient
  conditions to exclude Price Manipulations in the sense of Huberman and
  Stanzl~\cite{HS} or Transaction-Triggered Price Manipulations (see Alfonsi,
  Schied and Slynko~\cite{ASS}). These conditions give interesting qualitative
insights on how market makers may create or not price manipulations. }\\

{\it Key words:} Market impact model, optimal order execution, limit order
book, market makers, price manipulation, transaction-triggered price
manipulation. \\

{\it AMS Class 2010:} 91G99, 91B24, 91B26, 49K99 \\
 
{\noindent {\bf Acknowledgements:} This work has benefited from the support of the Eurostars E!5144-TFM
project and of the ``Chaire Risques Financiers'' of Fondation du Risque. José
Infante Acevedo is grateful to AXA Resarch~Fund for his doctoral fellowship.}
%\newpage
%\tableofcontents
%\newpage

\section*{Introduction}

It is a rather standard assumption in finance to consider an infinite
liquidity. By infinite liquidity, we mean here that the asset price  is given by a a single value,
and that one can buy or sell any quantity at this price without changing the
asset price. This assumption is in particular made in the Black and Scholes
model~\cite{BS}, and is often made as far as derivative pricing is
concerned. When considering portfolio over a large time horizon, this
approximation is relevant since one may split orders in small ones along the
time and reduces one's own impact on the price. At most, the lack of liquidity
can be seen as an additional transaction cost. This issue has been broadly
investigated in the literature, see Cetin, Jarrow and Protter~\cite{CJP} and
references within.

If we consider instead brokers that have to trade huge volumes over a short
time period (some hours or some days), we can no longer neglect the price
impact of trading strategies. We have to focus on the market microstructure
and model how prices are modified when buy and sell orders are
executed. Generally speaking, the quotation of an asset is made through a
Limit Order Book (LOB) that lists all the waiting buy and sell orders on this
asset. The order prices have to be a multiple of the tick size, and orders at
the same price are arranged in a First-In-First-Out stack. 
The bid (resp. ask) price is the price of the highest waiting buy
(resp. lowest selling buy) order. Then, it is possible to buy or sell the asset in
two different ways: one can either put a limit order and wait that this order
matches another one or put a market order that consumes the cheapest limit
orders in the book. In the first way, the transaction cost is known but the
execution time is uncertain. In the second way, the execution is immediate
(provided that the book contains enough orders). The price per share instead
depends on the order size. For a buy (resp. sell) order,
the first share will be traded at the ask (resp. bid) price while the last one
will be traded some ticks upper (resp. lower) in order to fill the order
size. The ask (resp. bid) price is then modified accordingly. 

%% Let us consider for example a buy order whose size
%% is in~$(N_{k-1},N_k]$, where $N_k$ is the number of limit sell orders with
%% price smaller than the ask price plus $k$~kicks ($N_{-1}=0$ by convention). Then,
%% the first share will be bought at the ask price while the last share will be bought at
%% the ask price plus $k$~ticks.  

The typical issue on a short time scale is the optimal execution problem:
on given a time horizon, how to buy or sell optimally a given amount of
assets? As pointed in Gatheral~\cite{Gatheral} and Alfonsi, Schied and
Slynko~\cite{ASS}, this problem is closely related to the market viability and
to the existence of price manipulations.
Modelling the full LOB dynamics is not a trivial issue, especially if one
wants to keep tractability to solve then the optimal execution
problem. Instead, simpler models called market impact models have been
proposed. These models only describe the dynamics of one asset price and model
how the asset price is modified by a trading strategy. Thus, Bertsimas and
Lo~\cite{BL},  Almgren and Chriss~\cite{AC}, Obizhaeva and Wang~\cite{OW} have proposed
different models where the price impact is proportional to the trading
size, in which they solve the optimal execution problem. However, some
empirical evidence on the markets show that the price 
impact of a trade is not proportional to its size, but is rather proportional
to a power of its size (see for example Potters and Bouchaud~\cite{PB}, and
references within). With this motivation in mind, Gatheral~\cite{Gatheral} has
suggested a nonlinear price impact model. In the same direction, Alfonsi,
Fruth and Schied~\cite{AFS} have derived a  price impact model from a simple LOB
modelling. Basically, the LOB is modelled by a shape function that describes
the density of limit orders at a given price. This model has then been studied
further by Alfonsi and Schied~\cite{AS} and Predoiu, Shaikhet and
Shreve~\cite{PSS}.

The present paper extends this model by letting the LOB shape function vary 
along the time. Beyond solving the optimal execution problem in a more general
context, our goal is to understand how the dynamics of the LOB may create or
not price manipulations. Indeed, a striking result in~\cite{AFS,AS} is that
the optimal execution strategy is made with trades of same sign, which excludes
any price manipulation. This result holds under rather general assumptions on
the LOB shape function, when the LOB shape does not change along the
time. Instead, we will see in this paper that a time-varying LOB may induce price
manipulations and we will derive sufficient conditions to exclude them. These
conditions are not only interesting from a theoretical point of view. They
give a qualitative understanding on how price manipulations may occur when
posting or cancelling limit orders. While preparing this work, Fruth,
Schöneborn and Urusov~\cite{FSU} have presented a
paper where this issue is addressed for a block-shaped LOB, which amounts to a
proportional price impact. Here, we get back their result and extend them to
general LOB shapes and thus nonlinear price impact. The other contribution of this paper is that we solve
 the optimal execution in a continuous time setting
while~\cite{AFS,AS} mainly focus on discrete time strategies. This is in
particular much more suitable to state the conditions that exclude price
manipulations.

\section{Market model and the optimal execution problem}

\subsection{The model description}

The problem that we study in this paper is the classical optimal
execution problem. To deal with this problem, we consider in this paper
a framework which is a natural extension of the model
proposed in Alfonsi, Fruth and Schied~\cite{AFS} and developed by Alfonsi and
Schied~\cite{AS} and Predoiu, Shaikhet and Shreve~\cite{PSS}. The additional feature that we
introduce here is to allow a time varying depth of the
order book.  We consider a large trader who wants to
liquidate a portfolio of $\vi$ shares in a time period of $[0,T]$.
In order liquidate these $\vi$ shares, the large trader uses only market orders, that is
buy or sell orders that are immediately executed at the best available current
price. Thus, our large trader cannot put limit orders. A long
position $\vi>0$ will correspond to a sell program while a short position
$\vi<0$ will stand for a a buy strategy. The optimal execution problem
consists in finding the optimal trading strategy that minimizes the expected
cost of the large trader.

We assume that the
price process without the large trader would be given by a rightcontinuous
martingale $(S^0_t,t\ge 0)$ on a given filtered probability space
$\left(\Omega,(\mathcal{F}_t),\mathcal{F},\mathbb{P}\right)$. The actual price
process~$(S_t,t\ge 0)$ that takes into account the trades of the large trader
is defined by:
\begin{equation}\label{actualprice}
  S_t=S^0_t+D_t, \ t\ge 0.
\end{equation}
Thus, the process $(D_t,t\ge 0)$ describes the price impact of the large
trader. We also introduce the process $(E_t,t\ge 0)$ that describes the volume
impact of the large trader. If the large trader 
puts a market order of size $\xi_t$ ($\xi_t>0$ is a buy order and $\xi_t<0$ a sell order), the volume impact process changes from $E_t$ to:
\begin{equation}
  \label{eq:E_t_plus}
  E_{t+}:=E_t+\xi_t.
\end{equation}
When the large trader is not active, its volume impact $E_t$ goes back to~$0$. We
assume that it decays exponentially with a deterministic time-dependent rate
$\rho_t > 0$ called resilience, so that we have:
\begin{equation}
  \label{eq:E_dynamic}
  dE_t=-\rho_tE_tdt.
\end{equation}

We now have to specify how the processes $D$ and $E$ are
related. To do so, we suppose a continuous
distribution buy and sell limit orders around the unaffected price
$S^0_t$: for $x\in \R$, we assume that the number of limit orders available
between prices $S^0_t+x$ and $S^0_t+x+dx$ is given by
$\lambda(t)f(x)dx$. These orders are sell orders if $x\ge D_t$ and buy orders
otherwise. The
functions  $f:\mathbb{R}\mapsto(0,\infty)$ and
$\lambda:[0,T]\mapsto(0,\infty)$ are assumed to be continuous, and represent
respectively the LOB shape and the depth of the order book.  We define the antiderivative of the function $f$,
$F(y):=\int^y_0f(x)dx,\;\; y\in\mathbb{R},$
and assume that
\begin{equation}\label{infinite_d_ordres}
\underset{x\rightarrow -\infty}{\lim }F(x) = -\infty  \text{ and }\underset{x\rightarrow \infty}{\lim }F(x) = \infty,
\end{equation}
which means that the book contains an infinite number of limit buy and sell orders. 
Thus, we set the following relationship between the volume impact $E_t$ and the price impact $D_t$:
\begin{equation*}
  \int^{D_t}_{0}\lambda(t)f(x)dx=E_t,
\end{equation*}
or equivalently,
\begin{equation}
  \label{eq:relation_E_D}
  E_t=\lambda(t)F(D_t)\text{ and }D_t=F^{-1}\left(\frac{E_t}{\lambda(t)}\right).
\end{equation}
Within this framework, a large trade~$\xi_t$ changes $D_t$ to~$D_{t+}=F^{-1}\left(\frac{E_t+\xi_t}{\lambda(t)}\right)$ and has the cost
\begin{equation}\label{def_costpi}
  \int_{D_t}^{D_{t+}}(S^0_t+x)\lambda(t)f(x)dx=\xi_tS^0_t+\int_{D_t}^{D_{t+}}
  \lambda(t)x f(x)dx:=\pi_t(\xi_t).
\end{equation}
Throughout the paper, we assume that $\lambda$ is $\mathcal{C}^2$ and set
$\eta_t=\frac{\lambda'(t)}{\lambda(t)}$. Thus, we have
$$\lambda(t)=\lambda(0)\exp\left(\int_0^t \eta_u du\right),$$ and $t\mapsto
\eta_t$ is  $\mathcal{C}^1$. Similarly, we assume that $t\mapsto
\rho_t$ is  $\mathcal{C}^1$.

Now, let us observe that we have assumed that the volume impact decays
exponentially when the large trader is inactive. Other choices are of course
possible, and a natural one would be to assume that the price impact decays
exponentially
\begin{equation}
  \label{eq:D_dynamic}
  dD_t=-\rho_tD_tdt,
\end{equation}
which amounts to assume that
$dE_t=\eta_tE_tdt-\rho_t\lambda(t)f(F^{-1}(E_t/\lambda(t)))F^{-1}(E_t/\lambda(t))dt$ by~\eqref{eq:relation_E_D}.

\begin{definition}
The dynamics of ``model~$V$'' with volume impact reversion is the one given
by~\eqref{actualprice}, \eqref{eq:E_t_plus}, \eqref{eq:E_dynamic} and
\eqref{eq:relation_E_D}. The dynamics of ``model~$P$'' with price impact reversion is the one given
by~\eqref{actualprice}, \eqref{eq:E_t_plus}, \eqref{eq:D_dynamic} and
\eqref{eq:relation_E_D}. In both models, we assume that the market is at
equilibrium at time~$0$, i.e. $E_0=D_0=0$.
\end{definition}

\begin{remark}\label{Rk_marketmakers}
 Though being simplistic, this model describes
through $\rho_t$ and $\lambda(t)$ the two different ways that market
makers have to put (or cancel) limit orders: it is either possible to pile orders at an existing price or
to put orders at a better price than the existing ones. Thus, $\lambda(t)$
describes how market makers pile orders while $\rho_t$ describes the rate at
which new orders appear at a better price. 
Basically, one may think these
functions one-day periodic, with relative high values at the opening and the
closing of the market and low values around noon. The particular case $\lambda \equiv 1$ corresponds to the model introduced
by Alfonsi, Fruth and Schied~\cite{AFS} for which new orders can only
appear at a better price.
\end{remark}

%% Examples : Block-shape, power law...
%%  Remark: two-sided LOB

\subsection{The optimal execution problem, and price manipulation strategies}\label{ssec_oep}

We focus on the optimal liquidation of a portfolio with $\vi$
shares by a large trader who can place market orders over a period of time
$[0,T]$. Thus, $\vi>0$ (resp. $\vi<0$) corresponds to a selling (resp. buying)
strategy.

We first consider discrete strategies and assume that at most $N+1$ trades can occur. An admissible strategy
will be then described by an increasing sequence $\tau_0=0\le \dots \le
\tau_N=T$ of stopping times and random variables $\xi_0,\dots,\xi_N$ ($\xi_i$
stands for the trading size at time~$\tau_i$) such that
\begin{itemize}
\item $\vi+\sum_{i=0}^N \xi_i= 0$, i.e. the trader liquidates indeed $\vi$
  shares,
\item $\xi_i$ is $\mathcal{F}_{\tau_i}$-measurable,
\item $\exists M \in \R , \forall 0\le i\le N, \xi_i\ge M$, a.s.
\end{itemize}
The expected cost of an admissible strategy $({\bm \xi},\mathcal{T})$ with $
{\bm \xi}=(\xi_0,\dots,\xi_N)$ and $\mathcal{T}=(\tau_0, \dots ,
\tau_N)$ is given by
\begin{equation}\label{cost_discrete}\mathcal{C}({\bm \xi},\mathcal{T})=\E \left[ \sum_{i=0}^N \pi_{\tau_i}(\xi_i)
\right],
\end{equation}
where $\pi_{\tau_i}(\xi_i)$ stands for the cost of the $i$-th trade, and is
defined by~\eqref{def_costpi} in models~$V$ or~$P$.  The goal of the large trader
is then to minimize this expected cost among the admissible strategies.

We also consider continuous time trading strategy and make the same
assumptions as Gatheral et al.~\cite{GSS}. An admissible strategy $(X_t)_{t \ge
  0}$ is a stochastic process such that
\begin{itemize}
\item $X_{0}=\vi$ and $X_{T+}= 0$, 
\item $X$ is $(\cF_t)$-adapted and leftcontinuous,
\item the function $t\in[0,T+]\mapsto X_t$ has finite and a.s. bounded total variation.
\end{itemize}
The process $X_t$ describes the number of shares that remains to liquidate at
time~$t$. Thus, the discrete time strategy above corresponds to $X_t=\vi+\sum_{i=0}^N \xi_i \mathbf{1}_{\tau_i < t}$, and the three assumptions
on $({\bm \xi},\mathcal{T})$ precisely give the ones on~$X$. Let us observe
that processes~$E$ and~$D$ are also leftcontinuous since we have in model~$V$
(resp. model~$P$):
\begin{equation}\label{dynamic_modelsVP}
  dE_t=dX_t-\rho_tE_tdt,\ (\text{resp.}\
dE_t=dX_t+\eta_tE_tdt-\rho_t\lambda(t)f(F^{-1}(E_t/\lambda(t)))F^{-1}(E_t/\lambda(t))dt
).
\end{equation}
We want now to write the cost associated to the strategy~$X$. To do so,
we introduce the following notations
\begin{equation}\label{G_def}  
x\in \R, \tilde{F}(x)=\int^{x}_{0}yf(y)dy, \ G(x)=\tilde{F}\left(F^{-1}(x)\right),
\end{equation}
so that $\pi_t(dX_t)=S^0_tdX_t+\lambda(t)[G\left(\frac{E_t+dX_t}{ \lambda(t)}
    \right)-G\left(\frac{E_t}{ \lambda(t)}    \right)]$. Since $G'=F^{-1}$,
the cost of an admissible strategy is given by:
\begin{equation}\label{cost_continuous}
  \mathcal{C}(X)=\E \left[ \int_0^T \left[S^0_t+ 
  F^{-1}\left(\frac{E_t}{ \lambda(t)} \right) \right]dX_t +\sum_{t\le T}
  \lambda(t)\left[G\left(\frac{E_t+\Delta X_t}{ \lambda(t)}
    \right)-G\left(\frac{E_t}{ \lambda(t)} \right) - F^{-1}\left(\frac{E_t}{
        \lambda(t)} \right) \Delta X_t \right]
\right],
\end{equation}
which coincides with~\eqref{cost_discrete} for discrete strategies. Here,
$\Delta X_t=X_{t+}-X_t$ denotes the jump of~$X$ at time~$t$ (jumps are
countable), and $dX_t$ stands for the signed measure on~$[0,T]$ associated to
$(X_t,0\le t\le T+)$ (a jump $\Delta X_T$ induces a Dirac mass in~$T$). If we
introduce the continuous part of~$X$, $X^c_t:=X_t-\sum_{0 \le s <t} \Delta
X_s$, we can rewrite the cost as follows:
\begin{equation}\label{cost_continuous2}
  \mathcal{C}(X)=\E \left[ \int_0^T \left[S^0_t+ 
  F^{-1}\left(\frac{E_t}{ \lambda(t)} \right) \right]dX^c_t +\sum_{t\le T}
 S^0_t\Delta X_t+  \lambda(t)\left[G\left(\frac{E_t+\Delta X_t}{ \lambda(t)}
    \right)-G\left(\frac{E_t}{ \lambda(t)} \right)  \right]
\right].
\end{equation}

The optimal execution problem is in fact closely related to questions around
market viability and arbitrage. We recall the definition of price manipulation
strategies introduced by Huberman and Stanzl~\cite{HS}.
\begin{definition}
A round trip is an admissible strategy~$X$ for $\vi=0$. A Price
Manipulation Strategy (PMS) in the sense of Huberman and Stanzl is a round
trip whose expected cost is negative, i.e. $\mathcal{C}(X) <0$.
\end{definition}
Heuristically, if a PMS exists, it could be repeated indefinitely and would lead to a
classical arbitrage (i.e. an almost sure profit) by a law of large
numbers. However, it has been pointed in Alfonsi et al.~\cite{ASS} the absence of PMS does
not ensure the market stability. In fact, in some PMS free models, the optimal
strategy to sell~$\vi$ shares consists in buying and selling successively a
much higher amount of shares. To correct this, they introduce the following definition.
\begin{definition}
A model admits transaction-triggered price manipulations (TTPM) if the
expected cost of a sell (buy) program can be decreased by intermediate buy
(sell) trades, i.e.
\begin{equation*}
\exists X\text{ admissible}, \ \mathcal{C}(X)  < \inf\left\{ \mathcal{C} (\tilde{X}) ,
  \tilde{X} \text{ is admissible and nonincreasing or nondecreasing} \right\}.
\end{equation*}
\end{definition}
It is rather natural choice to exclude TTPM: in presence of TTPM a large
trader would increase the traded volume to minimize its cost, which produce
noise and may yield to instability. Besides, the absence of TTPM implies 
the absence of PMS. The optimal strategy for buying $\varepsilon>0$ shares is
made only with intermediate buy trades and has thus a positive cost. Thus, by
some cost continuity that usually holds (this is the case for models~$V$
and~$P$), any round trip has a nonnegative cost. 

\begin{remark}\label{2sides} It is possible to define a two-sided limit order
  book model like in Alfonsi, Fruth and Schied~\cite{AFS} or Alfonsi and
  Schied~(\cite{AS}, Section 2.6). In such a model, bid and ask prices evolve
  as follows. A buy (resp. sell) order of the large trader shifts the ask
  (resp. bid) price and leaves the bid (resp. ask) price unchanged. When the large
  trader is idle, the shifts on the ask and bid prices goes back exponentially
  to zero, like in models~$V$ or $P$. As in~\cite{AFS,AS}, the two-sided
  model coincides with the model presented here when the large trader  puts
  only buy orders or only sell orders. In particular, the optimal strategies
  are the same in both models in absence of TTPM. 
\end{remark}

\section{Main results}

The first focus of this paper is to extend the results obtained in Alfonsi et al.~\cite{AFS,AS}
 and obtain the optimal execution strategies for LOB with a time
varying depth~$\lambda$. Doing so, our goal is also to better understand how this time
varying depth may create manipulation strategies. In fact, it was shown in~\cite{AFS}
and~\cite{AS} for $\lambda\equiv 1$ that under some general assumptions on the shape function~$f$,
there is an optimal liquidation strategy which is made only with sell
(resp. buy) orders
when $\vi>0$ (resp. $\vi<0$). Thus, there is no PMS nor TTPM when the LOB shape is
constant. This is a striking result, and one may wonder how this is modified
by changing slightly the assumptions. In Alfonsi, Schied and Slynko~\cite{ASS} is studied the case of a
block-shaped LOB, where the resilience is not exponential so that the market
has some memory of the past trades. Conditions on the market resilience are
given to exclude PMS and TTPM. Analogously, we want to obtain here conditions
on~$\lambda$ and $\rho$ that rules out such strategies. This is not only
interesting from a theoretical point of view. This will give also some noticeable
qualitative insights for market makers. In fact, for a market maker who places
and cancels significant limit orders, these conditions will indicate if he may
or not create manipulation strategies.

Before showing the results, it is worth to make further derivations on the
expected cost. Let us start with discrete strategies. By using the
martingale property on~$S^0$ and the assumptions on~${\bm \xi}$ made in
Section~\ref{ssec_oep}, we can show easily like in~\cite{AS} that
$$\mathcal{C}({\bm \xi},\mathcal{T})=-S^0_0\vi + \E \left[ \sum_{i=0}^N \int_{D_{\tau_i}}^{D_{\tau_i+}}
  \lambda(\tau_i)x f(x)dx \right]. $$
Then, it is easy to check that $\sum_{i=0}^N \int_{D_{\tau_i}}^{D_{\tau_i+}} \lambda(\tau_i)x f(x)dx  $
is a deterministic function of $({\bm \xi},\mathcal{T})$ in both volume impact
reversion and price impact reversion models. We respectively denote by
$C^V({\bm \xi},\mathcal{T})$ and $C^P({\bm \xi},\mathcal{T})$ this function
and get:
\begin{equation}\label{CM_def}\mathcal{C} \strat =-S^0_0\vi + \E \left[ C^M \strat \right],
\end{equation}
where $M \in \{V,P\}$ indicates the model chosen. Thus, if the function
$\stratd \mapsto C^M \stratd$ has a unique minimizer on $\{\stratd \in
\R^N\times \R^{N+1}, \sum_{i=1}^N x_i=-\vi, 0=t_0\le \dots\le t_N=T \}$, the
optimal strategy is deterministic and given by this minimizer. When $\lambda$
is constant, it is shown in~\cite{AS} that under some assumptions on~$f$
depending on the model chosen, the optimal time grid ${\bm t}^\star$ is homogeneous with
respect to~$\rho$, i.e. $\int_{t_i^\star}^{t_{i+1}^\star} \rho_sds
=\frac{1}{N}\int_{0}^{T} \rho_sds$. Instead, there is no such a simple
characterization for general $\lambda$, even in the block-shaped case. Thus, we
will focus on optimizing the trading strategy~${\bm \xi }$ on a fixed time
grid~${\bm t}$:
\begin{equation}\label{det_time_grid}
  {\bm t}=(t_0,\dots,t_N), \text{ such that $0=t_0< \dots < t_N=T$}.
\end{equation}
Last, we introduce the following notations that will be used throughout the paper:
\begin{equation}
\label{notations_a}
a_{i}=e^{-\int^{t_{i}}_{t_{i-1}}\rho_{u}du},\
\tilde{a}_i=\frac{a_i\lambda(t_{i-1})}{\lambda(t_i)}=e^{-\int^{t_{i}}_{t_{i-1}}(\rho_{u}+\eta_u)du}, \ 
\hat{a}_i=a_i\frac{\lambda(t_i)}{\lambda(t_{i-1})}=e^{-\int^{t_{i}}_{t_{i-1}}(\rho_{u}-\eta_u)du}, \  \ \ 1 \le i\le N. 
\end{equation}

Similarly in the continuous case, we get by using the martingale assumption
(see Lemma~2.3 in Gatheral, Schied and Slynko~\cite{GSS}) that $\E[\int_0^TS^0_tdX_t]=- \vi
S^0_0$. From~\eqref{cost_continuous} and~\eqref{cost_continuous2}, we get $\cC(X)=- \vi
S^0_0+\E[C^M(X)]$, where
$$C^M(X)=\int_0^T  
  F^{-1}\left(\frac{E_t}{ \lambda(t)} \right) dX^c_t +\sum_{t\le T}
  \lambda(t)\left[G\left(\frac{E_t+\Delta X_t}{ \lambda(t)}
    \right)-G\left(\frac{E_t}{ \lambda(t)} \right)\right].$$
Once again, $C^M$ is a deterministic function of the strategy~$X$ in both
models~$M \in \{V,P\}$, and it is sufficient to focus on its minimization.

\subsection{The block-shaped LOB case ($f\equiv 1$)}

In this section, we consider a shape function of the limit order book that has
the form $\lambda(t)$. This time-dependent framework generalizes the
block-shaped limit order book case studied by Obizhaeva and
Wang~\cite{OW} that consists in considering a uniform
distribution of shares with respect to the price. We will get an explicit
solution for the optimal execution problem, which extends the results given
by Alfonsi, Fruth and Schied~\cite{AFS2}.

\subsubsection{Volume impact reversion model}

When $f\equiv 1$, the deterministic cost function is simply given by
\begin{equation}\label{eq:cost_model_vol_const_deterministe}
C^V ({\bm \xi}, {\bm t}) =\sum^{N}_{n=0}\lambda(t_{n})\int^{D_{t_{n}+}}_{D_{t_{n}}}xf(x)dx=\sum^{N}_{i=0}\frac{\xi_{i}}{2}\left(\frac{\xi_{i}}{\lambda\left(t_{i}\right)}+2\frac{\sum_{j<i}e^{-\int^{t_{i}}_{t_{j}}\rho_{s}ds}\xi_{j}}{\lambda(t_{i})}\right),  
\end{equation}
which is a quadratic form: $C^V ({\bm \xi}, {\bm t}) =\frac{1}{2}{\bm \xi}^T
M^V {\bm \xi}$, with $M^V_{i,j}=\frac{\exp\left(-\left|\int_{t_i}^{t_j} \rho_s
    ds \right|\right)}{\lambda(t_i \vee t_j)}$, $0\le i,j \le N$.

\begin{theorem}
\label{thm:m_v_cte}
The quadratic form~\eqref{eq:cost_model_vol_const_deterministe} is positive
definite if and only if \begin{equation}
  \label{eq:condit_uniq_sol_vol_mod}
  a_i\tilde{a}_i<1,\forall i\in\left\{1,\ldots ,N\right\}.
\end{equation}

In this case, the optimal execution problem to liquidate $\vi$ shares on the
time-grid~\eqref{det_time_grid} admits a unique optimal
strategy ${\bm \xi}^\star$ which is  deterministic
and explicitly given by:
\begin{equation}
\label{opt_sol_vol_mod_const}
\begin{cases}
 \xi^\star_{0} &= -\frac{\vi}{K_V} \lambda(t_{0}) \frac{1-a_{1}}{1-a_1\tilde{a}_1} \\
  \xi^\star_{i} &= -\frac{\vi}{K_V}
 \lambda(t_i)\left[\frac{a_{i+1}}{1-a_{i+1}\tilde{a}_{i+1}}(\tilde{a}_{i+1}-1)+\frac{1-\tilde{a}_i}{1-a_i\tilde{a}_i}\right],
\ 1\le i \le N-1 \\ 
  \xi^\star_{N} &= -\frac{\vi}{K_V} \lambda(t_N) \frac{1-\tilde{a}_N
  }{1-a_N\tilde{a}_N },
\end{cases}
\end{equation}
where
$$K_V=\frac{\lambda\left(t_{0}\right)\left(1-2a_{1}\right)+\lambda\left(t_{1}\right)}{1-a_{1}\tilde{a}_1}+\sum^{N}_{i=2}\lambda(t_i)\frac{(1-\tilde{a}_i)^2}{1-a_i\tilde{a}_i}.$$
Its cost is given by $C^V ({\bm \xi}^\star, {\bm t})=\vi^2/(2K_V)$.
\end{theorem}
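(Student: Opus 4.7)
The plan is to diagonalize the quadratic form $C^V$ by a change of variables from the trades $\bm \xi$ to the volume impacts $(E_{t_i})$, which simultaneously settles the positive-definiteness question and makes the Lagrange system for the optimal strategy tractable by a telescoping argument on partial sums.

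First, since $f \equiv 1$ the cost of the $i$-th trade reduces to $\pi_{t_i}(\xi_i) = S^0_{t_i}\xi_i + ((E_{t_i+})^2 - E_{t_i}^2)/(2\lambda_i)$, writing $\lambda_i := \lambda(t_i)$. Using $E_{t_0}=0$, $E_{t_i+}=E_{t_i}+\xi_i$ and $E_{t_{i+1}}=a_{i+1}E_{t_i+}$ to substitute $(E_{t_i+})^2 = E_{t_{i+1}}^2/a_{i+1}^2$ for $i<N$, the sum of trade costs telescopes to
\begin{equation*}
2\, C^V({\bm \xi}, {\bm t}) \;=\; \sum_{i=1}^{N} \frac{1-a_i \tilde{a}_i}{\lambda_i\, a_i \tilde{a}_i}\, E_{t_i}^2 \;+\; \frac{(E_{t_N+})^2}{\lambda_N}.
\end{equation*}
The map $\bm{\xi} \mapsto (E_{t_1},\dots,E_{t_N}, E_{t_N+})$ is linear and lower-triangular with diagonal $(a_1,\dots,a_N,1)$, hence a bijection of $\R^{N+1}$. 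Therefore $C^V$ is positive definite iff every weight above is strictly positive, which (since $1/\lambda_N>0$) is precisely the condition $a_i\tilde{a}_i<1$ for every $i$. Necessity: if $a_k\tilde a_k\ge 1$, taking $E_{t_k}\neq 0$ and all other $E$'s to zero produces a $\bm\xi\neq 0$ with $C^V\le 0$.

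Second, under this condition strict convexity gives existence and uniqueness of the minimizer on $\{\sum_i \xi_i = -\vi\}$, characterized by the Lagrange system $M^V \bm{\xi}^\star = \mu \bm{1}$. Setting $A_i = e^{-\int_0^{t_i}\rho_s ds}$ and $U_i = \sum_{j\le i}\xi^\star_j/A_j$, the $i$-th Lagrange equation reads $(A_i^2/\lambda_i)U_i + \sum_{j>i}A_j\xi^\star_j/\lambda_j = \mu A_i$. Subtracting consecutive equations cancels the $\xi^\star_{i+1}$ contributions and yields
\begin{equation*}
U_i = \mu\, \frac{A_{i+1}-A_i}{A_{i+1}^2/\lambda_{i+1} - A_i^2/\lambda_i}, \qquad 0\le i\le N-1,
\end{equation*}
while the equation at $i=N$ (empty tail sum) gives $U_N=\mu\lambda_N/A_N$. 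Recovering $\xi^\star_i = A_i(U_i-U_{i-1})$ with $U_{-1}:=0$ and simplifying via the identity $\lambda_{i-1}a_i = \lambda_i\tilde a_i$ built into the definition of $\tilde a_i$ produces~\eqref{opt_sol_vol_mod_const} up to the prefactor $\mu$.

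Finally, from $2C^V(\bm\xi^\star,\bm t)=\mu\bm 1^T\bm\xi^\star = -\mu\vi$ it remains only to pin down $\mu$. Summing the explicit $\xi^\star_i/\mu$ and combining neighbouring contributions again through $\lambda_{i-1}a_i=\lambda_i\tilde a_i$ collapses the total to $K_V$; the constraint then forces $\mu=-\vi/K_V$, yielding both~\eqref{opt_sol_vol_mod_const} and the cost $\vi^2/(2K_V)$. I expect the main obstacle to be purely algebraic: checking that the raw output of the Lagrange recursion telescopes to the compact closed forms of the statement. All these reductions, however, rest on the single identity $\lambda_{i-1}a_i=\lambda_i\tilde a_i$.
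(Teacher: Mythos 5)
Your proof is correct, and it reaches the theorem by a route that differs in its mechanics from the paper's. The paper establishes positive definiteness by exhibiting a Gram factorization $M^V=Y^TY$ with an explicit upper-triangular $Y$ built from the recursion $\mathbf{y}_i=\tilde a_i\mathbf{y}_{i-1}+\mathbf{e}_i\sqrt{1-a_i\tilde a_i}/\sqrt{\lambda(t_i)}$, proves the converse via the leading principal minors, and then obtains \eqref{opt_sol_vol_mod_const} from the generic formula ${\bm \xi}^{\star}=-\frac{\vi}{\mathbf{1}^T(M^{V})^{-1}\mathbf{1}}(M^{V})^{-1}\mathbf{1}$ by inverting $Y$. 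You instead diagonalize the cost directly in the state variables, writing $2C^V=\sum_{i=1}^N\frac{1-a_i\tilde a_i}{\lambda_i a_i\tilde a_i}E_{t_i}^2+\frac{E_{t_N+}^2}{\lambda_N}$ under the triangular bijection ${\bm\xi}\mapsto(E_{t_1},\dots,E_{t_N},E_{t_N+})$ (I checked the coefficient: $\frac{1}{a_i^2\lambda_{i-1}}-\frac{1}{\lambda_i}=\frac{1-a_i\tilde a_i}{\lambda_i a_i\tilde a_i}$, as claimed), which settles both directions of the positive-definiteness claim at once — the converse by direct construction of a nonzero vector with nonpositive cost rather than by minors. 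For the minimizer you difference consecutive Lagrange equations to solve for the partial sums $U_i=\sum_{j\le i}\xi^\star_j/A_j$ in closed form instead of inverting the triangular factor; I verified that $U_i/\mu=\frac{\lambda_i}{A_i}\frac{1-a_{i+1}}{1-a_{i+1}\tilde a_{i+1}}$, that $\xi^\star_i=A_i(U_i-U_{i-1})$ reproduces \eqref{opt_sol_vol_mod_const} via $\lambda_{i-1}a_i=\lambda_i\tilde a_i$, and that the pairwise regrouping of the sum yields exactly $K_V$, hence $\mu=-\vi/K_V$ and the cost $\vi^2/(2K_V)$. The two approaches are of comparable length; yours has the advantage that the diagonal decomposition has a transparent interpretation (each $E_{t_i}$ contributes an independent nonnegative cost term when $a_i\tilde a_i<1$), and that the necessity direction is constructive, while the paper's Gram factorization generalizes more directly to the matrix-algebra viewpoint used elsewhere. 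The only points you leave implicit — that the denominators $A_{i+1}^2/\lambda_{i+1}-A_i^2/\lambda_i$ are nonzero (they equal $\frac{A_i^2}{\lambda_i}(a_{i+1}\tilde a_{i+1}-1)\neq 0$ under \eqref{eq:condit_uniq_sol_vol_mod}) and that coercivity of the positive definite form guarantees existence of the constrained minimizer — are routine and do not constitute gaps.
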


This theorem provides an explicit optimal strategy for the large trader. It
also gives explicit conditions that exclude or create PMS. First, let us
assume that
\begin{equation}
  \label{eq:asympt_limit_condit_vol_mod}
  \forall t\ge 0, \  2\rho_t +\eta_t\ge 0 .
\end{equation}
Then, for any time grid~\eqref{det_time_grid}, $a_i\tilde{a}_i \le 1$ and the
quadratic form~\eqref{eq:cost_model_vol_const_deterministe} is positive
semidefinite since it is a limit of positive definite quadratic forms. Thus,
the model is PMS free. Conversely let us assume that  $2\rho_{t_1} +\eta_{t_1}<0$ for
some $t_1\ge 0$. Let us consider the following round trip on the time grid
${\bf t}=(0,t_1,t_2)$ with $t_2>t_1$, where the large trader buys $x>0$ at time
$t_1$ and sells $x$ at time $t_2$. The cost of such a strategy is given by
\begin{equation}
  C^V ((0,x,-x), {\bm
    t})=\frac{x^{2}}{2\lambda(t_2)}\left(e^{\int^{t_2}_{t_1}\eta_{u}du}+1-2e^{-\int^{t_2}_{t_1}\rho_{u}du}\right)\underset{t_2
    \rightarrow t_1}{=}\frac{x^{2}}{2\lambda(t_1)}
  \left((2\rho_{t_1} +\eta_{t_1})(t_2-t_1)+o(t_2-t_1)\right)
\end{equation}
and is negative when $t_2$ is close enough to~$t_1$.
\begin{corollary}\label{cor_hs_v_bs}
In a block-shaped LOB,  model~$V$ does not
admit price manipulation in the sense of Huberman and Stanzl if and only if 
\eqref{eq:asympt_limit_condit_vol_mod}~holds.
\end{corollary}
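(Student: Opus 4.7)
The displayed computations immediately preceding the statement essentially carry the proof: the remaining tasks are to connect the one-step inequality $a_i\tilde a_i\le 1$ from Theorem~\ref{thm:m_v_cte} to the pointwise condition~\eqref{eq:asympt_limit_condit_vol_mod}, to handle the boundary case where this inequality is only weak, and to lift nonnegativity of the cost from discrete to continuous-time admissible strategies. I would organize the argument into the two implications.

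The necessity direction is settled by the round trip displayed just above the statement. If $2\rho_{t_1}+\eta_{t_1}<0$ at some $t_1$, continuity of $\rho$ and $\eta$ yields a right-neighborhood of $t_1$ on which this quantity stays negative. The strategy that buys $x$ at time $t_1$ and sells $x$ at time $t_2$, with $t_2$ in this neighborhood, is an admissible round trip (both as a discrete strategy and as a piecewise-constant continuous strategy) and its cost, as computed in the excerpt, is strictly negative by the Taylor expansion at $t_1$. This is a PMS.

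For sufficiency, assume $2\rho_u+\eta_u\ge 0$ for all $u\ge 0$. For every grid ${\bm t}=(t_0,\dots,t_N)$,
$$a_i\tilde a_i=\exp\!\left(-\int_{t_{i-1}}^{t_i}(2\rho_u+\eta_u)\,du\right)\le 1.$$
To invoke Theorem~\ref{thm:m_v_cte} I would perturb the resilience: replacing $\rho_u$ by $\rho_u+\varepsilon$ multiplies $a_i\tilde a_i$ by $e^{-2\varepsilon(t_i-t_{i-1})}<1$, so the perturbed quadratic form is positive definite by Theorem~\ref{thm:m_v_cte}. Letting $\varepsilon\downarrow 0$, the matrices converge entrywise, and $C^V(\cdot,{\bm t})$ is positive semidefinite as a pointwise limit of positive definite quadratic forms. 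Thus every discrete round trip has nonnegative cost.

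The main remaining obstacle, which I view as the only nontrivial point, is extending this to continuous-time admissible round trips. My plan is to approximate any such $X$ by its step-function samplings $X^n$ on refining partitions; these are discrete round trips since $X_{T+}=X_0=0$. Because $X$ has bounded total variation, the processes $E^n$ obtained from~\eqref{dynamic_modelsVP} driven by $X^n$ converge uniformly to $E$, and in the block-shaped case the cost~\eqref{cost_continuous2} is a continuous quadratic functional of $(X^n,E^n)$. Passing to the limit gives $C^V(X)\ge 0$, so no PMS exists, completing the equivalence.
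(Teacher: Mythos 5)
Your proposal is correct and follows essentially the same route as the paper: the explicit two-trade round trip for necessity, and positive semidefiniteness of $C^V(\cdot,{\bm t})$ as a limit of positive definite quadratic forms (via the perturbation $\rho+\varepsilon$) for sufficiency. Your final paragraph merely makes explicit the discrete-to-continuous cost-continuity step that the paper invokes without detail, so there is no substantive divergence.
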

Let us now discuss this result from the point of view of market makers. A
market maker that puts a significant orders may have an influence on $\rho_t$
and $\eta_t$ and can increase (resp. decrease) them by respectively adding
(resp. canceling) an order at a better price or at an existing limit order
price. What comes out from~\eqref{eq:asympt_limit_condit_vol_mod} is that no
PMS may arise if one adds limit orders, whatever the way of adding new
orders. Instead, PMS can occurs when canceling orders. A different conclusion
will hold in the price reversion model.

An analogous result to Corollary~\ref{cor_hs_v_bs} is stated in a recent paper
by Fruth, Schöneborn and Urusov~\cite{FSU} that has been
published while we were preparing this work. To be precise, results
in~\cite{FSU} are given for model~$P$ with a block-shaped LOB, and the optimal
execution strategy is obtained in a continuous time setting. As we will see in
the next paragraph, models~$V$ and~$P$ are mathematically equivalent when the
LOB shape is constant, even though they are different from a financial point
of view. By taking a regular time-grid
$t_i=\frac{iT}{N},i=0\ldots ,N$, and letting $N\rightarrow +\infty$, we get
back the optimal strategy in continuous time (that we still denote
by~$\xi^\star$, by a slight abuse of notations):
\begin{equation}
\label{opt_sol_vol_mod_const_cont}
\begin{cases}  
  {\mathbf{\xi}^{\star}_0}\underset{N\rightarrow +\infty}\longrightarrow
\xi^{\star}_0:=-\frac{\vi}{\lambda\left(T\right)+\int^{T}_{0}\frac{\rho^2_s\lambda(s)}{\eta_s+2\rho_{s}}ds}
\frac{\lambda(0)\rho_0}{2\rho_0+\eta_0 } \\
  \frac{\mathbf{\xi}^{\star}_{i_N}}{T/N}\underset{N\rightarrow
    +\infty}\longrightarrow \xi^{\star}_t:=-\frac{\vi}{\lambda\left(T\right)+\int^{T}_{0}\frac{\rho^2_s\lambda(s)}{\eta_s+2\rho_{s}}ds}
  \lambda(t) \left[\left(\frac{\rho_t}{2\rho_t+\eta_t}\right)'+\rho_t\left(\frac{\rho_t+\eta_t}{2\rho_t+\eta_t}\right)\right]
%\frac{\lambda(t) \left[\eta_t^2\rho_t +3\rho_t\eta_t+2\rho_t^3+\eta_t \rho_t' -\eta_t' \rho_t\right]}{\left(\eta_t+2\rho_t  \right)^2}
\text{, for } i_N\text{ such that }\frac{t_{i_N}}{N}\underset{\Delta t\rightarrow 0}\longrightarrow t \\
\mathbf{\xi}^{\star}_N \underset{N\rightarrow
  +\infty}\longrightarrow \xi^{\star}_T:=-\frac{\vi}{\lambda\left(T\right)+\int^{T}_{0}\frac{\rho^2_s\lambda(s)}{\eta_s+2\rho_{s}}ds}
\frac{\lambda(T)\left( \eta_T+ \rho_T \right)}{\eta_T+2\rho_T}.
\end{cases}
\end{equation}
The strategy $dX^\star_t={\xi^{\star}_0} \delta_0(dt)+\xi^{\star}_t dt+{\xi^{\star}_T}\delta_T(dt) $ with initial trade~${\xi^{\star}_0}$, continuous trading
${\xi^{\star}_t}$ on $[t,t+dt]$ for $t\in(0,T)$ and last trade
${\xi^{\star}_T}$ is indeed shown to be optimal in Fruth, Schöneborn and Urusov~\cite{FSU} among the continuous time
strategies with bounded variation. We will show here again this result for more general
LOB shape.  The optimal strategy has the following cost:
$$\frac{\vi^2}{2\left[\lambda\left(T\right)+\int^{T}_{0}\frac{\rho^2_s\lambda(s)}{2\rho_{s}+\eta_s}ds\right]}.$$
Besides, this provides a necessary and
sufficient condition to exclude transaction-triggered price manipulation.
\begin{corollary}\label{TTPM_vol_cte}
In a block-shaped LOB,  model~$V$ does not
admit transaction-triggered price manipulation  if and only if 
\begin{eqnarray}
&&\forall t\ge 0, \ \eta_t+ \rho_t \ge 0,\ \text{and }
\left(\frac{\rho_t}{2\rho_t+\eta_t}\right)'+\rho_t\left(\frac{\rho_t+\eta_t}{2\rho_t+\eta_t}\right)\ge 0.\label{cond_ttpms_v_bs}
\end{eqnarray}
\end{corollary}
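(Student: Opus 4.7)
The plan is to read off the sign of the continuous-time optimal strategy $X^\star$ given by formula \eqref{opt_sol_vol_mod_const_cont} and invoke the fact that absence of TTPM is equivalent to monotonicity of $X^\star$ for every $\vi$.

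First, I would argue the general reduction: absence of TTPM is equivalent to the unique minimizer $X^\star$ of the cost being monotone (for every $\vi$). Indeed, under the PMS-free condition \eqref{eq:asympt_limit_condit_vol_mod}, $C^V$ is a strictly convex functional on admissible strategies, and therefore admits at most one minimizer on any closed convex set. The monotone admissible strategies form such a closed convex set, so they also admit a unique minimizer $\tilde X^\star$, and we have $\inf_{\text{mon.}} C^V = C^V(\tilde X^\star) \ge C^V(X^\star)$ with equality iff $X^\star = \tilde X^\star$, i.e.\ iff $X^\star$ itself is monotone. We may assume \eqref{eq:asympt_limit_condit_vol_mod} throughout, since otherwise PMS already exist by Corollary~\ref{cor_hs_v_bs}, and a fortiori TTPM.

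Second, I would fix $\vi > 0$ and examine the three components of $X^\star$ from \eqref{opt_sol_vol_mod_const_cont}. Since $K > 0$, $\lambda > 0$ and \eqref{eq:asympt_limit_condit_vol_mod} gives $2\rho_t + \eta_t > 0$, the non-increasing requirement $dX^\star_t \le 0$ translates into three inequalities: the initial mass $\xi^\star_0 \le 0$, which is automatic since $\rho_0 > 0$; the continuous density $\xi^\star_t \le 0$ for a.e.\ $t \in (0,T)$, equivalent to $(\rho_t/(2\rho_t+\eta_t))' + \rho_t(\rho_t+\eta_t)/(2\rho_t+\eta_t) \ge 0$; and the terminal mass $\xi^\star_T \le 0$, equivalent to $\rho_T + \eta_T \ge 0$. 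Linearity of $X^\star$ in $\vi$ handles $\vi < 0$ symmetrically. Since these local inequalities must hold for every horizon $T$ and every $t \in [0,T]$, they amount exactly to \eqref{cond_ttpms_v_bs} at every $t \ge 0$.

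The main obstacle I anticipate is the rigorous justification of the reduction in the first paragraph: establishing that $X^\star$ from \eqref{opt_sol_vol_mod_const_cont} is indeed the unique minimizer of $C^V$ over all admissible continuous-time strategies, and that the minimum over the closed convex set of monotone admissible strategies is attained. Both facts rely on passing to the limit in the discrete-time optimum of Theorem~\ref{thm:m_v_cte}, together with a lower-semicontinuity argument in the space of finite-variation processes. Once these foundations are in place, the rest of the proof reduces to the direct algebraic sign analysis sketched above.
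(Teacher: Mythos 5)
Your proposal is correct and follows essentially the same route as the paper: the paper likewise reads the equivalence off the signs of the three components of the explicit continuous-time optimizer~\eqref{opt_sol_vol_mod_const_cont} (initial trade automatically of the right sign, terminal trade giving $\rho_T+\eta_T\ge 0$, intermediate density giving the second condition), with the sufficiency direction established rigorously, for general shapes and hence for $f\equiv 1$, via the verification argument of Theorem~\ref{thm_vol_mod_cont} and Corollary~\ref{cor_vol_mod_cont}, and the necessity direction left at essentially the same level of detail as yours. The only real difference is in the reduction step: where you invoke strict convexity of $C^V$ to identify absence of TTPM with monotonicity of the unique minimizer, the paper obtains uniqueness and optimality of $X^\star$ from the verification argument rather than from convexity, and likewise leaves the attainment of the infimum over monotone strategies implicit.
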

The first condition comes from the last trade and implies~\eqref{eq:asympt_limit_condit_vol_mod} since
$\rho_t\ge 0$. It can be interpreted similarly as
condition~\eqref{eq:asympt_limit_condit_vol_mod} from market makers' point of
view. The second condition in~\eqref{cond_ttpms_v_bs} comes from
the intermediate trades and brings on the derivatives of~$\rho$ and $\eta$. It
is harder to get an intuitive idea of its meaning from a market maker's point
of view. Last, let us mention that
we can show that the optimal strategy on the discrete
time-grid~\eqref{det_time_grid} is made with nonnegative trades
if one has~\eqref{eq:condit_uniq_sol_vol_mod} and
\begin{equation}
  \label{eq:condit_t_t_manip_m_v}
  \frac{1-\tilde{a}_i}{1-a_i\tilde{a}_i}\geq a_{i+1}\frac{1-\tilde{a}_{i+1}}{1-a_{i+1}\tilde{a}_{i+1}},\;\;\forall i\in\left\{1,\ldots ,N-1\right\}\text{ and } \tilde{a}_N\leq 1.
\end{equation}
Condition~\eqref{cond_ttpms_v_bs} can be seen as the continuous time limit of condition~\eqref{eq:condit_t_t_manip_m_v}.

Let us give now an illustration of the optimal strategy with a time-varying
depth. We consider the case of a time-varying depth
\begin{equation*}\label{def_example_lambda}\lambda(t)=\lambda_0+\cos(2\pi t), \text{ with }\lambda_0>1,
\end{equation*}
which corresponds
to a one-day periodic function with high values at the beginning and at the
end of the day. We can show that $\eta_t \ge -\frac{2
  \pi}{\sqrt{\lambda_0^2-1}}$ and with a constant resilience $\rho$, there is
no PMS as soon as $2\rho-\frac{2
  \pi}{\sqrt{\lambda_0^2-1}} \ge 0$. Figure~\ref{OEblockshape} shows
the optimal execution strategy~\eqref{opt_sol_vol_mod_const} with a value
$\lambda_0$ that exclude PMS but allows TTPM. The optimal strategy to buy~$50$
shares consists in buying almost $95$~shares and selling $45$~shares, which
roughly treble the traded volume.
\begin{figure}[htbp]
 \centering
 \includegraphics[width=0.8\linewidth]{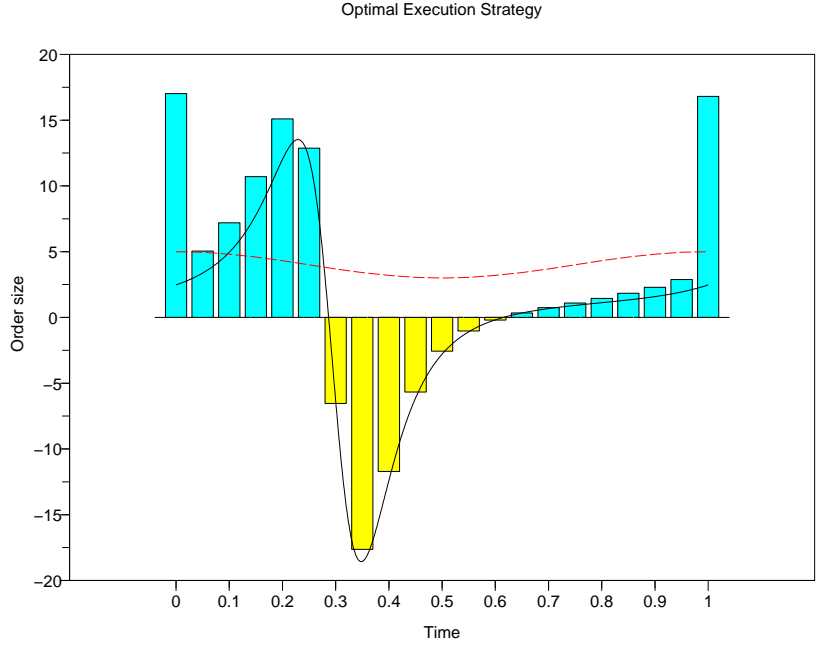} 
 \caption{Optimal execution strategy to buy $50$ shares on a regular time
   grid, with $N=20$, $\rho=1$, $\lambda(t)=4+\cos(2\pi t)$ (plotted in dashed
   line). In solid line is plotted the function $t\mapsto \left(\frac{\rho_t}{2\rho_t+\eta_t}\right)'+\rho_t\left(\frac{\rho_t+\eta_t}{2\rho_t+\eta_t}\right)$.}
\label{OEblockshape}
\end{figure}
 
\subsubsection{Price impact reversion model}\label{sec_model_P_bs}

When $f\equiv 1$, the deterministic cost function $\sum_{i=0}^N \int_{D_{t_i}}^{D_{t_i+}} \lambda(t_i)x f(x)dx  $ is given by
\begin{equation}\label{eq:cost_model_vol_const_deterministeP}
C^P ({\bm \xi}, {\bm t}) =\sum^{N}_{n=0}\lambda(t_{n})\int^{D_{t_{n}+}}_{D_{t_{n}}}xf(x)dx=\sum^{N}_{i=0}\frac{\xi_{i}}{2}\left(\frac{\xi_{i}}{\lambda(t_{i})}+2\sum_{j<i}e^{-\int^{t_{i}}_{t_{j}}\rho_{s}ds}\frac{\xi_{j}}{\lambda(t_{j})}\right).
\end{equation}
This is a quadratic form: $C^P ({\bm \xi}, {\bm t})=\frac{1}{2}{\bm \xi}^T M^P
{\bm \xi}$, with $M^P_{i,j} =
\frac{\exp\left(-\left|\int^{t_i}_{t_j}\rho_sds\right|\right)}{\lambda(t_i \wedge t_j)} $ for $0\le i, j
\le N$. When $f\equiv 1$, we get from~\eqref{dynamic_modelsVP} that model~$P$
is equivalent to model~$V$ with a
resilience~$\tilde{\rho}_t=\rho_t-\eta_t$. Another way to see that both models
are mathematically the same in the block-shape case is to reverse
the time and consider:
\begin{equation*}
\forall t\in[0,T], \ \hat{\rho}_t=\rho_{T-t},\;\hat{\lambda}(t)=\lambda(T-t)\text{ and }\hat{t}_{N-i}=T-t_i,\text{ for }0\leq i\leq N.
\end{equation*}
Then, we have
\begin{equation}
M^P_{i,j} = \frac{e^{-|\int^{t_i}_{t_j}\rho_sds|}}{\lambda(t_i\land t_j)} =\frac{e^{-|\int^{\hat{t}_{N-j}}_{\hat{t}_{N-i}}\hat{\rho}_sds|}}{\hat{\lambda}(\hat{t}_{N-i}\lor \hat{t}_{N-j})},
\end{equation}
and the optimal execution problem in Model~$P$ with resilience~$\rho$, LOB
depth~$\lambda(t)$ and time-grid~${\bm t}$ is the same as the optimal
execution problem in Model~$V$ with resilience $\hat{\rho}$, LOB
depth~$\hat{\lambda}(t)$ and time-grid~$\hat{\bm t}$. We immediately get the
following results.

\begin{theorem}
\label{thm:m_p_cte}
The quadratic form~\eqref{eq:cost_model_vol_const_deterministeP} is positive
definite if and only if 
\begin{equation}
\label{eq:condit_uniq_sol_price_mod}  
a_{i}\hat{a}_i<1,\forall i\in\left\{1,\ldots ,N\right\}
\end{equation}  
In this case, the optimal execution problem to liquidate $\vi$ shares on the
time-grid~\eqref{det_time_grid} admits a unique optimal
strategy ${\bm \xi}^\star$ which is  deterministic
and explicitly given by:
\begin{equation}
\label{opt_sol_price_mod_const}
\begin{cases}
\mathbf{\xi}^{\star}_{0}=-\frac{\vi}{K_P} \lambda(t_{0})  \frac{1-\hat{a}_{1}}{1-a_1\hat{a}_1}. \\
\mathbf{\xi}^{\star}_{i}=-\frac{\vi}{K_P}\lambda(t_i)\left[
\frac{a_i}{1-a_i \hat{a}_i}(\hat{a}_i-1)+\frac{1-\hat{a}_{i+1}}{1-a_{i+1}\hat{a}_{i+1}}\right],\ 1\le i \le N-1\\
%  \frac{(1-a_{i})}{1-a_i\hat{a}_i}-\frac{\hat{a}_{i+1}}{1-a_{i+1}\hat{a}_{i+1}}(1-a_{i+1})\right]
%, i=1,\ldots ,N-1 \\
\mathbf{\xi}^{\star}_{N}=-\frac{\vi}{K_P}\lambda(t_{N})\frac{1-a_{N}}{1-a_N\hat{a}_N}
\end{cases}
\end{equation}
where 
$$K_P=\frac{\lambda(t_N)(1-2a_N)+\lambda(t_{N-1})}{1-a_N\hat{a}_N}+\sum^{N-2}_{i=0}\lambda(t_i)\frac{(1-
  \hat{a}_{i+1})^2}{1-a_{i+1}
  \hat{a}_{i+1}}.$$
Its cost is given by $C^P ({\bm \xi}^\star, {\bm t})=\vi^2/(2K_P)$.
\end{theorem}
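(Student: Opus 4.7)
The plan is to deduce Theorem~\ref{thm:m_p_cte} directly from Theorem~\ref{thm:m_v_cte} via the time-reversal sketched just above the statement. Set $\hat{\rho}_t=\rho_{T-t}$, $\hat{\lambda}(t)=\lambda(T-t)$, and $\hat{t}_{N-i}=T-t_i$ for $0\le i\le N$. The identity displayed in Section~\ref{sec_model_P_bs} shows that $M^P_{i,j}=\hat{M}^V_{N-i,N-j}$, where $\hat{M}^V$ denotes the matrix $M^V$ built from $(\hat{\rho},\hat{\lambda},\hat{\bm t})$. Equivalently, the reindexing $\hat{\xi}_j=\xi_{N-j}$ turns $C^P({\bm \xi},{\bm t})$ into $C^V(\hat{\bm\xi},\hat{\bm t})$: the two quadratic forms are orthogonally conjugate, so positive definiteness of $M^P$ is equivalent to positive definiteness of $\hat{M}^V$.

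Next I translate the coefficients in~\eqref{notations_a}. A substitution $u\mapsto T-u$ gives, for the hat-system, $a_j^{\mathrm{hat}}=a_{N-j+1}$; since $\hat{\eta}_t=-\eta_{T-t}$, the roles of $\tilde{a}$ and $\hat{a}$ exchange, so $\tilde{a}_j^{\mathrm{hat}}=\hat{a}_{N-j+1}$ and $\hat{a}_j^{\mathrm{hat}}=\tilde{a}_{N-j+1}$. Hence condition~\eqref{eq:condit_uniq_sol_vol_mod} applied to the hat-system reads $a_{N-j+1}\hat{a}_{N-j+1}<1$ for $1\le j\le N$, which is exactly~\eqref{eq:condit_uniq_sol_price_mod}. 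Under this condition, Theorem~\ref{thm:m_v_cte} produces a unique minimizer $\hat{\bm\xi}^\star$ in the hat-system, and therefore a unique minimizer $\xi^\star_i:=\hat{\xi}^\star_{N-i}$ for the original problem.

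Finally, substituting $a_j^{\mathrm{hat}}$, $\tilde{a}_j^{\mathrm{hat}}$, and $\hat{\lambda}(\hat{t}_j)=\lambda(t_{N-j})$ into~\eqref{opt_sol_vol_mod_const} and reindexing $i\mapsto N-i$ yields~\eqref{opt_sol_price_mod_const}: the trade at $t_N$ corresponds to the initial trade of the hat-system, the trade at $t_0$ to the final one, and the interior trades are recovered by shifting indices by one. The same substitution $k=N-i$ in the sum defining $K_V$ gives $\hat{K}_V=K_P$, whence $C^P({\bm \xi}^\star,{\bm t})=C^V(\hat{\bm\xi}^\star,\hat{\bm t})=\vi^2/(2K_P)$. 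The only real obstacle is bookkeeping: carefully tracking the index reversal and the sign flip $\hat{\eta}=-\eta$ that swaps $\tilde{a}$ and $\hat{a}$ under time reversal. No new analytic ingredient beyond Theorem~\ref{thm:m_v_cte} is required.
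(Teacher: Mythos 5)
Your proposal is correct and follows exactly the route the paper itself takes: the paper establishes the time-reversal identity $M^P_{i,j}=\hat{M}^V_{N-i,N-j}$ and then states that Theorem~\ref{thm:m_p_cte} follows ``immediately'' from Theorem~\ref{thm:m_v_cte}. You have simply filled in the index bookkeeping (including the correct swap of $\tilde{a}$ and $\hat{a}$ coming from $\hat{\eta}_t=-\eta_{T-t}$) that the paper leaves implicit, and all the substitutions check out.
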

By taking a regular time-grid
$t_i=\frac{iT}{N},i=0\ldots ,N$, and letting $N\rightarrow +\infty$, we get
the optimal strategy in continuous time:
\begin{equation}
\begin{cases}
  {\xi^{\star}_0}\underset{N \rightarrow \infty}\longrightarrow \xi^{\star}_0:= -\frac{\vi}{\lambda(0)+\int^T_{0}\frac{\rho^2_s\lambda(s)}{2\rho_s-\eta_s}ds}\lambda(0) \frac{\rho_0-\eta_0}{2\rho_0-\eta_0} \\
  \frac{\xi^{\star}_{i{_N}}}{T/N}\underset{N \rightarrow
    \infty}\longrightarrow \xi^{\star}_t:=
-\frac{\vi}{\lambda(0)+\int^T_{0}\frac{\rho^2_s\lambda(s)}{2\rho_s-\eta_s}ds}\lambda(t)\left[\left(\frac{\rho_t-\eta_t}{2\rho_t-\eta_t}\right)'+\rho_t\left(\frac{\rho_t-\eta_t}{2\rho_t-\eta_t}\right)\right]\text{, for } i_N\text{ such that }\frac{Ti_N}{N}\underset{\Delta t\rightarrow 0}\longrightarrow t \\
  \xi^{\star}_N\underset{N \rightarrow \infty}\longrightarrow
  \xi^{\star}_T:=-\frac{\vi}{\lambda(0)+\int^T_{0}\frac{\rho^2_s\lambda(s)}{2\rho_s-\eta_s}ds}\lambda(T)
  \frac{\rho_T}{2\rho_T-\eta_T }.
\end{cases}
\end{equation}
The strategy with initial trade~${\xi^{\star}_0}$, continuous trading
${\xi^{\star}_t}$ on $[t,t+dt]$ for $t\in(0,T)$ and last trade
${\xi^{\star}_T}$ is shown to be optimal in Fruth, Schöneborn and Urusov~\cite{FSU} among the continuous time
strategies with bounded variation, and  has the following cost:
$$\frac{\vi^2}{2\left[\lambda\left(0\right)+\int^T_{0}\frac{\rho^2_s\lambda(s)}{2\rho_s-\eta_s}ds
  \right]}.$$

\begin{corollary}\label{cor_hs_p_bs}
In a block-shaped LOB,  model~$P$ does not
admit price manipulation in the sense of Huberman and Stanzl if and only if 
 \begin{equation}
  \label{eq:asympt_limit_condit_price_mod}
\forall t\ge 0, \   2\rho_t - \eta_t \ge 0.
\end{equation}
It does not
admit transaction-triggered price manipulation  if and only if 
\begin{eqnarray}
&&\forall t\ge 0, \  \rho_t-\eta_t \ge 0,\ \text{and }
\left(\frac{\rho_t-\eta_t}{2\rho_t-\eta_t}\right)'+\rho_t\left(\frac{\rho_t-\eta_t}{2\rho_t-\eta_t}\right)\ge 0.\label{cond_ttpms_p_bs}
\end{eqnarray}
\end{corollary}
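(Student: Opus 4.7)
The plan is to deduce Corollary~\ref{cor_hs_p_bs} from Corollaries~\ref{cor_hs_v_bs} and~\ref{TTPM_vol_cte} via the time-reversal equivalence between models $V$ and $P$ in the block-shape case that is established just above Theorem~\ref{thm:m_p_cte}. Recall that under the substitution $\hat{\rho}_t := \rho_{T-t}$, $\hat{\lambda}(t) := \lambda(T-t)$ and $\hat{t}_{N-i} := T - t_i$, the cost matrix $M^P$ in model~$P$ with $(\rho,\lambda)$ on ${\bm t}$ coincides with $M^V$ in model~$V$ with $(\hat{\rho},\hat{\lambda})$ on $\hat{\bm t}$. Consequently $({\bm \xi},{\bm t}) \mapsto (\xi_N,\ldots,\xi_0;\hat{\bm t})$ is a bijection between admissible strategies that preserves the total signed volume $\sum_i \xi_i$ and the monotonicity type while conserving the cost, so model~$P$ admits a PMS (resp.\ a TTPM) if and only if model~$V$ with $(\hat{\rho},\hat{\lambda})$ does.

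For the first assertion I would compute $\hat{\eta}_t := \hat{\lambda}'(t)/\hat{\lambda}(t) = -\eta_{T-t}$ and apply Corollary~\ref{cor_hs_v_bs} to $(\hat{\rho},\hat{\lambda})$: the absence of PMS is equivalent to $2\hat{\rho}_t + \hat{\eta}_t = 2\rho_{T-t} - \eta_{T-t} \geq 0$ for all $t \in [0,T]$. Since $T$ is arbitrary, this is equivalent to~\eqref{eq:asympt_limit_condit_price_mod}. Alternatively, one obtains the same counter-example (buy $x$ at $t_1$, sell $x$ at a later $t_2$ close to $t_1$) directly in model~$P$ to confirm necessity.

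For the TTPM assertion I would apply Corollary~\ref{TTPM_vol_cte} to $(\hat{\rho},\hat{\lambda})$. The first inequality $\hat{\rho}_t + \hat{\eta}_t \geq 0$ immediately becomes $\rho_s - \eta_s \geq 0$ for all $s \geq 0$. For the second inequality, the key algebraic step is
$$\frac{\hat{\rho}_t}{2\hat{\rho}_t + \hat{\eta}_t} \;=\; \frac{\rho_{T-t}}{2\rho_{T-t} - \eta_{T-t}} \;=\; 1 - \frac{\rho_{T-t} - \eta_{T-t}}{2\rho_{T-t} - \eta_{T-t}},$$
so that differentiation in $t$ gives
$$\left(\frac{\hat{\rho}_t}{2\hat{\rho}_t + \hat{\eta}_t}\right)' \;=\; \left.\left(\frac{\rho_s - \eta_s}{2\rho_s - \eta_s}\right)'\right|_{s = T-t},$$
the two chain-rule minus signs cancelling. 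Combined with $\hat{\rho}_t(\hat{\rho}_t+\hat{\eta}_t)/(2\hat{\rho}_t+\hat{\eta}_t) = \rho_{T-t}(\rho_{T-t}-\eta_{T-t})/(2\rho_{T-t}-\eta_{T-t})$, the second inequality in Corollary~\ref{TTPM_vol_cte} translates exactly into the second inequality of~\eqref{cond_ttpms_p_bs}, yielding the claim after letting $T$ vary.

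The only real obstacle is the bookkeeping of the two minus signs in the derivative identity above; once the time-reversal bijection is set up, everything else is a direct transcription of the already-proved model-$V$ corollaries.
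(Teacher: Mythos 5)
Your proposal is correct and follows essentially the same route as the paper: Section~\ref{sec_model_P_bs} establishes exactly the time-reversal identification $M^P_{i,j}=\hat{M}^V_{N-i,N-j}$ with $\hat{\rho}_t=\rho_{T-t}$, $\hat{\lambda}(t)=\lambda(T-t)$ (hence $\hat{\eta}_t=-\eta_{T-t}$), and the paper's statement that the results then follow ``immediately'' is precisely your transfer of Corollaries~\ref{cor_hs_v_bs} and~\ref{TTPM_vol_cte}, with the sign bookkeeping in the derivative term worked out correctly.
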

The first condition in~\eqref{cond_ttpms_p_bs} comes from the initial trade
while the second comes from intermediate trades. From market makers' point of
view, \eqref{eq:asympt_limit_condit_price_mod} and the first condition
in~\eqref{cond_ttpms_p_bs} give different conclusions from model~$V$. A
significant market maker will not create manipulation strategy if he puts
orders at a better price (which increases $\rho$) or cancels orders at existing
prices (which decreases $\eta$). Instead, he may create manipulation
strategies if he piles orders at existing prices, or if he cancels orders that
are among the best offers. The second condition of~\eqref{cond_ttpms_p_bs}
brings on the dynamics of~$\eta$ and~$\rho$ and it is more difficult to give its
heuristic meaning in terms of trading. Last, let us mention that the optimal
strategy in discrete time given by Theorem~\ref{thm:m_p_cte} is made only with trades
of same sign if, and only if, one has~\eqref{eq:condit_uniq_sol_price_mod} and  
\begin{equation}
  \label{eq:condit_t_t_manip_m_p}
\frac{1-\hat{a}_{i+1}}{1-a_{i+1}\hat{a}_{i+1}} \ge a_i \frac{1-\hat{a}_i}{1-a_i\hat{a}_i}
,\;\;\forall i\in\left\{1,\ldots ,N-1\right\}  \text{ and }\hat{a}_1<1.
\end{equation}

\subsection{Results for general LOB shape}
\label{results_general_case}

We extend in this section the results obtained on the optimal execution for
block-shaped LOB to more general shapes. In particular, the necessary and
sufficient conditions that we have obtained to exclude TTPM
(namely~\eqref{cond_ttpms_v_bs} for model~$V$ and~\eqref{cond_ttpms_p_bs} for
model~$P$) are still sufficient conditions to exclude TTPM for a wider class
of shape functions.  From a mathematical point of view, the approach is the
same. We first characterize the optimal strategy on a discrete time grid, by
using Lagrange multipliers. Then, one can guess the optimal continuous time
strategy, and we prove its optimality by a verification argument.

%% In this section, we consider that the distribution of orders in the limit order book can be represented by the product of two functions $\lambda(t)f(\xi)$ where $f:\mathbb{R}\mapsto[0,\infty)$ and $\lambda:[0,T]\mapsto(0,\infty)$ are continuous functions.

%% This time-dependent framework generalizes the shape of the limit order book studied in \cite{alfonsi-schied-slynko-2011}. 

%% We show under certain hypothesis the existence of an optimal strategy for the large trader in the case of the volume and the price impact model.

%% Unlike the case of a constant shape function for the limit order book, there is no relationship between the two models and therefore,  we describe in detail each of them.

%% We recall that given a sequence of trading times $\mathbf{t}=\left(t_0,\ldots ,l_N\right)$, we have to find
%% the strategy $\mathbf{\xi}=\left(\xi_0,\ldots ,\xi_N\right)$ that minimizes the average cost $\mathbf{C_{\mathbf{t}}}(\xi)$ determined by the
%% equations $\eqref{eq:cost_one_trade}$ and $\eqref{eq:average_cost}$.

\subsubsection{Volume impact reversion model}
\label{results_general_case_vol}

We first introduce the  following assumption that will be useful to study the
optimal discrete strategy.

\begin{assumption}
  \label{assumption_mod_vol}
%In the framework of the volume impact reversion model, we assume that:
\begin{enumerate}
%% \item The limit order book has infinite depth, that is, the function F is unbounded in the sense that
%%   \begin{equation*}
%%   \lim_{x\uparrow\infty}F(x)=\infty\text{ and }\lim_{x\downarrow -\infty}F(x)=-\infty
%%   \end{equation*}
\item The shape function $f$ satisfies the following condition:
  \begin{equation*}
   f \text{ is nondecreasing on }\mathbb{R_{-}}\text{ and nonincreasing on }\mathbb{R_{+}}
  \end{equation*}
\item $\forall t\ge 0, \rho_t+\eta_t \ge 0$. 
%  The quantity $\tilde{a}_{i}:=\frac{\lambda(t_{i-1})}{\lambda(t_{i})}a_{i}\leq 1$ for all $i\in\left\{1,\ldots ,N\right\}$
\end{enumerate}  
\end{assumption}
We remark that when the LOB shape does not evolve in time ($\eta_t=0$), the second
condition is satisfied  and we get back the assumption made
in Alfonsi, Fruth and Schied~\cite{AFS}. We define
\begin{equation}\label{def_hvi}
  x\in \R, \ 
h_{V,i}(x)=\frac{F^{-1}(x)-a_{i}F^{-1}\left(\tilde{a}_{i}x\right)}{1-a_i},
1\le i\le N. 
\end{equation}

\begin{theorem}
\label{thm_vol_mod}
Under Assumption~\ref{assumption_mod_vol}, the cost function $C^V ({\bm \xi},
{\bm t})$ is nonnegative, % =\sum^{N}_{n=0}\lambda(t_{n})\int^{D_{t_{n}+}}_{D_{t_{n}}}xf(x)dx \ge0$.
and there is a unique optimal
execution strategy ${\bm \xi}^\star$ that minimizes~$C^V$ over $\{ {\bm
  \xi}\in \R^{N+1}, \sum_{i=0}^N \xi_i=-\vi \}$. This strategy is given as
follows. The following equation
$$\sum_{i=1}^N\lambda(t_{i-1})(1-a_i)h^{-1}_{V,i}(\nu)
+\lambda(t_N)F(\nu)=-\vi $$
has a unique solution $\nu \in \R$, and 
\begin{eqnarray*}
\xi^\star_{0}&=&\lambda(t_{0})h^{-1}_{V,1}\left(\nu\right),  \\
\xi^\star_{i}&=&\lambda(t_{i})(h^{-1}_{V,i+1}\left(\nu\right)-\tilde{a}_{i}h^{-1}_{V,i}\left(\nu\right)),\
1\le i\le N-1, \\
\xi^\star_{N}&=&\lambda(t_{N})F(\nu)-\lambda(t_{N-1})a_{N}h^{-1}_{V,N}\left(\nu\right).
\end{eqnarray*}
The first and the last trade have the same sign as~$-\vi$. Besides, if the
following condition holds
\begin{equation}
  \label{eq:condit_manip_price_vol_mod}
\frac{1}{\tilde{a}_{i}}\frac{1-\tilde{a}_{i}}{1-a_{i}}\geq\frac{1-\tilde{a}_{i+1}}{1-a_{i+1}}    ,
\end{equation}
the intermediate trades $\xi^\star_{i},\ 1\le i\le N-1,$ have also the same sign as~$-\vi$.
\end{theorem}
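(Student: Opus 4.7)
The plan is to reparametrize the cost by $y_i := F(D_{t_i+}) = E_{t_i+}/\lambda(t_i)$, $0\le i\le N$. The volume-impact recursion $E_{t_i} = a_i E_{t_{i-1}+}$ combined with $E_t = \lambda(t) F(D_t)$ yields $F(D_{t_i}) = \tilde{a}_i y_{i-1}$, so the $i$-th trade contributes $\lambda(t_i)[G(y_i) - G(\tilde{a}_i y_{i-1})]$ with $G = \tilde{F}\circ F^{-1}$, and regrouping by $y_i$ gives
\[
  C^V({\bm \xi},{\bm t}) = \sum_{i=0}^N \phi_i(y_i), \qquad \phi_i(y) = \lambda(t_i) G(y) - \lambda(t_{i+1}) G(\tilde{a}_{i+1} y)\ (i<N), \quad \phi_N(y) = \lambda(t_N) G(y).
\]
Since $G' = F^{-1}$, one checks $\phi_i(0) = \phi_i'(0) = 0$ and
\[
  \phi_i''(y) = \frac{\lambda(t_i)}{f(F^{-1}(y))\, f(F^{-1}(\tilde{a}_{i+1} y))}\bigl[f(F^{-1}(\tilde{a}_{i+1}y)) - a_{i+1}\tilde{a}_{i+1}\, f(F^{-1}(y))\bigr].
\]
Assumption~\ref{assumption_mod_vol} provides $\tilde{a}_{i+1}\in(0,1]$, placing $F^{-1}(\tilde{a}_{i+1}y)$ between $0$ and $F^{-1}(y)$, so the monotonicity of $f$ yields $f(F^{-1}(\tilde{a}_{i+1}y))\ge f(F^{-1}(y))$; combined with $a_{i+1}\tilde{a}_{i+1}<1$ this gives $\phi_i''>0$. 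Each $\phi_i$ is therefore strictly convex with unique minimum $0$ at $0$, so $C^V\ge 0$.

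The map ${\bm \xi}\leftrightarrow(y_0,\dots,y_N)$ is a triangular linear bijection, and the constraint $\sum_i\xi_i=-\vi$ rewrites as $\sum_{i=0}^{N-1}\lambda(t_i)(1-a_{i+1})y_i + \lambda(t_N)y_N = -\vi$. Strict convexity and coercivity of $\sum\phi_i$ on this affine hyperplane produce a unique minimizer, whose Lagrange stationarity conditions $\phi_i'(y_i) = \nu\lambda(t_i)(1-a_{i+1})$ for $i<N$ and $\phi_N'(y_N) = \nu\lambda(t_N)$ simplify, through the identity $\phi_i'(y) = \lambda(t_i)(1-a_{i+1})h_{V,i+1}(y)$, to $y_i = h_{V,i+1}^{-1}(\nu)$ and $y_N = F(\nu)$. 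Strict monotonicity of each $h_{V,i}$ is inherited from strict convexity of $\phi_{i-1}$, and the bound $h_{V,i}(y)\ge F^{-1}(\tilde{a}_i y)$ for $y>0$ combined with~\eqref{infinite_d_ordres} shows that $h_{V,i}$ is a bijection $\R\to\R$. Reinserting into $\xi_i=\lambda(t_i)y_i - a_i\lambda(t_{i-1})y_{i-1}$ (using $a_i\lambda(t_{i-1})=\tilde{a}_i\lambda(t_i)$) yields the announced formulas, and the constraint collapses to the stated scalar equation for $\nu$, whose left-hand side is strictly increasing and surjective onto $\R$, hence has a unique root.

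The sign analysis is driven by the fact that this scalar equation vanishes at $\nu=0$, so $\nu$ carries the sign of $-\vi$. Consequently $\xi_0^\star = \lambda(t_0)h_{V,1}^{-1}(\nu)$ inherits this sign, since $h_{V,1}^{-1}$ is increasing with $h_{V,1}^{-1}(0)=0$. For the last trade, setting $z := h_{V,N}^{-1}(\nu)$ in the defining identity $(1-a_N)\nu = F^{-1}(z) - a_N F^{-1}(\tilde{a}_N z)$ gives $(1-a_N)[\nu - F^{-1}(\tilde{a}_N z)] = F^{-1}(z) - F^{-1}(\tilde{a}_N z)$; since $\tilde{a}_N\le 1$ and $F^{-1}$ is increasing, the right-hand side has the sign of $z$, hence of $\nu$, so $\xi_N^\star = \lambda(t_N)[F(\nu) - \tilde{a}_N z]$ has the sign of $-\vi$.

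For the intermediate trades $\xi_i^\star = \lambda(t_i)[h_{V,i+1}^{-1}(\nu) - \tilde{a}_i h_{V,i}^{-1}(\nu)]$, the sign claim reduces, after setting $z:=h_{V,i}^{-1}(\nu)$, to $h_{V,i+1}(\tilde{a}_i z)\le h_{V,i}(z)$ for $z\ge 0$ (and the reverse for $z\le 0$). Expanding through the definitions, this becomes a pointwise comparison among $F^{-1}(z)$, $F^{-1}(\tilde{a}_i z)$ and $F^{-1}(\tilde{a}_{i+1}\tilde{a}_i z)$; since Assumption~\ref{assumption_mod_vol} makes $F$ concave on $\R_+$ and convex on $\R_-$, the inequality $F^{-1}(\tilde{a} y)\le \tilde{a} F^{-1}(y)$ (valid for $y\ge 0$, $\tilde{a}\in[0,1]$) controls the relative sizes of those three quantities. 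The plan is to use these convexity-based bounds to reduce the comparison to an algebraic inequality in $(a_i,a_{i+1},\tilde{a}_i,\tilde{a}_{i+1})$, for which~\eqref{eq:condit_manip_price_vol_mod} is precisely the required scaling condition. Carrying out this reduction, i.e.\ translating~\eqref{eq:condit_manip_price_vol_mod} into the pointwise inequality for $h_{V,\cdot}$ using only the monotonicity of $f$ and no further regularity, is the main technical obstacle of the proof.
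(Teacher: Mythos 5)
Your reduction to the separable convex problem $\min\sum_i\phi_i(y_i)$ in the variables $y_i=E_{t_i+}/\lambda(t_i)$ is sound and is essentially the paper's own decomposition~\eqref{bounded_cost_vol}; your observation that each $\phi_i$ is strictly convex with minimum $0$ at $0$ is a clean substitute for the paper's Lemma~\ref{lemma_3_vol} (it gives nonnegativity, coercivity \emph{and} uniqueness of the minimizer in one stroke). The first-order conditions, the scalar equation for $\nu$, the closed-form expressions for ${\bm\xi}^\star$, and the signs of $\xi^\star_0$ and $\xi^\star_N$ are all correctly established.

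The gap is the final claim. You correctly reduce the sign of the intermediate trades to the pointwise inequality $h_{V,i}(z)\ge h_{V,i+1}(\tilde a_i z)$ for $z\ge0$ (and its reverse on $\R_-$), but you do not prove it: you explicitly defer ``carrying out this reduction'' and call it the main technical obstacle. Moreover, the tool you propose, the chord inequality $F^{-1}(\tilde a y)\le\tilde a F^{-1}(y)$, is not the right one. After subtracting $F^{-1}(\tilde a_i z)$ from both sides, the inequality to prove is
$$\frac{F^{-1}(z)-F^{-1}(\tilde a_i z)}{1-a_i}\;\ge\;\frac{F^{-1}(\tilde a_i z)-F^{-1}(\tilde a_{i+1}\tilde a_i z)}{1-a_{i+1}},\qquad z\ge0,$$
a comparison of two adjacent \emph{increments} of $F^{-1}$, which a bound comparing a value to its scaled value does not control. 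What closes it is the monotonicity of $\hat f:=(F^{-1})'=1/f(F^{-1}(\cdot))$, nondecreasing on $\R_+$ under Assumption~\ref{assumption_mod_vol}: writing both increments as $\int\hat f$ over intervals sharing the endpoint $\tilde a_i z$, one gets the left side bounded below by $\frac{(1-\tilde a_i)z}{1-a_i}\hat f(\tilde a_i z)$ and the right side bounded above by $\frac{(1-\tilde a_{i+1})\tilde a_i z}{1-a_{i+1}}\hat f(\tilde a_i z)$, and the resulting algebraic condition is exactly~\eqref{eq:condit_manip_price_vol_mod}. (The paper's Lemma~\ref{lemma_1_vol} proceeds slightly differently, sandwiching $\nu$ between two auxiliary increasing functions $g_{i+1}$ and $\bar g_i$ and comparing their inverses, but the underlying ingredient is this same monotonicity of $\hat f$.) Without some such argument the last assertion of the theorem remains unproven.
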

This theorem extends the results of~\cite{AFS}, where $\lambda$ is assumed to
be constant. In that case, \eqref{eq:condit_manip_price_vol_mod}~is satisfied
and all the trades have the same
sign. Condition~\eqref{eq:condit_manip_price_vol_mod} is interesting since it
does not depend on the shape function, but it is more restrictive than the
condition~\eqref{eq:condit_t_t_manip_m_v} for the block-shape case (see
Lemma~\ref{lemma_implication} for
$\eqref{eq:condit_manip_price_vol_mod}\implies
\eqref{eq:condit_t_t_manip_m_v}$). In fact, the continuous time formulation is
more convenient to analyze the sign of the trades. Under Assumption~\ref{assumption_mod_vol}, we will show that no
transaction-triggered price manipulation can occur with the same
condition~\eqref{cond_ttpms_v_bs} as for the block-shape case. 

 %% This is basically due to the argument of
%% the proof that relies on a verification argument, while the proof that we give
%% here in the discrete case relies on th

When stating the optimal continuous-time strategy, we slightly relax
Assumption~\ref{assumption_mod_vol}.  This is basically due to the argument of
the proof that relies on a verification argument. Instead, our proof in the
discrete case relies on Lagrange multipliers which requires to show first
that the cost function has a minimum, and we use~$\rho_t+\eta_t \ge
0$ for that. We introduce the following function
\begin{equation}
  h_{V,t}(x)=F^{-1}(x)+\frac{\eta_t+\rho_t}{\rho_t} \frac{x}{f(F^{-1}(x))}.
\end{equation}
We will show that no PMS exists and that there is a unique optimal strategy if
these functions for $t\in [0,T]$ are bijective on~$\R$ with a positive
derivative. If Assumption~\ref{assumption_mod_vol} holds, this condition
is automatically satisfied.

\begin{theorem}\label{thm_vol_mod_cont}
 Let  $f\in \cC^1(\R)$. We assume that for $t\in[0,T]$, $h_{V,t}$ is
 bijective on~$\R$, such that
 $h_{V,t}'>0$. Then, the cost function $C^V (X)$ is nonnegative,
and there is a unique optimal
admissible strategy $X^\star$ that minimizes~$C^V$. This strategy is given as
follows. The equation
\begin{equation}\label{def_nu_vi_vol}
  \int_0^T\lambda(t)\rho_t h_{V,t}^{-1}(\nu)dt+\lambda(T)F(\nu)=-\vi
\end{equation}
has a unique solution $\nu \in \R$ and we
set~$\zeta_t=h_{V,t}^{-1}(\nu)$. The strategy $dX^\star_t=\xi^\star_{0} \delta_0(dt)+\xi^\star_{t}dt+\xi^\star_{T} \delta_T(dt)
$ with 
\begin{eqnarray*}
\xi^\star_{0}&=&\lambda(0) \zeta_0,  \\
\xi^\star_t&=&\lambda(t)\left[\frac{d \zeta_t}{dt}+(\rho_t+\eta_t)\zeta_t \right], \\
\xi^\star_{T}&=&\lambda(T)(F(\nu)-\zeta_T),
\end{eqnarray*}
is optimal. The initial trade~$\xi^\star_{0}$ has the same sign
as~$-\vi$. 
\end{theorem}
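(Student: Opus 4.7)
The plan is a Lagrangian/verification argument, in the spirit of the discrete case treated in Theorem~\ref{thm_vol_mod}. First I would show that \eqref{def_nu_vi_vol} has a unique solution by observing that $\Phi(\nu):=\int_0^T\lambda(t)\rho_t h_{V,t}^{-1}(\nu)\,dt+\lambda(T)F(\nu)$ is a continuous, strictly increasing bijection of $\R$ onto itself: strict monotonicity follows from $h_{V,t}'>0$ and $F'=f>0$; $\Phi(0)=0$ from $h_{V,t}(0)=0$ and $F(0)=0$; and the bijectivity of each $h_{V,t}$ on $\R$ together with~\eqref{infinite_d_ordres} give the correct limits at $\pm\infty$. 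The implicit function theorem, combined with $\rho,\eta,f\in\cC^1$, shows that $\zeta_t^\star:=h_{V,t}^{-1}(\nu)$ is $\cC^1$ in~$t$, so $X^\star$ is a well-defined BV process. A short integration by parts on $\int_0^T\xi_t^\star dt$ plus $\xi_0^\star$ and $\xi_T^\star$ gives that the total traded volume equals $\lambda(T)F(\nu)+\int_0^T\lambda\rho\zeta^\star dt=-\vi$ by~\eqref{def_nu_vi_vol}, so $X^\star$ is admissible.

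The core of the proof is a duality argument. I introduce the Lagrangian $L(X,\nu):=C^V(X)-\nu X_{T+}$, which coincides with $C^V(X)$ on admissible strategies. For an arbitrary BV strategy $X$ with $X_0=\vi$, set $\zeta_t=E_t/\lambda(t)$ (so $\zeta_0=0$); by~\eqref{dynamic_modelsVP} one has $dX_t^c=\lambda(t)d\zeta_t^c+\lambda(t)(\rho_t+\eta_t)\zeta_t\,dt$ and $\Delta X_t=\lambda(t)\Delta\zeta_t$. Plugging these identities into $C^V(X)$, introducing $g(y):=G(y)-\nu y$, and applying Stieltjes integration by parts against the $\cC^1$ function $\lambda$, the jump contributions of $g(\zeta)$ cancel exactly the explicit jump sum in $C^V$, leaving
\begin{equation*}
L(X,\nu)=-\nu\vi+\lambda(T)[G(\zeta_{T+})-\nu\zeta_{T+}]+\int_0^T\psi_t(\zeta_t)\,dt,
\end{equation*}
where $\psi_t(y):=\lambda(t)[(\eta_t+\rho_t)yF^{-1}(y)-\eta_t G(y)-\rho_t\nu y]$. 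A direct computation yields $\psi_t'(y)=\lambda(t)\rho_t[h_{V,t}(y)-\nu]$ and $\psi_t''(y)=\lambda(t)\rho_t h_{V,t}'(y)>0$, so $\psi_t$ is strictly convex with unique minimum at $y=\zeta_t^\star$; the boundary term is strictly convex in $\zeta_{T+}$ with minimum at $F(\nu)$. Hence $L(X,\nu)\ge L^*(\nu)$, with equality iff $\zeta_t\equiv\zeta_t^\star$ and $\zeta_{T+}=F(\nu)$, i.e.\ iff $X=X^\star$. Since $X^\star$ is admissible, $L^*(\nu)=C^V(X^\star)$, so $C^V(X)=L(X,\nu)\ge L^*(\nu)=C^V(X^\star)$ for every admissible $X$, proving existence and uniqueness of the optimizer.

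For nonnegativity I would invoke weak duality. The map $\nu\mapsto L^*(\nu)$ is concave as an infimum of affine functions; the envelope theorem gives $(L^*)'(\nu)=-\vi-\lambda(T)F(\nu)-\int_0^T\lambda\rho\zeta^\star dt$, which vanishes at the chosen $\nu$ by~\eqref{def_nu_vi_vol}, so $\nu$ is the global maximizer of $L^*$. At $\nu=0$ one has $\zeta_t^\star(0)=0$, $F(0)=G(0)=0$ and $\psi_t(0)=0$, hence $L^*(0)=0$; therefore $C^V(X^\star)=L^*(\nu)\ge L^*(0)=0$, and the verification inequality transfers this to every admissible $X$. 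Finally, $\xi_0^\star=\lambda(0)h_{V,0}^{-1}(\nu)$ has the sign of $\nu$ (since $h_{V,0}$ is strictly increasing and $h_{V,0}(0)=0$), which equals the sign of $-\vi$ by monotonicity of $\Phi$ and $\Phi(0)=0$. The main obstacle is the Stieltjes integration-by-parts step: one must carefully track continuous and jump parts of $\zeta$ at $0$, $T$, and intermediate times to see that every jump contribution cancels and that the clean representation of $L(X,\nu)$ above emerges; once this is in hand, everything reduces to strict convex minimization and one-parameter duality.
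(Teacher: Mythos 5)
Your argument is correct, and it reaches the conclusion by a genuinely different route than the paper. The paper runs a dynamic verification: it defines a time-dependent value function $C^V(t,T,E_t,X_t)$ via a \emph{state-dependent} multiplier $\nu_t$ solving \eqref{def_nu_cost}, forms the process ``cost incurred so far plus cost-to-go'', and shows $dC_t=\lambda(t)\psi_t(\zeta_t)\,dt\ge 0$ with $\psi_t$ vanishing exactly at $\zeta=E_t/\lambda(t)$, tracking $d\nu_t$ along the way. You instead fix the multiplier $\nu$ once and for all, substitute $\zeta_t=E_t/\lambda(t)$, and use the change of variables $dX_t=\lambda(t)\,d\zeta_t+\lambda(t)(\rho_t+\eta_t)\zeta_t\,dt$ together with integration by parts against the $\cC^1$ function $\lambda$ to reduce $L(X,\nu)=C^V(X)-\nu X_{T+}$ to a boundary term convex in $\zeta_{T+}$ plus $\int_0^T\psi_t(\zeta_t)\,dt$ with $\psi_t'=\lambda\rho_t(h_{V,t}-\nu)$ strictly increasing; this is essentially the same identity as the paper's \eqref{cout_IPP}, but exploited statically rather than dynamically. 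Your route buys a cleaner uniqueness argument (pointwise strict convexity, plus left-continuity of $\zeta$ to upgrade ``a.e.'' to ``everywhere'') and makes nonnegativity immediate by weak duality at $\nu=0$ --- indeed $C^V(X)=L(X,0)\ge L^*(0)=0$ directly, so the concavity/envelope step you include is not even needed; the paper's route instead exhibits the cost-to-go interpretation and is the form that transfers to the model-$P$ proof, where the analogous substitution is done in the $D$ variable. Two details worth writing out carefully in your version: the path $(\zeta_t)_{t\in(0,T]}$ and the terminal value $\zeta_{T+}$ are independent degrees of freedom once $X_0=\vi$ is imposed (so the unconstrained pointwise minimization is legitimate), and the realizability of the pointwise minimizer as an admissible strategy requires the $\cC^1$ regularity of $t\mapsto h_{V,t}^{-1}(\nu)$, which your implicit-function-theorem remark supplies.
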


Thus, a sufficient condition to exclude price manipulation strategies is to
assume that $h_{V,t}$ is bijective with $h_{V,t}'>0$. We have a partial
reciprocal result: there are PMS as soon as $h_{V,t_1}'(0)<0$ for some~$t_1\ge
0$. Indeed, in this case we consider the following round trip on the time grid
${\bf t}=(0,t_1,t_2)$ with $t_2>t_1$, where the large trader buys $x>0$ at
time~$t_1$ and sells $x$ at time~$t_2$. The cost of such a strategy is given by
\begin{eqnarray*}
  C^V ((0,x,-x),{\bm
    t})&=&\lambda(t_1)G\left(\frac{x}{\lambda(t_1)}\right)+\lambda(t_2)\left(G\left(\frac{x(e^{-\int_{t_1}^{t_2}
        \rho_s ds}-1)}{\lambda(t_2)}\right)-G\left(\frac{xe^{-\int_{t_1}^{t_2}
        \rho_s ds}}{\lambda(t_2)}\right)\right)\\
&=&\lambda(t_1)\left( -\eta_{t_1}G\left(\frac{x}{\lambda(t_1)}\right)+(\rho_{t_1}+\eta_{t_1})\frac{x}{\lambda(t_1)}F^{-1}\left(\frac{x}{\lambda(t_1)}\right) \right) (t_2-t_1) +o(t_2-t_1).\\
\end{eqnarray*}
The derivative of $x\mapsto -\eta_{t_1}G(x)+(\rho_{t_1}+\eta_{t_1})xF^{-1}(x)$
is $\rho_{t_1} h_{V,t_1}(x)$, which has the opposite sign of~$x$ near~$0$ since $
h_{V,t_1}(0)=0$ and $h_{V,t_1}(0)<0$ by assumption. Thus, we have $C^V
((0,x,-x),{\bf t})<0$ for $x$ and $t_2-t_1$ small enough.

Now, let us focus on the sign of the trades given by the optimal
strategy. Without further hypothesis, the condition $\xi^\star_t\ge 0$ typically
involves the shape function~$f$. However, under
Assumption~\ref{assumption_mod_vol}, we can show that
transaction-triggered strategy are excluded under the same assumption as for
the block-shape case.

\begin{corollary}\label{cor_vol_mod_cont}
Let $f\in \cC^1$. Under Assumption~\ref{assumption_mod_vol}, the function
$h_{V,t}$ is $\cC^1(\R)$, bijective on~$\R$, and such that $h_{V,t}'>0$. Thus, the
result of Theorem~\ref{thm_vol_mod_cont} holds and the last trade 
$\xi^\star_{T}$ has the same sign as $-\vi$.

 Besides, if~\eqref{cond_ttpms_v_bs} also holds, $\xi^\star_t$  has the
same sign as~$-\vi$ for any $0<t<T$, which excludes TTPM.
\end{corollary}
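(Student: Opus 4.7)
The plan is to verify the hypotheses of Theorem~\ref{thm_vol_mod_cont} under Assumption~\ref{assumption_mod_vol}, then apply the theorem, and finally analyze the signs of the last trade and of the continuous trading rate separately.

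For the first part, since $f\in\cC^1$ and $f>0$, $F^{-1}$ is $\cC^1$, so $h_{V,t}$ is $\cC^1$ on $\R$. Setting $\alpha_t=(\rho_t+\eta_t)/\rho_t$, a direct computation gives
$$ h_{V,t}'(x)=\frac{1+\alpha_t}{f(F^{-1}(x))}-\alpha_t\,\frac{x\,f'(F^{-1}(x))}{f(F^{-1}(x))^3}. $$
The shape hypothesis implies $f'(y)\le 0$ for $y\ge 0$ and $f'(y)\ge 0$ for $y\le 0$, while $F^{-1}(x)$ has the sign of~$x$; hence $x\,f'(F^{-1}(x))\le 0$. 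Combined with $\alpha_t\ge 0$ from the second part of Assumption~\ref{assumption_mod_vol}, both terms are nonnegative and the first is strictly positive, so $h_{V,t}'>0$. Strict monotonicity, the bounds $h_{V,t}(x)\ge F^{-1}(x)$ for $x\ge 0$ and $h_{V,t}(x)\le F^{-1}(x)$ for $x\le 0$, and~\eqref{infinite_d_ordres} show that $h_{V,t}$ is a $\cC^1$ bijection of~$\R$, so Theorem~\ref{thm_vol_mod_cont} applies.

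For the last trade, the left-hand side of~\eqref{def_nu_vi_vol} is a strictly increasing function of~$\nu$ vanishing at $\nu=0$ (since $F$ and $h_{V,t}^{-1}$ are strictly increasing with $F(0)=h_{V,t}(0)=0$), so $\nu$ has the sign of~$-\vi$. Setting $y=\zeta_T=h_{V,T}^{-1}(\nu)$, the defining relation yields $\nu-F^{-1}(y)=\alpha_T\,y/f(F^{-1}(y))$, whose right-hand side has the sign of $y$, hence of $\nu$. Applying the increasing map $F$ gives $F(\nu)-\zeta_T$ of the sign of $\nu$, so $\xi^\star_T=\lambda(T)(F(\nu)-\zeta_T)$ has the same sign as~$-\vi$.

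The delicate step is the sign of the continuous rate $\xi^\star_t=\lambda(t)[\zeta_t'+(\rho_t+\eta_t)\zeta_t]$. Setting $z_t=F^{-1}(\zeta_t)$, the relation $\nu=h_{V,t}(\zeta_t)$ reads $\nu f(z_t)=z_t f(z_t)+\alpha_t F(z_t)$. Differentiating in~$t$, using this identity to replace $(\nu-z_t)f'(z_t)$ by $\alpha_t F(z_t)f'(z_t)/f(z_t)$, and combining $\zeta_t'=f(z_t)z_t'$ with $(\rho_t+\eta_t)\zeta_t=\alpha_t\rho_t F(z_t)$, one obtains
$$ \frac{\xi^\star_t}{\lambda(t)}=F(z_t)\left[\alpha_t\rho_t+\frac{\alpha_t'\,f(z_t)^2}{\alpha_t F(z_t)f'(z_t)-(1+\alpha_t)f(z_t)^2}\right]=:F(z_t)\,B(t,z_t). $$
The main obstacle is showing $B(t,z_t)\ge 0$ under Assumption~\ref{assumption_mod_vol} and~\eqref{cond_ttpms_v_bs}. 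The shape hypothesis gives $F(z)f'(z)\le 0$, so the denominator in $B$ is at most $-(1+\alpha_t)f(z)^2<0$. When $\alpha_t'\le 0$ the fraction is nonnegative and $B\ge\alpha_t\rho_t\ge 0$; when $\alpha_t'>0$, the bound $|\alpha_t F(z)f'(z)-(1+\alpha_t)f(z)^2|\ge(1+\alpha_t)f(z)^2$ yields
$$ B(t,z)\ge\alpha_t\rho_t-\frac{\alpha_t'}{1+\alpha_t}=(1+\alpha_t)\left[\left(\frac{\rho_t}{2\rho_t+\eta_t}\right)'+\rho_t\,\frac{\rho_t+\eta_t}{2\rho_t+\eta_t}\right], $$
which is nonnegative by~\eqref{cond_ttpms_v_bs}. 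Since $F(z_t)$ shares the sign of $\zeta_t=h_{V,t}^{-1}(\nu)$, hence of $\nu$ and of $-\vi$, we conclude that $\xi^\star_t$ has the same sign as $-\vi$ on $(0,T)$. Combined with the corresponding properties of $\xi^\star_0$ and $\xi^\star_T$, this excludes transaction-triggered price manipulations.
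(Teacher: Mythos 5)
Your proof is correct and follows essentially the same route as the paper's: verify that $h_{V,t}'>0$ and that $h_{V,t}$ is bijective from the monotonicity of $f$ and $\rho_t+\eta_t\ge 0$, obtain the sign of $\xi^\star_T$ from $\sgn(x)h_{V,t}(x)\ge \sgn(x)F^{-1}(x)$, and differentiate $\nu=h_{V,t}(\zeta_t)$ in $t$ to write $\xi^\star_t$ as $\lambda(t)\zeta_t$ times a factor shown to be nonnegative using~\eqref{cond_ttpms_v_bs} and the sign of $f'$. The only cosmetic difference is in the last step, where the paper keeps $1/h_{V,t}'(\zeta_t)$ as an explicit positive prefactor and exhibits the remaining bracket as a sum of two individually nonnegative terms (one controlled by~\eqref{cond_ttpms_v_bs}, the other by $\zeta_t f'(F^{-1}(\zeta_t))\le 0$), which avoids your case distinction on the sign of $\bigl((\rho_t+\eta_t)/\rho_t\bigr)'$.
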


Let us now focus on the example of a power-law shape: we assume that
$$f(x)=|x|^\gamma, \ \gamma>-1. $$
In this case, $F(x)=\sgn(x)\frac{|x|^{\gamma+1}}{\gamma+1}$ is well-defined
and satisfies~\eqref{infinite_d_ordres}. We have
$F^{-1}(x)=\sgn(x)(\gamma+1)^{\frac{1}{\gamma+1}}|x|^{\frac{1}{\gamma+1}}$ and
$h_{V,t}(x)=\sgn(x)(\gamma+1)^{\frac{1}{\gamma+1}}|x|^{\frac{1}{\gamma+1}}\left(\frac{\rho_t(2+\gamma)+\eta_t}{\rho_t(1+\gamma)}
\right)$. Thus, $h_{V,t}$ is bijective and increasing
 if, and only if:
$$\rho_t(2+\gamma)+\eta_t>0. $$
In this case, we have
$$ h_{V,t}^{-1}(x)=\frac{1}{\gamma+1} K_t(\gamma) \sgn(x) |x|^{\gamma+1} \text{ with }
K_t(\gamma)= \left(\frac{\rho_t(1+\gamma)}{\rho_t(2+\gamma)+\eta_t} \right)^{1+\gamma}.$$
In this case, we have by Theorem~\ref{thm_vol_mod_cont} that
\begin{equation}\label{powerlaw_opt_strat}
\begin{cases}  \xi^\star_0=\frac{-\vi}{\int_0^T \lambda(t) \rho_t K_t(\gamma) dt
    +\lambda(T) }\lambda(0)K_0(\gamma), \\
  \xi^\star_t= \frac{-\vi}{\int_0^T \lambda(t) \rho_t K_t(\gamma) dt
    +\lambda(T) } \lambda(t) \left[\frac{dK_t(\gamma)}{dt} +(\rho_t+\eta_t)K_t(\gamma)\right]\\
  \xi^\star_T=\frac{-\vi}{\int_0^T \lambda(t) \rho_t K_t(\gamma) dt
    +\lambda(T) } \lambda(T)(1-K_T(\gamma))
\end{cases}
\end{equation}
is the unique optimal strategy. For $\gamma=0$, we get
back~\eqref{opt_sol_vol_mod_const_cont}. If we only assume that
$\rho_t(2+\gamma)+\eta_t \ge 0 $, we still have $C^V(X)\ge 0$ for any
admissible strategy~$X$. The cost~$C^V(X)$ is indeed continuous with respect
to the resilience, and is the limit of the cost associated to resilience
$\rho_t+\varepsilon$, $\varepsilon \downarrow 0$.
On the contrary, if
$\rho_t(2+\gamma)+\eta_t < 0 $, we have $h'_{V,t}(0)<0$  and there is a PMS as
explained above.
\begin{corollary}\label{cor_powerlaw_v}
When $f(x)=|x|^\gamma$,  model~$V$ does not
admit PMS if, and only if 
 \begin{equation*}
\forall t\ge 0, \  \rho_t(2+\gamma)+\eta_t \ge 0.
\end{equation*}
It does not
admit transaction-triggered price manipulation  if and only if 
\begin{eqnarray*}
&&\forall t\ge 0, \  \rho_t+\eta_t \ge 0,\ \text{and }
\left(\frac{\rho_t(1+\gamma)}{\rho_t(2+\gamma)+\eta_t} \right)'+\rho_t\left(\frac{\rho_t+\eta_t}{\rho_t(2+\gamma)+\eta_t }\right)\ge 0.
\end{eqnarray*}
\end{corollary}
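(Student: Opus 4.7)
The plan is to derive both equivalences directly from the explicit continuous-time optimizer~\eqref{powerlaw_opt_strat} given by Theorem~\ref{thm_vol_mod_cont}, combined with the round-trip construction sketched in the paragraph immediately preceding this corollary.

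For the PMS direction, the text has already computed $h_{V,t}(x)=\sgn(x)(\gamma+1)^{1/(\gamma+1)}|x|^{1/(\gamma+1)}\cdot\frac{\rho_t(2+\gamma)+\eta_t}{\rho_t(1+\gamma)}$, which is bijective with $h_{V,t}'>0$ precisely when $\rho_t(2+\gamma)+\eta_t>0$. Under strict inequality at every $t$, Theorem~\ref{thm_vol_mod_cont} yields $C^V(X)\ge 0$ for every admissible $X$, ruling out PMS; the limiting case $\rho_t(2+\gamma)+\eta_t\ge 0$ is obtained by the perturbation $\rho_t\to\rho_t+\varepsilon$ and continuity of $C^V$ in the resilience, as already noted right before the statement. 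Conversely, if $\rho_{t_1}(2+\gamma)+\eta_{t_1}<0$ somewhere, then $h_{V,t_1}'(0)<0$, and the two-trade round trip (buy $x>0$ at $t_1$, sell $x$ at $t_2$ slightly after) has negative cost by the expansion displayed before the corollary, hence is a PMS.

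For the TTPM part, since absence of TTPM forces absence of PMS, I may assume $\rho_t(2+\gamma)+\eta_t\ge 0$ and apply Theorem~\ref{thm_vol_mod_cont} to obtain the explicit optimizer~\eqref{powerlaw_opt_strat}. By uniqueness of the minimizer, TTPM is excluded exactly when all three trades there share the sign of $-\vi$ for every $\vi$. The common normalizing denominator is positive since $\lambda$, $\rho$ and $K$ are positive, so the initial trade automatically has the sign of $-\vi$. Setting $L_t:=\frac{\rho_t(1+\gamma)}{\rho_t(2+\gamma)+\eta_t}>0$ so that $K_t(\gamma)=L_t^{1+\gamma}$, the last trade has the sign of $-\vi\,(1-L_T^{1+\gamma})$, which matches that of $-\vi$ iff $L_T\le 1$, equivalently $\rho_T+\eta_T\ge 0$.

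The technical core is the algebraic identity
\[
\frac{d K_t(\gamma)}{dt}+(\rho_t+\eta_t)K_t(\gamma)=(1+\gamma)L_t^{\gamma}\left[L_t'+\rho_t\,\frac{\rho_t+\eta_t}{\rho_t(2+\gamma)+\eta_t}\right],
\]
obtained by differentiating $K_t(\gamma)=L_t^{1+\gamma}$. Since $(1+\gamma)L_t^{\gamma}>0$, the nonnegativity of the bracket governs the sign of the intermediate trade in~\eqref{powerlaw_opt_strat} and is equivalent to the second condition in the corollary. Imposing the pointwise condition $\rho_t+\eta_t\ge 0$ together with the bracket condition at every $t$ then makes the unique optimizer monotone, excluding TTPM; if either fails at some $t$, the unique optimizer carries trades of opposite signs, and uniqueness forces every monotone admissible strategy to have strictly higher cost, producing TTPM. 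The main obstacle is solely this algebraic rewriting; the remainder reduces to reading off signs from~\eqref{powerlaw_opt_strat} and invoking Theorem~\ref{thm_vol_mod_cont}.
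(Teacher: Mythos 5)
Your proposal is correct and follows essentially the same route as the paper, whose justification of this corollary is the discussion in Section 2.3.1: explicit computation of $h_{V,t}$ and $K_t(\gamma)$, application of Theorem~\ref{thm_vol_mod_cont} to get \eqref{powerlaw_opt_strat}, the $\rho_t+\varepsilon$ perturbation for the boundary case, the two-trade round trip for the converse PMS direction, and reading off the signs of the last and intermediate trades. Your explicit identity $\frac{dK_t(\gamma)}{dt}+(\rho_t+\eta_t)K_t(\gamma)=(1+\gamma)L_t^{\gamma}\left[L_t'+\rho_t\frac{\rho_t+\eta_t}{\rho_t(2+\gamma)+\eta_t}\right]$ is a correct computation that the paper leaves implicit.
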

\noindent These conditions comes respectively from the nonnegativity of the last and
intermediate trades. For given functions $\rho_t$ and $\eta_t$, the no PMS
condition will be satisfied for $t\in[0,T]$ when $\gamma$ is large
enough. This can be explained heuristically. When $\gamma$~increases, limit
orders become rare close to~$S^0_t$ and dense away from~$S^0_t$, which creates
some bid-ask spread. One has then to pay to get liquidity, and round trips
have a positive cost. Instead, when $\gamma$ is close to~$-1$ it is rather
cheap to consume limit orders, which may facilitate PMS. In
Figure~\ref{OEPowerlaw}, we have plotted the optimal strategy for
$\gamma=-0.3$ and $\gamma=1$ with the same parameters as in Figure~\ref{OEblockshape} for
the Block shape case. We can check that the no PMS condition is satisfied in
both cases.

\begin{figure}[htbp]
\begin{minipage}[c]{0.5\textwidth}
\centering 
\includegraphics[width=3.5in]{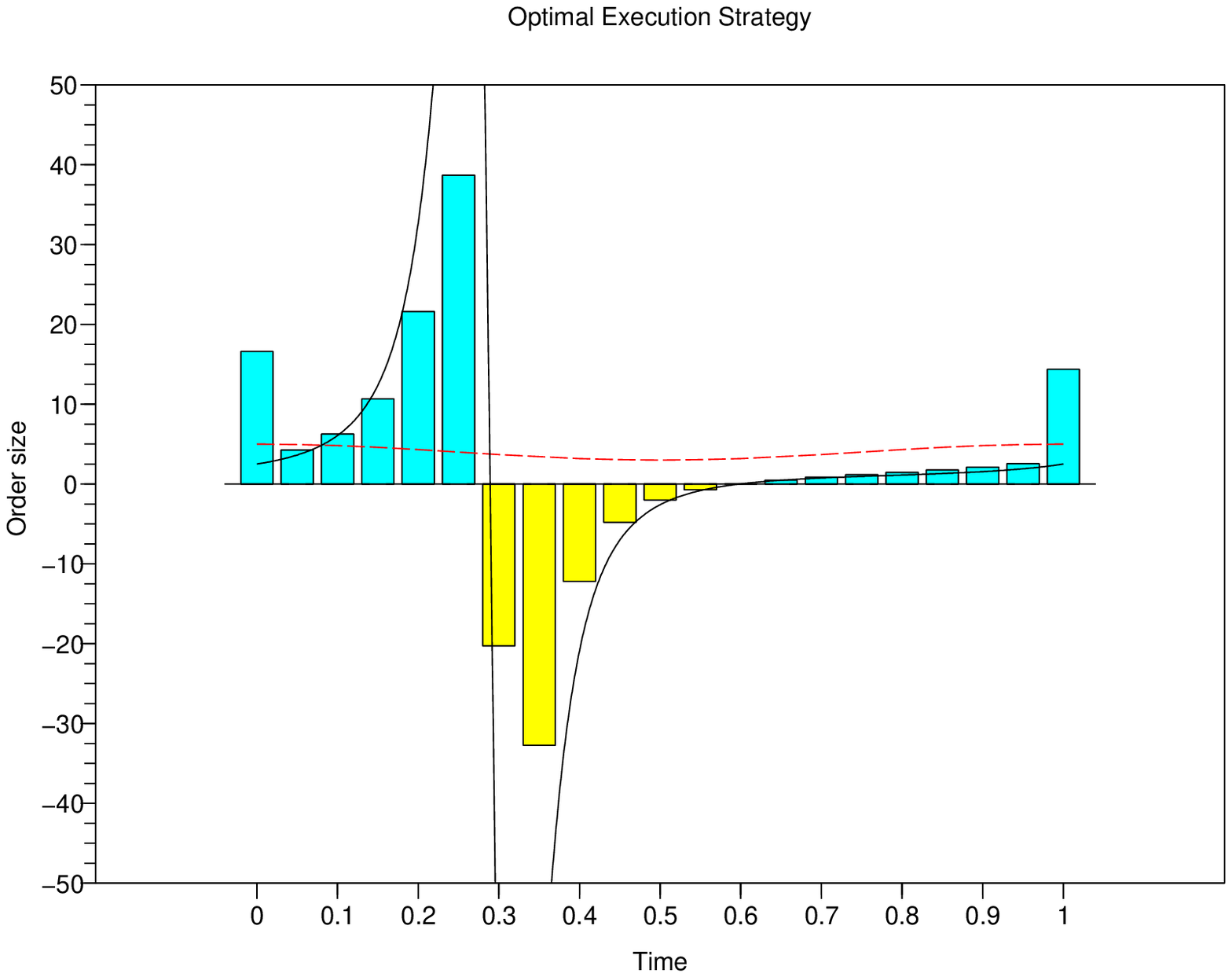}
\end{minipage}
\begin{minipage}[c]{0.5\textwidth}
\centering 
\includegraphics[width=3.5in]{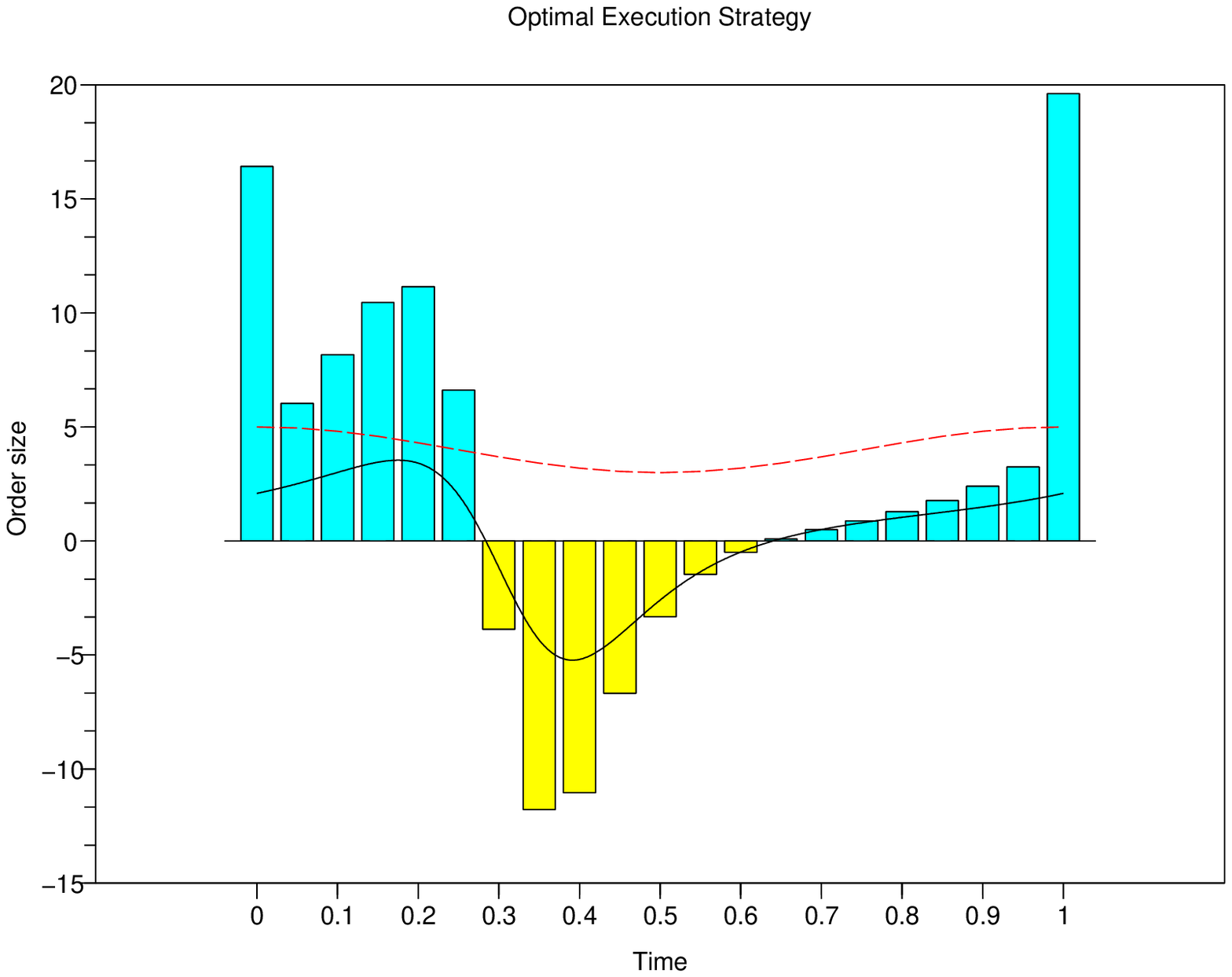}
\end{minipage}
 \caption{Optimal execution strategy to buy $50$ shares on a regular time
   grid, with $N=20$, $\rho=1$, $\lambda(t)=4+\cos(2\pi t)$ (plotted in dashed
   line) and $\gamma=-0.3$ (left) or $\gamma=1$ (right). In solid line is
   plotted the function $t\mapsto
   \left(\frac{\rho_t(1+\gamma)}{\rho_t(2+\gamma)+\eta_t}
   \right)'+\rho_t\left(\frac{\rho_t+\eta_t}{\rho_t(2+\gamma)+\eta_t }\right)
   $ (this function is well-defined but out of the graph for $\gamma=-0.3$).}
\label{OEPowerlaw}
\end{figure}

\subsubsection{Price impact reversion model}
\label{results_general_case_price}

The results that we present for model~$P$ are similar to the one obtained for
model~$V$. We first solve the optimal execution problem in discrete
time. From its explicit solution, we then calculate its continuous time limit
and check by a verification argument that it is indeed optimal. Doing so, we get
sufficient conditions to exclude PMS and TTPM. In particular, 
condition~\eqref{cond_ttpms_p_bs} that excludes PMS and TTPM for
block-shape LOB also excludes PMS and TTPM for a general LOB shape satisfying Assumption~\ref{assumption_mod_price} below.

To study the optimal discrete strategy, we will work under the following  assumption.
\begin{assumption}
  \label{assumption_mod_price}
\begin{enumerate}
\item The shape function $f$ is~$\cC^1$ and satisfies the following condition:
  \begin{equation*}
   f \text{ is nonincreasing on }\mathbb{R_{-}}\text{ and nondecreasing on }\mathbb{R_{+}}
  \end{equation*}
\item $\forall t\ge 0, \rho_t-\eta_t > 0$. 
\item $x\mapsto x\frac{f'(x)}{f(x)}$ is nondecreasing on $\mathbb{R}_{-}$,
  nonincreasing on $\mathbb{R}_{+}$.   
\end{enumerate}
\end{assumption}
The monotonicity assumption made here is the opposite to the one made in
Assumption~\ref{assumption_mod_vol} for model~$V$. This choice is different
from the one made in Alfonsi et al.~\cite{AFS,AS}. It is in fact more
tractable from a mathematical point of view, especially here with a
time-varying LOB.
\begin{theorem}
\label{thm_price_mod}
Under Assumption~\ref{assumption_mod_price}, the cost function $C^P ({\bm \xi},
{\bm t})$ is nonnegative, % =\sum^{N}_{n=0}\lambda(t_{n})\int^{D_{t_{n}+}}_{D_{t_{n}}}xf(x)dx \ge0$.
and there is a unique optimal
execution strategy ${\bm \xi}^\star$ that minimizes~$C^P$ over $\{ {\bm
  \xi}\in \R^{N+1}, \sum_{i=0}^N \xi_i=-\vi \}$. This strategy is given as
follows. The following equation
$$\sum_{i=1}^N\lambda(t_{i-1})\left[F\left(\frac{h^{-1}_{P,i}(\nu)}{a_{i}}\right)-\frac{\lambda(t_i)}{\lambda(t_{i-1})}F(h^{-1}_{P,i}(\nu))\right]
+\lambda(t_N)F(\nu)=-\vi $$
has a unique solution $\nu \in \R$, and 
\begin{eqnarray*}
\xi^\star_{0}&=&\lambda(t_{0})F\left(\frac{h^{-1}_{P,1}\left(\nu\right)}{a_1}\right),  \\
\xi^\star_{i}&=&\lambda(t_{i})\left[F\left(\frac{h^{-1}_{P,i+1}\left(\nu\right)}{a_{i+1}}\right)-F(h^{-1}_{P,i}(\nu)) \right]   ,\
1\le i\le N-1, \\
\xi^\star_{N}&=&\lambda(t_{N})[F(\nu)-F(h^{-1}_{P,N}\left(\nu\right))].
\end{eqnarray*}
The first and the last trade have the same sign as~$-\vi$.
\end{theorem}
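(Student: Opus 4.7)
I would mirror the strategy used for Theorem~\ref{thm_vol_mod}: derive the first-order conditions by Lagrange multipliers, identify the candidate optimum through a function $h_{P,i}$, then verify global optimality. Work with the variables $D_i:=D_{t_i}$ and $D_i^+:=D_{t_i+}$, tied in model~$P$ by $D_i=a_iD_{i-1}^+$ with $D_0=0$. The cost rewrites as $C^P({\bm \xi},{\bm t})=\sum_{i=0}^N\lambda(t_i)[\tilde{F}(D_i^+)-\tilde{F}(D_i)]$ and the liquidation constraint as $\sum_{i=0}^N\lambda(t_i)[F(D_i^+)-F(D_i)]=-\vi$. Differentiating the Lagrangian with multiplier~$\mu$ in the variables $D_i^+$ and using $\lambda(t_{i+1})a_{i+1}/\lambda(t_i)=\hat{a}_{i+1}$ yields the FOC
\[
f(D_i^+)(D_i^+ + \mu)=\hat{a}_{i+1}\,f(a_{i+1}D_i^+)(a_{i+1}D_i^+ + \mu),\qquad 0\le i\le N-1,
\]
together with the terminal relation $D_N^+=-\mu$. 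Setting $\nu:=-\mu$ and defining $h_{P,i+1}$ implicitly so that the FOC is equivalent to $h_{P,i+1}(a_{i+1}D_i^+)=\nu$ (i.e.\ $h_{P,i+1}$ maps the dynamical value $D_{i+1}$ to the Lagrange parameter), the candidate strategy is $D_i^\star=h_{P,i}^{-1}(\nu)$ for $1\le i\le N$ and $D_N^{\star,+}=\nu$, which unpacks directly into the displayed formulas for $\xi^\star$.

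\textbf{Bijectivity of $h_{P,i}$.} The most delicate step is to prove, under Assumption~\ref{assumption_mod_price}, that each $h_{P,i}:\R\to\R$ is a $\cC^1$ strictly increasing bijection with $h_{P,i}(0)=0$. Items~(1) and~(3) of the assumption are tailored for this: the monotonicity of $f$ on $\R_\pm$ controls the sign of the $f$-terms in the implicit FOC, while the monotonicity of $x\mapsto xf'(x)/f(x)$ enters when differentiating the FOC implicitly and forces $h_{P,i}'>0$; together they also make $h_{P,i}$ surjective. This is the counterpart of the bijectivity of $h_{V,i}$ used in Theorem~\ref{thm_vol_mod}, but because $h_{P,i}$ is only defined implicitly rather than by the closed form~\eqref{def_hvi}, the monotonicity argument is considerably more subtle and constitutes the main obstacle of the proof.

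\textbf{Uniqueness and global optimality.} Substituting $D_i^\star=h_{P,i}^{-1}(\nu)$ and $D_i^{\star,+}=h_{P,i+1}^{-1}(\nu)/a_{i+1}$ into the liquidation constraint and regrouping by index produces the displayed scalar equation in $\nu$. Its left-hand side vanishes at $\nu=0$; it has range~$\R$; and it is strictly increasing in $\nu$, as can be verified by differentiating and combining the FOC with the inequality $\hat{a}_i<1$ supplied by Assumption~\ref{assumption_mod_price}(2). Hence $\nu$ is uniquely determined and carries the sign of $-\vi$. To upgrade the unique critical point to a global minimizer, I would prove coercivity of $C^P$ on $\{\sum\xi_i=-\vi\}$, exploiting $a_i,\hat{a}_i<1$ and the strict convexity of $\tilde{F}$ that follows from Assumption~\ref{assumption_mod_price}(1), since $f>0$ and $xf'(x)\ge 0$ give $\tilde{F}''(x)=f(x)+xf'(x)>0$. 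Uniqueness of the critical point then identifies $\xi^\star$ as the unique minimizer, and nonnegativity of $C^P$ follows by evaluating at $\xi^\star$: an Abel summation against $\nu$ using the FOC collapses $C^P(\xi^\star,{\bm t})$ into a manifestly nonnegative scalar generalising the block-shape value $\vi^2/(2K_P)$.

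\textbf{Signs of endpoint trades.} Since $F$ and $h_{P,1}^{-1}$ are strictly increasing bijections vanishing at~$0$, $\xi^\star_0=\lambda(t_0)F(h_{P,1}^{-1}(\nu)/a_1)$ inherits the sign of $\nu$, hence of $-\vi$. For $\xi^\star_N=\lambda(t_N)[F(\nu)-F(h_{P,N}^{-1}(\nu))]$, I would read off from the algebraic form of $h_{P,N}$ obtained by solving the FOC, together with $a_N<1$, that $h_{P,N}^{-1}(\nu)$ shares the sign of $\nu$ but has strictly smaller absolute value, so that strict monotonicity of $F$ yields the claim.
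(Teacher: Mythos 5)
Your architecture is the same as the paper's (telescoping rewrite in the variables $D_i$, $D_i^+$, coercivity from convexity of $\tilde F$, Lagrange multipliers yielding $\nu=h_{P,i}(D_i)$, then a scalar equation in $\nu$), and your first-order condition is correct up to the sign convention on the multiplier. The genuine gap is that you leave the decisive technical step unproven while also mischaracterizing it: you assert that $h_{P,i}$ is ``only defined implicitly,'' declare its bijectivity and monotonicity ``considerably more subtle'' than in model~$V$, and defer this as ``the main obstacle'' without resolving it. In fact your FOC is \emph{affine} in the multiplier, so solving it gives the completely explicit formula
\begin{equation*}
h_{P,i}(x)=x\,\frac{\frac{1}{a_i}f\left(\frac{x}{a_i}\right)-\hat{a}_if(x)}{f\left(\frac{x}{a_i}\right)-\hat{a}_if(x)}
=x\left[1+\frac{a_i^{-1}}{1-\hat{a}_i\frac{f(x)}{f(x/a_i)}}\right],
\end{equation*}
which is exactly the object of the paper's Lemma~\ref{lemma_2_price_mod}. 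The denominator is positive because $f(x/a_i)\ge f(x)$ (monotonicity of $f$ on $\R_\pm$) and $\hat a_i<1$ (from $\rho_t-\eta_t>0$); this already gives $\sgn(x)h_{P,i}(x)\ge|x|$, hence surjectivity and the sign of $\xi^\star_N$. That $h_{P,i}$ is increasing reduces to showing $f(x)/f(x/a_i)$ is nondecreasing on $\R_+$ and nonincreasing on $\R_-$, which is precisely where Assumption~\ref{assumption_mod_price}(3) enters, via $g(x)\ge g(x/a_i)$ for $g(x)=xf'(x)/f(x)$. Without this lemma you have neither the unique solvability of the scalar equation in $\nu$, nor the uniqueness of the critical point, nor the endpoint sign claims; so the proof is incomplete at its core, even though the missing step becomes elementary once the closed form is recognized.

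Two secondary points. Your nonnegativity argument (``Abel summation at the optimizer collapses to a manifestly nonnegative scalar generalising $\vi^2/(2K_P)$'') is unsubstantiated and unlikely to yield a closed form for general~$f$; the paper instead bounds $C^P({\bm\xi},{\bm t})$ below for \emph{every} ${\bm\xi}$ directly from the telescoping rewrite via $\tilde F(a_{n+1}x)\le a_{n+1}\tilde F(x)$ (convexity of $\tilde F$ with $\tilde F(0)=0$), which delivers nonnegativity and coercivity in one stroke. Likewise, the strict monotonicity of the left-hand side of the $\nu$-equation rests on the positivity of the derivative of $y\mapsto F(y)-\frac{\lambda(t_i)}{\lambda(t_{i-1})}F(a_iy)$, i.e.\ $f(y)-\hat a_i f(a_iy)>0$, which is again part of the same lemma you skipped.
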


%% This result does not only give an optimal strategy for the large trader becuase we can deduce conditions for the viability of the market. 

%% \begin{corollary}
%% Let us introduce the constant $\nu>0$ and the functions: $h_{P,i}(\xi)=\xi\frac{\left[\frac{1}{a_i}f(\frac{\xi}{a_i})-\hat{a}_if(\xi)\right]}{f\left(\frac{\xi}{a_i}\right)-\hat{a}_if(\xi)}$.
%% Under the same conditions that in Theorem \ref{thm_price_mod}, the price impact reversion model does not admit transaction-triggered price manipulation if and only if the following condition is verified:
%% \begin{equation}
%%   \label{eq:condit_thm_price_mod}
%%   \begin{cases}
%% \frac{h^{-1}_{P,i+1}(\nu)}{a_{i+1}}\geq h^{-1}_{P,i}(\nu)\text{ for }
%% i=1,\ldots ,N-1 \\
%% \text{ and } \nu\geq h^{-1}_{P,N}(\nu)
%%   \end{cases} 
%% \end{equation}
%% \end{corollary}

%% We show in the Appendix \ref{app_price_impact_general} that this last condition implies $\eqref{eq:condit_t_t_manip_m_p}$ when we consider a constant shape function for the distribution of orders in the limit order book.

We now state the corresponding result in continuous time and set:
\begin{equation}
x \in \R, \  h_{P,t}(x)=x \left[1+\frac{\rho_t} {\rho_t\left(1+\frac{xf'(x)}{f(x)}\right)-\eta_t} \right].
\end{equation}

\begin{theorem}\label{thm_price_mod_cont}
 Let  $f\in \cC^2(\R)$. We assume that one of the two following conditions holds.
 \begin{enumerate}[(i)]
 \item For $t\in[0,T]$,
 $\rho_t\left(1+\frac{xf'(x)}{f(x)}\right)-\eta_t>0$ for any $x\in \R$ and $h_{P,t}$ is  bijective on~$\R$, such that
 $h_{P,t}'(x)>0$, $dx$-a.e.
\item For $t\in[0,T]$,
 $\rho_t\left(1+\frac{xf'(x)}{f(x)}\right)-\eta_t<0$ and
 $\rho_t\left(2+\frac{xf'(x)}{f(x)}\right)-\eta_t>0$ for any $x\in \R$, and $h_{P,t}$ is  bijective on~$\R$, such that
 $h_{P,t}'(x)<0$, $dx$-a.e.
 \end{enumerate}
 Then, the cost function $C^P (X)$ is nonnegative,
and there is a unique optimal
admissible strategy $X^\star$ that minimizes~$C^P$. This strategy is given as
follows. The equation
\begin{equation}\label{def_nu_vi_pr}
  \int_0^T\lambda(t)[\rho_t h_{P,t}^{-1}(\nu) f(h_{P,t}^{-1}(\nu))-\eta_tF(h_{P,t}^{-1}(\nu))]dt+\lambda(T)F(\nu)=-\vi
\end{equation}
has a unique solution $\nu \in \R$ and we
set~$\zeta_t=h_{P,t}^{-1}(\nu)$. The strategy $dX^\star_t=\xi^\star_{0} \delta_0(dt)+\xi^\star_{t}dt+\xi^\star_{T} \delta_T(dt)
$ with 
\begin{eqnarray*}
\xi^\star_{0}&=&\lambda(0) F(\zeta_0),  \\
\xi^\star_t&=&\lambda(t)f(\zeta_t)\left[\frac{d \zeta_t}{dt}+\rho_t\zeta_t \right], \\
\xi^\star_{T}&=&\lambda(T)(F(\nu)-F(\zeta_T)),
\end{eqnarray*}
is optimal. The initial trade~$\xi^\star_{0}$ has the same sign
as~$-\vi$. 
\end{theorem}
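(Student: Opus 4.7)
My plan is to take the $N\to\infty$ limit of Theorem~\ref{thm_price_mod} on a regular grid to guess the candidate strategy, then prove its optimality by a verification argument parallel to the one used in Theorem~\ref{thm_vol_mod_cont}. The candidate is designed to keep $D_t = \zeta_t := h_{P,t}^{-1}(\nu)$ throughout $(0,T)$: between jumps the model-$P$ dynamics read $dD_t = \bigl(\xi_t/[\lambda(t)f(D_t)] - \rho_t D_t\bigr)dt$, and a short computation shows that the prescribed $\xi^\star_t$ is exactly what is needed to enforce $\dot D_t = \dot\zeta_t$, while the initial and terminal jumps drive $D$ from $0$ to $\zeta_0$ and from $\zeta_T$ to $\nu$ respectively. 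Existence and uniqueness of $\nu$ in~\eqref{def_nu_vi_pr} come from showing that the left-hand side $\Phi(\nu)$ is continuous, strictly monotone and surjective onto $\R$: differentiating uses $d\zeta_t/d\nu = 1/h_{P,t}'(\zeta_t)$, and conditions (i) or (ii) control the sign through the quantity $\rho_t(1 + xf'(x)/f(x)) - \eta_t$.

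The core step is a Lagrangian rewriting of the cost. For any admissible $X$, the relations $dX^c_t = \lambda(t) f(D_t)(dD^c_t + \rho_t D_t\, dt)$ and $\Delta X_t = \lambda(t)[F(D_{t+}) - F(D_t)]$ combined with integration by parts applied to $\lambda(t)\tilde{F}(D_t)$ and $\lambda(t)F(D_t)$ express both cost and liquidation constraint entirely in terms of $D$:
\begin{align*}
C^P(X) &= \lambda(T)\tilde{F}(D_{T+}) + \int_0^T \lambda(t)\bigl[\rho_t D_t^2 f(D_t) - \eta_t \tilde{F}(D_t)\bigr] dt, \\
-\vi &= \lambda(T) F(D_{T+}) + \int_0^T \lambda(t)\bigl[\rho_t D_t f(D_t) - \eta_t F(D_t)\bigr] dt.
\end{align*}
Subtracting $\nu$ times the second identity from the first yields
\[
C^P(X) + \nu \vi = \lambda(T)\bigl[\tilde{F}(D_{T+}) - \nu F(D_{T+})\bigr] + \int_0^T \lambda(t)\, L_t(D_t)\, dt,
\]
with $L_t(x) := \rho_t(x^2 - \nu x) f(x) - \eta_t(\tilde{F}(x) - \nu F(x))$. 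The boundary term is globally minimized at $D_{T+} = \nu$ since its derivative is $(y - \nu) f(y)$, and a direct calculation produces the key factorization
\[
L_t'(x) = f(x)\bigl[\rho_t(1 + xf'(x)/f(x)) - \eta_t\bigr]\bigl[h_{P,t}(x) - \nu\bigr],
\]
so that $L_t$ attains its unique global minimum at $x = \zeta_t$ under \emph{both} (i) and (ii): in case (ii) the two negated factors combine to give the correct sign of $L_t'$ on each side of $\zeta_t$. Since $X^\star$ realizes $D_t = \zeta_t$ on $(0,T)$ and $D_{T+} = \nu$, it attains these pointwise minima, is optimal, and the strictness of the pointwise minimization gives uniqueness.

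Nonnegativity of $C^P$ and the sign of $\xi^\star_0 = \lambda(0)F(\zeta_0)$ follow from an envelope argument: $dC^P(X^\star)/d\vi = -\nu$, and since $\Phi$ is strictly monotone with $\Phi(0) = 0$, $\nu$ has the opposite sign of $\vi$, so $C^P(X^\star)$ attains its minimum $0$ at $\vi = 0$; likewise, monotonicity of $h_{P,0}$ in case (i) or reversed monotonicity in case (ii) combines with the sign of $\nu$ to make $F(\zeta_0)$ and hence $\xi^\star_0$ carry the sign of $-\vi$. The main obstacle is to carry out the argument uniformly over cases (i) and (ii): the reversed monotonicity of $h_{P,t}$ in case (ii) must be reconciled with the fact that $L_t$ still has its global minimum at $\zeta_t$ (the factorization above is the cleanest tool, but each sign combination needs to be checked), and the integration-by-parts formulas for left-continuous BV processes with jumps at arbitrary times require some technical justification.
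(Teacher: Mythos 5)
Your proposal is correct in its essentials but takes a genuinely different route from the paper's. The paper runs a dynamic verification argument: it introduces a cost-to-go function $C^P(t,T,D_t,X_t)$ with a \emph{time-dependent} multiplier $\nu_t$ solving \eqref{def_nu_costP} at the current state, and shows that the process ``realized cost plus cost-to-go'' satisfies $dC_t=\lambda(t)\psi_t(\zeta_t)\,dt\ge 0$, with equality only along the candidate. You instead perform a static Lagrangian relaxation: after expressing both $C^P(X)$ and the liquidation constraint as functionals of the impact path $D$ (your two displays are correct; they follow from $dX^c_t=\lambda(t)f(D_t)(dD^c_t+\rho_tD_t\,dt)$, $\Delta X_t=\lambda(t)[F(D_{t+})-F(D_t)]$ and the BV chain rule applied to $\lambda(t)\tilde{F}(D_t)$ and $\lambda(t)F(D_t)$), you subtract $\nu$ times the constraint and minimize the integrand $L_t$ and the boundary term pointwise in $D_t$. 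The factorization $L_t'(x)=f(x)\bigl[\rho_t(1+xf'(x)/f(x))-\eta_t\bigr]\bigl[h_{P,t}(x)-\nu\bigr]$ checks out and is the cleanest way to see that $\zeta_t$ is the unique global minimizer under both (i) and (ii), the two sign reversals in case (ii) cancelling. Your route avoids computing $\partial_v\tilde{C}$ and $d\nu_t$ altogether, and uniqueness is transparent ($D_t=\zeta_t$ $dt$-a.e.\ plus left-continuity forces $D\equiv\zeta$ on $(0,T]$, hence $X=X^\star$); what it gives up is the explicit cost-to-go function, which the paper's argument yields as a by-product. Two small improvements: nonnegativity of $C^P$ does not need the envelope argument --- your first display evaluated at $X^\star$ gives $C^P(X^\star)=\lambda(T)\tilde{F}(\nu)+\int_0^T\lambda(t)[\rho_t\zeta_t^2f(\zeta_t)-\eta_t\tilde{F}(\zeta_t)]\,dt\ge 0$ directly, since the integrand vanishes at $\zeta_t=0$ and its derivative in $\zeta$ is $\zeta f(\zeta)\bigl(\rho_t(2+\zeta f'(\zeta)/f(\zeta))-\eta_t\bigr)$, of the sign of $\zeta$ under either hypothesis --- and the sign of $\xi^\star_0$ in case (ii) deserves a separate check, since $h_{P,0}^{-1}$ is then decreasing so $\zeta_0$ has the sign opposite to $\nu$ (the paper's own proof is written only under (i) and is equally terse on this point).
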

In particular, there is no PMS in model~$P$ as soon as Assumptions~$(i)$
or~$(ii)$ hold. Conversely, let us assume that
$ \rho_{t_1}\left(2+\frac{xf'(x)}{f(x)}\right)-\eta_{t_1}<0, $ when $x$ belongs to a
neighbourhood of~$0$ for some $t_1\ge 0$. Then, we set
${\bf t}=(0,t_1,t_2)$ with $t_2>t_1$, and consider that the large trader buys $x>0$ at
time~$t_1$ and sells $x$ at time~$t_2$. The cost of such a round trip is 
\begin{eqnarray*}
&&  C^P ((0,x,-x),{\bm
    t})\\
  &=&\lambda(t_1)G\left(\frac{x}{\lambda(t_1)}\right)+\lambda(t_2)\left[G\left(F\left(e^{-\int_{t_1}^{t_2}
        \rho_s ds} F^{-1}\left( \frac{x}{\lambda(t_2)} \right)\right)-\frac{x}{\lambda(t_2)}
     \right)-\tilde{F}\left(e^{-\int_{t_1}^{t_2}
        \rho_s ds} F^{-1}\left( \frac{x}{\lambda(t_2)} \right)\right) \right]\\
&=&\lambda(t_1)\left[
  -\eta_{t_1} \tilde{F}\left(F^{-1}\left(\frac{x}{\lambda(t_1)}\right)\right)+\rho_{t_1} F^{-1}\left(\frac{x}{\lambda(t_1)}
  \right)^2 f\left(F^{-1}\left(\frac{x}{\lambda(t_1)}\right) \right) \right] (t_2-t_1) +o(t_2-t_1).\\
\end{eqnarray*}
The derivative of $x\mapsto -\eta_{t_1} \tilde{F}(x)+\rho_{t_1} x^2f(x)$ is
$xf(x)\left(\rho_{t_1}\left(2+\frac{xf'(x)}{f(x)}\right)-\eta_{t_1}\right)$ and has the
opposite sign of~$x$ near~$0$. Thus, $C^P ((0,x,-x),{\bm  t})$ is negative
when  $t_2$ is close to~$t_1$ and $x$ is small enough, which gives a PMS.

\begin{corollary}\label{cor_price_mod_cont}
Let $f\in \cC^2(\R)$. Under Assumption~\ref{assumption_mod_price}, the function
$h_{P,t}$ is $\cC^1(\R)$, bijective on~$\R$ and such that $h_{P,t}'>0$. Thus, the
result of Theorem~\ref{thm_price_mod_cont} holds and the last trade 
$\xi^\star_{T}$ has the same sign as $-\vi$.

 Besides, if~\eqref{cond_ttpms_p_bs} also holds, $\xi^\star_t$  has the
same sign as~$-\vi$ for any $0<t<T$, which rules out TTPM.
\end{corollary}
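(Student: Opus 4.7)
The plan has two movements corresponding to the two assertions. For the first, I would verify that Assumption~\ref{assumption_mod_price} implies condition~(i) of Theorem~\ref{thm_price_mod_cont}. Setting $g(x)=xf'(x)/f(x)$ and $A_t(x)=\rho_t(1+g(x))-\eta_t$, Assumption~\ref{assumption_mod_price}.1 gives $xf'(x)\ge 0$ hence $g\ge 0$, and combining with Assumption~\ref{assumption_mod_price}.2 yields $A_t(x)\ge \rho_t-\eta_t>0$. Since $h_{P,t}(x)=x(1+\rho_t/A_t(x))$, $h_{P,t}(x)$ has the sign of $x$ with $|h_{P,t}(x)|>|x|$, so $h_{P,t}$ goes to $\pm\infty$ at $\pm\infty$. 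Differentiating in $x$ produces
\[
h'_{P,t}(x)=\frac{A_t(x)^2+\rho_t A_t(x)-\rho_t^2\,xg'(x)}{A_t(x)^2},
\]
and the decisive observation is that Assumption~\ref{assumption_mod_price}.3 reads exactly $g'\ge 0$ on $\R_{-}$ and $g'\le 0$ on $\R_{+}$, so $xg'(x)\le 0$ everywhere and the numerator is strictly positive. Hence $h_{P,t}$ is a $\cC^{1}$ increasing bijection of $\R$ and Theorem~\ref{thm_price_mod_cont} applies. Because $F$ is strictly increasing and $|\nu|=|h_{P,T}(\zeta_T)|>|\zeta_T|$ with the same sign, $\xi^\star_T=\lambda(T)(F(\nu)-F(\zeta_T))$ inherits the sign of $\nu$, which is the sign of $-\vi$ from the theorem's statement on $\xi^\star_0$.

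For the second assertion, I would control the sign of $\xi^\star_t=\lambda(t)f(\zeta_t)(\zeta_t'+\rho_t\zeta_t)$ by implicit differentiation of $h_{P,t}(\zeta_t)=\nu$ in $t$, giving $\zeta_t'=-(\partial_t h_{P,t})(\zeta_t)/h'_{P,t}(\zeta_t)$. A short algebraic simplification in which the $f$-dependent terms cancel yields the clean identity
\[
\partial_t h_{P,t}(x)=\frac{x(\rho_t\eta_t'-\rho_t'\eta_t)}{A_t(x)^2},
\]
so the sign of $\zeta_t'+\rho_t\zeta_t$ is that of $\zeta_t\bigl[\rho_t A_t(\zeta_t)^2 h'_{P,t}(\zeta_t)-(\rho_t\eta_t'-\rho_t'\eta_t)\bigr]$.

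The final step is an inequality chain that allows the block-shape TTPM criterion to govern the general shape. Expanding
\[
A_t(x)[A_t(x)+\rho_t]=(\rho_t-\eta_t)(2\rho_t-\eta_t)+\rho_t g(x)\bigl[(3\rho_t-2\eta_t)+\rho_t g(x)\bigr],
\]
and using $g\ge 0$, $3\rho_t-2\eta_t>0$ and $-\rho_t^2 xg'(x)\ge 0$, one obtains $A_t(x)^2 h'_{P,t}(x)\ge(\rho_t-\eta_t)(2\rho_t-\eta_t)$. On the other hand, a direct computation gives $((\rho_t-\eta_t)/(2\rho_t-\eta_t))'=-(\rho_t\eta_t'-\rho_t'\eta_t)/(2\rho_t-\eta_t)^2$, turning~\eqref{cond_ttpms_p_bs} into $\rho_t(\rho_t-\eta_t)(2\rho_t-\eta_t)\ge \rho_t\eta_t'-\rho_t'\eta_t$. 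Chaining these two inequalities gives the bracket nonnegativity, so $\xi^\star_t$ carries the sign of $\zeta_t$, hence of $-\vi$. The main obstacle lies in spotting the clean form of $\partial_t h_{P,t}$ and proving the uniform lower bound $A_t^2 h'_{P,t}\ge(\rho_t-\eta_t)(2\rho_t-\eta_t)$, which is exactly what reduces the argument for an arbitrary shape function to the block-shape case already understood.
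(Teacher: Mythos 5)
Your proposal is correct and follows essentially the same route as the paper: positivity of $h_{P,t}'$ from the explicit formula $h_{P,t}'=\bigl(A_t(A_t+\rho_t)-\rho_t^2xg'(x)\bigr)/A_t^2$ together with the three parts of Assumption~\ref{assumption_mod_price}, the bound $\sgn(x)h_{P,t}(x)\ge|x|$ for the sign of the last trade, and the reduction of the intermediate-trade sign to the block-shape criterion~\eqref{cond_ttpms_p_bs} via $g\ge 0$, $xg'(x)\le 0$ and $3\rho_t-2\eta_t=2(\rho_t-\eta_t)+\rho_t>0$. The only difference is presentational: you isolate the identity $\partial_t h_{P,t}(x)=x(\rho_t\eta_t'-\rho_t'\eta_t)/A_t(x)^2$ and the lower bound $A_t^2h_{P,t}'\ge(\rho_t-\eta_t)(2\rho_t-\eta_t)$, whereas the paper packages the identical algebra into a prefactor $\gamma_t$ and a bracketed expression.
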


As for model~$V$, we consider now the case of a power-law shape
$f(x)=|x|^\gamma$. We can apply the results of
Theorem~\ref{thm_price_mod_cont} in this case. We can also notice
from~\eqref{dynamic_modelsVP} that
$dE_t=(\eta_t-\rho_t(1+\gamma))E_tdt$. Therefore, model~$P$ with
resilience $\rho_t$ is the same as model~$V$ with
resilience $\tilde{\rho}_t=\rho_t(1+\gamma)-\eta_t$.
\begin{corollary}\label{cor_powerlaw_p}
When $f(x)=|x|^\gamma$,  model~$P$ does not
admit PMS if, and only if 
 \begin{equation*}
\forall t\ge 0, \  \rho_t(2+\gamma)-\eta_t \ge 0.
\end{equation*}
It does not
admit transaction-triggered price manipulation  if and only if 
\begin{eqnarray*}
&&\forall t\ge 0, \  \rho_t(1+\gamma)-\eta_t \ge 0,\ \text{and }
\left(\frac{\rho_t(1+\gamma)-\eta_t }{\rho_t(2+\gamma)-\eta_t} \right)'+\rho_t\left(\frac{\rho_t(1+\gamma)-\eta_t}{\rho_t(2+\gamma)+\eta_t }\right)\ge 0.
\end{eqnarray*}
\end{corollary}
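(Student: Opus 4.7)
The plan is to leverage the equivalence, hinted in the paragraph preceding the statement, between model~$P$ with the power-law shape $f(x)=|x|^\gamma$ and model~$V$ with the same shape and a modified resilience, and then read off the conditions from Corollary~\ref{cor_powerlaw_v}. This is the natural route because it avoids redoing the variational analysis of Theorem~\ref{thm_price_mod_cont} and Corollary~\ref{cor_price_mod_cont} from scratch.

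First I would make the equivalence precise. Between trades in model~$P$, differentiating $E_t=\lambda(t)F(D_t)$ gives $dE_t=\eta_t E_t\,dt+\lambda(t) f(D_t)\,dD_t=\eta_t E_t\,dt-\rho_t\lambda(t)|D_t|^\gamma D_t\,dt$. Using $\lambda(t)|D_t|^\gamma D_t=\lambda(t)\sgn(D_t)|D_t|^{\gamma+1}=(\gamma+1)\lambda(t)F(D_t)=(\gamma+1)E_t$, this reduces to $dE_t=-\tilde\rho_t E_t\,dt$ with $\tilde\rho_t=(1+\gamma)\rho_t-\eta_t$, which is precisely the decay dynamics of~$E$ in model~$V$ with resilience $\tilde\rho$. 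Trades modify $E$ identically in both models, so any admissible strategy induces the same path $(E_t)$ in model~$P$ with $(\rho,\lambda)$ as in model~$V$ with $(\tilde\rho,\lambda)$. Since the cost~\eqref{cost_continuous2} depends only on $(E_t)$ and the trades themselves, the two optimization problems coincide, and a fortiori the PMS/TTPM properties coincide.

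Next I would translate the two conditions of Corollary~\ref{cor_powerlaw_v} via $\rho\mapsto\tilde\rho$ and simplify. For the no-PMS condition, $\tilde\rho_t(2+\gamma)+\eta_t=((1+\gamma)\rho_t-\eta_t)(2+\gamma)+\eta_t=(1+\gamma)[\rho_t(2+\gamma)-\eta_t]$; since $\gamma>-1$ the sign is that of $\rho_t(2+\gamma)-\eta_t$, giving exactly the first displayed inequality. For the no-TTPM conditions, $\tilde\rho_t+\eta_t=(1+\gamma)\rho_t\ge 0$ holds automatically, while requiring the equivalence to land in the positive-resilience range of model~$V$ forces $\tilde\rho_t\ge 0$, i.e.\ $\rho_t(1+\gamma)-\eta_t\ge 0$; and the derivative condition is transformed by the identities $\tilde\rho_t(1+\gamma)/(\tilde\rho_t(2+\gamma)+\eta_t)=(\rho_t(1+\gamma)-\eta_t)/(\rho_t(2+\gamma)-\eta_t)$ and $\tilde\rho_t(\tilde\rho_t+\eta_t)/(\tilde\rho_t(2+\gamma)+\eta_t)=\rho_t(\rho_t(1+\gamma)-\eta_t)/(\rho_t(2+\gamma)-\eta_t)$ into the displayed expression (modulo what appears to be a $\pm\eta_t$ typo in the denominator of the second fraction in the statement).

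For the converse direction I would rely on the direct round-trip argument used just before Corollary~\ref{cor_price_mod_cont}: the elementary buy-then-sell round trip on $\{0,t_1,t_2\}$ with $t_2\downarrow t_1$ has cost whose leading term is controlled by $xf(x)[\rho_{t_1}(2+xf'(x)/f(x))-\eta_{t_1}]=xf(x)[\rho_{t_1}(2+\gamma)-\eta_{t_1}]$, so a PMS exists whenever $\rho_{t_1}(2+\gamma)-\eta_{t_1}<0$ at some $t_1$. The necessity of the TTPM inequalities comes from inspecting, via the equivalence, the signs of $\xi^\star_0$, $\xi^\star_t$ and $\xi^\star_T$ in the translated form of~\eqref{powerlaw_opt_strat}: when either of the two displayed conditions fails, the corresponding trade flips sign and an intermediate buy/sell appears in the liquidation of $\vi$ shares of definite sign, yielding TTPM. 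The main obstacle is carrying out this sign analysis cleanly when $\tilde\rho_t$ or $\rho_t(2+\gamma)-\eta_t$ vanishes or changes sign, because then the quantity $K_t(\gamma)=[(\rho_t(1+\gamma)-\eta_t)/(\rho_t(2+\gamma)-\eta_t)]^{1+\gamma}$ arising in the translated optimal strategy must be handled with care (possibly outside the positive-resilience setting of model~$V$); this is why the round-trip construction is needed to complement, rather than replace, the equivalence argument in the borderline cases.
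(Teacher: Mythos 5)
Your overall route is the paper's own: the paragraph preceding the corollary states exactly the identity $dE_t=(\eta_t-\rho_t(1+\gamma))E_t\,dt$, i.e.\ that model~$P$ with resilience $\rho$ is model~$V$ with resilience $\tilde\rho_t=\rho_t(1+\gamma)-\eta_t$, and your algebraic translations of the conditions of Corollary~\ref{cor_powerlaw_v} (including spotting the spurious $+\eta_t$ in the last denominator) are correct. The difficulty is that the equivalence only transfers the model-$V$ results where model~$V$'s hypotheses hold, and the entire model-$V$ analysis (Theorem~\ref{thm_vol_mod_cont}, Corollary~\ref{cor_powerlaw_v}, formula~\eqref{powerlaw_opt_strat}) is built on a \emph{positive} resilience. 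The regime $\rho_t(1+\gamma)-\eta_t<0\le\rho_t(2+\gamma)-\eta_t$, which is precisely what separates the no-PMS condition from the no-TTPM condition, corresponds to $\tilde\rho_t<0$ and is therefore outside the reach of the reduction. Indeed, when $\tilde\rho_t>0$ one has $\rho_t(2+\gamma)-\eta_t=\tilde\rho_t+\rho_t>0$ automatically, so within the range where your translation is licit the no-PMS condition is vacuous: everything interesting happens exactly where the translation breaks down.

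Your proposed patch, the explicit round trip, supplies only one of the three missing pieces (existence of a PMS when $\rho_{t_1}(2+\gamma)-\eta_{t_1}<0$). It gives neither the absence of PMS when $\rho_t(1+\gamma)-\eta_t<0\le\rho_t(2+\gamma)-\eta_t$, nor the presence of TTPM there. Both are supplied by applying Theorem~\ref{thm_price_mod_cont} directly to model~$P$: for $f(x)=|x|^\gamma$ one has $xf'(x)/f(x)=\gamma$ and $h_{P,t}(x)=x\,\frac{\rho_t(2+\gamma)-\eta_t}{\rho_t(1+\gamma)-\eta_t}$, so hypothesis~$(ii)$ of that theorem covers exactly this regime; it yields $C^P(X)\ge 0$ (hence no PMS, the boundary case following by continuity in $\rho$), and a unique optimal strategy whose initial trade $\xi^\star_0=\lambda(0)F(h_{P,0}^{-1}(\nu))$ has the \emph{opposite} sign to $-\vi$, because $h_{P,0}^{-1}$ is decreasing there while $\nu$ keeps the sign of $-\vi$. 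That sign flip is the actual source of the necessity of $\rho_t(1+\gamma)-\eta_t\ge 0$; your remark that the equivalence must "land in the positive-resilience range" only delimits where your argument applies and does not prove that TTPM occurs outside it. This is why the paper invokes Theorem~\ref{thm_price_mod_cont} alongside the model-$V$ identification: hypothesis~$(ii)$ exists precisely for this case and cannot be replaced by the reduction to model~$V$.
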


\section{Proofs}
\label{section_proofs}

\subsection{The block shape  case}

\begin{proofT}{thm:m_v_cte}
The quadratic form~\eqref{eq:cost_model_vol_const_deterministe} is given by
$C^V ({\bm \xi}, {\bm t}) =\frac{1}{2}{\bm \xi}^T M^V {\bm \xi}$, with
$M^V_{i,j}=\frac{\exp\left(-\left|\int_{t_i}^{t_j} \rho_s ds
    \right|\right)}{\lambda(t_i \vee t_j)}$, $0\le i,j \le N$.
Let us assume that $a_i\tilde{a}_i<1,\forall i\in\left\{1,\ldots ,N\right\}$. Then, we can define the following  vectors:
\begin{equation*}
\mathbf{y}_{0}  =  \frac{\mathbf{e}_{0}}{\sqrt{\lambda(t_{0})}},\
\mathbf{y}_{i} =
\tilde{a}_{i}\mathbf{y}_{i-1}+\frac{\mathbf{e}_{i}}{\sqrt{\lambda(t_{i})}}\sqrt{1-a_{i}\tilde{a}_i}
, \ 1 \le i\le N
\end{equation*}
where $\mathbf{e}_{0}\ldots \mathbf{e}_{N}$ denote the canonical basis of
$\mathbb{R}^{N+1}$. We have $M^{V}_{ij}=\mathbf{y}_{i}^T \mathbf{y}_{j}$. We introduce $Y$ the upper triangular matrix with columns
$\mathbf{y}_{0},\ldots, \mathbf{y}_{N}$. By assumption, it is invertible and
so is $M=Y^TY$. Conversely, if $M^V$ is positive definite, the minors
$$\det((M^V_{i,j})_{0\le i,j\le
  n})=\frac{1}{\lambda(t_0)}\prod^n_{i=1}\frac{1}{\lambda(t_i)}(1-a_i\tilde{a}_i), \ 1\le n\le N$$
are positive, which gives~\eqref{eq:condit_uniq_sol_vol_mod}.

Let us turn to the optimization problem. One has to minimize $C^V ({\bm \xi},
{\bm t})$ under the linear constraint $\sum_{i=0}^N\xi_i=-\vi$, which gives
\begin{equation}
  \label{eq:opt_strat_model_vol_cte}
  {\bm \xi}^{\star}=-\frac{\vi}{\mathbf{1}^T\left(M^{V}\right)^{-1}\mathbf{1}}\left(M^{V}\right)^{-1}\mathbf{1},
\end{equation}
where $\mathbf{1}\in \R^{N+1}$ is a vector of ones. Since $Y$ is upper
triangular, it can be easily inverted and we can calculate explicitly
$\left(M^{V}\right)^{-1}\mathbf{1}$ and get~\eqref{opt_sol_vol_mod_const}.
\end{proofT}

\subsection{General LOB shape with model~$V$}\label{proof_model_vol_general_case}

%% In this part, we prove the results given in Section
%% \ref{results_general_case}, where we consider that the distribution of orders
%% in the limit order book can be represented by the product of two functions
%% $\lambda(t)f(\xi)$ where $f:\mathbb{R}\mapsto[0,\infty)$ and
%% $\lambda:[0,T]\mapsto(0,\infty)$ are continuous functions.

%% Unlike the case of a constant shape function for the limit order book, there is no relationship between the two models and therefore,  we describe in detail each of them.

Let us introduce some notations. For the time grid~${\bm t}$ given
by~\eqref{det_time_grid}, we introduce the next quantities:
\begin{equation}
\label{alpha definition}
\alpha_{k}:=\int^{t_{k}}_{t_{k-1}}\rho_{s}ds,\: k=1,\ldots,N.
\end{equation}
We can write the cost function~\eqref{CM_def} as follows
\begin{equation}\label{eq_to_min_vol_mod_gral}
  C^V ({\bm \xi}, {\bm
    t})=\sum^{N}_{n=0}\lambda(t_{n})\left[G\left(\frac{E_{{n}}+\xi_{n}}{\lambda(t_{n})}\right)-G\left(\frac{E_{{n}}}{\lambda(t_{n})}\right)\right],
\end{equation}
where we use the following  notations (observe that $E_n=a_n(E_{n-1}+\xi_{n-1})$)
$$E_{0}=0,\  E_n=\sum^{n-1}_{i=0}\xi_{i}e^{-\sum^{n}_{k=i+1}\alpha_{k}},\: 1\leq n\leq N.$$

\begin{lemma}
\label{lemma_2_vol}  
We have $\frac{\partial C^{V}}{\partial \xi_{N}}=F^{-1}\left(\frac{E_{N}+\xi_{N}}{\lambda(t_{N})}\right)$ and, for $i=0,\ldots ,N-1$,
\begin{equation}
\label{relation derivatives model 1}
\frac{\partial C^{V}}{\partial \xi_{i}}-a_{i+1}\frac{\partial C^{V}}{\partial \xi_{i+1}}=F^{-1}\left(\frac{E_{i}+\xi_{i}}{\lambda(t_{i})}\right)-a_{i+1}F^{-1}\left(\frac{E_{i+1}}{\lambda(t_{i+1})}\right).
\end{equation}  
\end{lemma}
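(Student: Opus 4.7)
The plan is to compute $\partial C^V/\partial \xi_i$ explicitly from the sum representation \eqref{eq_to_min_vol_mod_gral} and then observe that the tail (the part involving indices $n \ge i+1$) telescopes nicely when one subtracts $a_{i+1}$ times the derivative at index $i+1$.

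First I would identify the dependence structure. The variable $\xi_i$ enters the $n$-th summand of~\eqref{eq_to_min_vol_mod_gral} in two different ways: directly, when $n=i$ (through the $\xi_n$ in the first argument of $G$), and indirectly, when $n > i$, through $E_n$. From $E_n=\sum_{j=0}^{n-1}\xi_j e^{-\sum_{k=j+1}^{n}\alpha_k}$ one reads off
\[
\frac{\partial E_n}{\partial \xi_i}= e^{-\sum_{k=i+1}^{n}\alpha_k} \quad \text{if } n>i,\qquad \frac{\partial E_n}{\partial \xi_i}=0 \quad \text{if } n\le i,
\]
and $\partial (E_n+\xi_n)/\partial \xi_i$ equals $1$ for $n=i$ and $e^{-\sum_{k=i+1}^{n}\alpha_k}$ for $n>i$. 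Since $G'=F^{-1}$, writing $\Delta_n:=F^{-1}\!\left(\tfrac{E_n+\xi_n}{\lambda(t_n)}\right)-F^{-1}\!\left(\tfrac{E_n}{\lambda(t_n)}\right)$, the chain rule gives
\[
\frac{\partial C^V}{\partial \xi_i}=F^{-1}\!\left(\frac{E_i+\xi_i}{\lambda(t_i)}\right)+\sum_{n=i+1}^{N}\Delta_n\, e^{-\sum_{k=i+1}^{n}\alpha_k}.
\]
This already yields the formula for $i=N$ (the sum is empty).

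Next I would form $\partial C^V/\partial \xi_i - a_{i+1}\,\partial C^V/\partial \xi_{i+1}$ and use the key identity $a_{i+1}\cdot e^{-\sum_{k=i+2}^{n}\alpha_k}=e^{-\sum_{k=i+1}^{n}\alpha_k}$, which comes straight from $a_{i+1}=e^{-\alpha_{i+1}}$. The tail sums for indices $n\ge i+2$ cancel exactly, and what remains is
\[
F^{-1}\!\left(\frac{E_i+\xi_i}{\lambda(t_i)}\right)+a_{i+1}\Delta_{i+1}-a_{i+1}F^{-1}\!\left(\frac{E_{i+1}+\xi_{i+1}}{\lambda(t_{i+1})}\right),
\]
which simplifies by the definition of $\Delta_{i+1}$ to the right-hand side of~\eqref{relation derivatives model 1}.

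There is no real obstacle here; the only point requiring care is to keep track of how many terms depend on $\xi_i$ (in particular that $\xi_i$ enters as a variable only in the $i$-th summand but as a building block of $E_n$ for every $n>i$), and to match the exponential factors correctly so that the telescoping takes place. The statement is then just an algebraic rearrangement.
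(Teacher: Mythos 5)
Your proposal is correct and follows essentially the same route as the paper: both compute $\partial E_n/\partial \xi_i = e^{-\sum_{k=i+1}^n \alpha_k}$ for $n>i$, apply $G'=F^{-1}$ to get the same expression for $\partial C^V/\partial \xi_i$, and then exploit $a_{i+1}=e^{-\alpha_{i+1}}$ to make the tail sums match (the paper phrases this as factoring out $e^{-\alpha_{i+1}}$ and recognizing $\partial C^V/\partial \xi_{i+1}$ inside, while you phrase it as a telescoping cancellation after subtraction — the same computation). No gaps.
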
  

\begin{proof}%{lemma_2_vol}
Let us  first observe that $
\frac{\partial E_{n}}{\partial \xi_{i}}=0,\:\text{ if } i\geq n,\:\text {and }
\frac{\partial E_{n}}{\partial \xi_{i}}=e^{-\sum^{n}_{k=i+1}\alpha_{k}}\:\text{ if
} i<n$.
Thus, we get by using that $G'=F^{-1}$:
{
\begin{eqnarray*}
\frac{\partial C^{V}}{\partial \xi_{i}}%%  & = &
%% G'\left(\frac{E_{i}+\xi_{i}}{\lambda(t_{i})}\right) +  \sum^{N}_{n=i+1}\lambda(t_{n})\left[G'\left(\frac{E_{n}+\xi_{n}}{\lambda(t_{n})}\right)\frac{e^{-\sum^{n}_{k=i+1}\alpha_{k}}}{\lambda(t_{n})}-G'\left(\frac{E_{n}}{\lambda(t_{n})}\right)\frac{e^{-\sum^{n}_{k=i+1}\alpha_{k}}}{\lambda(t_{n})}\right]\\
& = &
F^{-1}\left(\frac{E_{i}+\xi_{i}}{\lambda(t_{i})}\right)+\sum^{N}_{n=i+1}e^{-\sum^{n}_{k=i+1}\alpha_{k}}\left(F^{-1}\left(\frac{E_{n}+\xi_{n}}{\lambda(t_{n})}\right)-F^{-1}\left(\frac{E_{n}}{\lambda(t_{n})}\right)\right)\\
& = &
F^{-1}\left(\frac{E_{i}+\xi_{i}}{\lambda(t_{i})}\right)-e^{-\alpha_{i+1}}F^{-1}\left(\frac{E_{i+1}}{\lambda(t_{i+1})}\right)\\
& + &
e^{-\alpha_{i+1}}\left[F^{-1}\left(\frac{E_{i+1}+\xi_{i+1}}{\lambda(t_{i+1})}\right)+\sum^{N}_{n=i+2}e^{-\sum^{n}_{k=i+2}\alpha_k}\left(F^{-1}\left(\frac{E_{n}+\xi_{n}}{\lambda(t_{n})}\right)-F^{-1}\left(\frac{E_{n}}{\lambda(t_{n})}\right)\right)\right]
\\
& = & F^{-1}\left(\frac{E_{i}+\xi_{i}}{\lambda(t_{i})}\right)-a_{i+1}F^{-1}\left(\frac{E_{i+1}}{\lambda(t_{i+1})}\right)+a_{i+1}\frac{\partial C^{V}}{\partial \xi_{i+1}}.
\end{eqnarray*}  
}
\end{proof}

\begin{lemma}
\label{lemma_1_vol}
Under Assumption~\ref{assumption_mod_vol}, we obtain the next conclusions.
\begin{enumerate}
\item For $i\in \left\{1,\ldots
    ,N\right\}$, the function $ h_{V,i}$ defined in~\eqref{def_hvi}
  is an increasing bijection on $\mathbb{R}$ that satisfies $\sgn(x)
  h_{V,i}(x)\ge \frac{1-a_i\tilde{a}_{i}}{1-a_i}F^{-1}(x)$. 

\item If~\eqref{eq:condit_manip_price_vol_mod} holds, then we
    have 
$\sgn(x) h^{-1}_{V,i+1}\left(x\right)\geq \sgn(x)  \tilde{a}_{i}h^{-1}_{V,i}\left(x\right)$
for $i\in\left\{1,\ldots ,N-1\right\}$.

\item $\sgn(x) F(x) \ge \sgn(x) \tilde{a}_{N}h^{-1}_{V,N}\left(x \right)$.
\end{enumerate}  
\end{lemma}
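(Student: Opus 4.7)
The plan is to prove the three items separately, exploiting that Assumption~\ref{assumption_mod_vol}(i) makes $F^{-1}$ convex on $\mathbb{R}_+$ and concave on $\mathbb{R}_-$ (since $F'=f$), and that $0<a_i<1$ and $0<\tilde{a}_i\le 1$ (the latter from Assumption~\ref{assumption_mod_vol}(ii), so in particular $a_i\tilde{a}_i<1$). For part 1, I would differentiate
\[
h'_{V,i}(x)=\frac{1}{1-a_i}\left[\frac{1}{f(F^{-1}(x))}-\frac{a_i\tilde{a}_i}{f(F^{-1}(\tilde{a}_ix))}\right],
\]
and observe that for $x>0$ one has $\tilde{a}_ix\in(0,x]$, so Assumption~\ref{assumption_mod_vol}(i) gives $f(F^{-1}(\tilde{a}_ix))\ge f(F^{-1}(x))$; combined with $a_i\tilde{a}_i<1$ this yields $h'_{V,i}(x)>0$, and the case $x<0$ is symmetric. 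For the pointwise bound, convexity of $F^{-1}$ on $\mathbb{R}_+$ together with $F^{-1}(0)=0$ gives $F^{-1}(\tilde{a}_ix)\le \tilde{a}_iF^{-1}(x)$ for $x\ge 0$, and concavity reverses it on $\mathbb{R}_-$; substituting into the definition of $h_{V,i}$ gives the claimed bound, which in turn forces $|h_{V,i}(x)|\to\infty$ at $\pm\infty$ and therefore makes $h_{V,i}$ a bijection on $\mathbb{R}$.

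For part 2, I would apply the increasing function $h_{V,i+1}$ to both sides and set $y=h^{-1}_{V,i}(x)$, reducing the claim to $\sgn(y)\psi(y)\ge 0$ with $\psi(y):=h_{V,i}(y)-h_{V,i+1}(\tilde{a}_iy)$. A direct rearrangement produces
\[
(1-a_i)(1-a_{i+1})\psi(y)=(1-a_{i+1})\left[F^{-1}(y)-F^{-1}(\tilde{a}_iy)\right]-(1-a_i)a_{i+1}\left[F^{-1}(\tilde{a}_iy)-F^{-1}(\tilde{a}_{i+1}\tilde{a}_iy)\right].
\]
Writing each bracket as $\int dt/f(F^{-1}(t))$ over the appropriate interval, Assumption~\ref{assumption_mod_vol}(i) gives, for $y>0$, the pointwise inequality $1/f(F^{-1}(t))\ge 1/f(F^{-1}(\tilde{a}_iy))$ on $[\tilde{a}_iy,y]$ and the reverse inequality on $[\tilde{a}_{i+1}\tilde{a}_iy,\tilde{a}_iy]$, so that the common pivot factor $1/f(F^{-1}(\tilde{a}_iy))$ cancels in the ratio and yields $[F^{-1}(y)-F^{-1}(\tilde{a}_iy)]/[F^{-1}(\tilde{a}_iy)-F^{-1}(\tilde{a}_{i+1}\tilde{a}_iy)]\ge (1-\tilde{a}_i)/(\tilde{a}_i(1-\tilde{a}_{i+1}))$. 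Condition~\eqref{eq:condit_manip_price_vol_mod} rewrites as $(1-\tilde{a}_i)/(\tilde{a}_i(1-\tilde{a}_{i+1}))\ge (1-a_i)/(1-a_{i+1})$, which is a fortiori $\ge (1-a_i)a_{i+1}/(1-a_{i+1})$ since $a_{i+1}<1$; this is exactly the lower bound that makes the bracketed expression above nonnegative, so $\psi(y)\ge 0$. The case $y<0$ is handled identically, using the monotonicity of $f$ on $\mathbb{R}_-$.

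For part 3, I would substitute $y=h^{-1}_{V,N}(x)$ (so $\sgn(y)=\sgn(x)$ by part 1) and apply the increasing bijection $F$ to both sides; after unfolding the definition of $h_{V,N}$ the inequality simplifies to $\sgn(y)F^{-1}(y)\ge \sgn(y)F^{-1}(\tilde{a}_Ny)$, which is immediate from $\tilde{a}_N\in(0,1]$, $F^{-1}(0)=0$, and monotonicity of $F^{-1}$. The one delicate step in the lemma is the integral/ratio bound of part 2: the correct pivot is $F^{-1}(\tilde{a}_iy)$, chosen so that the common factor $1/f(F^{-1}(\tilde{a}_iy))$ cancels and what remains can be compared directly to condition~\eqref{eq:condit_manip_price_vol_mod} without any shape-dependent quantity surviving.
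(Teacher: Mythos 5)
Your proof is correct, and while part~1 essentially reproduces the paper's argument (same formula for $h_{V,i}'$, same monotonicity of $f\circ F^{-1}$; the paper gets the pointwise bound by integrating the derivative estimate rather than via convexity of $F^{-1}$, which is equivalent), your parts~2 and~3 take a genuinely different route. The paper fixes $\nu$, sets $y=h_{V,i+1}^{-1}(\nu)$ and $z=\tilde{a}_i h_{V,i}^{-1}(\nu)$, and sandwiches $\nu$ between two auxiliary functions, $\nu\le g_{i+1}(y)$ with $g_{i+1}(x)=F^{-1}(x)+\frac{1-\tilde{a}_{i+1}}{1-a_{i+1}}x\hat{f}(x)$ and $\nu\ge \bar{g}_i(z)$ with $\bar{g}_i(x)=F^{-1}(x)+\frac{1-\tilde{a}_i}{\tilde{a}_i(1-a_i)}x\hat{f}(x)$, where $\hat{f}=(F^{-1})'$; each bound is a rectangle estimate pivoted at a different point ($y$, resp.\ $z$), and condition~\eqref{eq:condit_manip_price_vol_mod} gives $g_{i+1}\le\bar{g}_i$ on $\R_+$, whence $y\ge g_{i+1}^{-1}(\nu)\ge\bar{g}_i^{-1}(\nu)\ge z$. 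You instead reduce to the single pointwise inequality $\sgn(y)\bigl(h_{V,i}(y)-h_{V,i+1}(\tilde{a}_iy)\bigr)\ge 0$, whose rearrangement into two adjacent integrals of $\hat{f}$ sharing the endpoint $\tilde{a}_iy$ lets you bound both with that one common pivot. This buys you two things: you never introduce or invert the auxiliary functions $g_{i+1},\bar{g}_i$ (so no separate monotonicity check for them), and your computation makes visible that the weaker inequality $\frac{1}{\tilde{a}_i}\frac{1-\tilde{a}_i}{1-a_i}\ge a_{i+1}\frac{1-\tilde{a}_{i+1}}{1-a_{i+1}}$ already suffices for part~2. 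Your part~3 is likewise more direct than the paper's reuse of $\bar{g}_N$: unfolding $h_{V,N}$ reduces the claim to $\sgn(y)F^{-1}(y)\ge\sgn(y)F^{-1}(\tilde{a}_Ny)$. Two cosmetic remarks only: in part~3 the map you apply to pass from $F(x)\ge\tilde{a}_Ny$ to $x\ge F^{-1}(\tilde{a}_Ny)$ is $F^{-1}$, not $F$ (the conclusion is unaffected); and in part~2 your ratio formulation degenerates when $\tilde{a}_{i+1}=1$, but then the second integral vanishes and the claim is immediate, so the displayed two-term bound is the safer way to phrase the conclusion.
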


\begin{proof}
\begin{enumerate}
\item {Since the resilience $\rho_t$ is positive, we have $0<a_i<1$, and
    $\tilde{a}_i\le 1$ since $\rho_t+\eta_t\ge 0$ by
    Assumption~\ref{assumption_mod_vol}. We then get
\begin{equation*}
\frac{\partial h_{V,i}(x)}{\partial
  x}=\frac{1}{1-a_i}\left[\frac{1}{f(F^{-1}(x))}-\frac{a_{i}\tilde{a}_{i}}{f(F^{-1}(\tilde{a}_{i}x))}\right]\ge
\frac{1-a_{i}\tilde{a}_{i}}{1-a_i}\frac{1}{f(F^{-1}(x))}>0
\end{equation*}  
because $f$ is nondecreasing on $\mathbb{R_{-}}$ and nonincreasing on
$\mathbb{R_{+}}$,  and $F^{-1}$ is increasing.}

\item{We set $\hat{f}(x)=(F^{-1})'(x)=1/f(F^{-1}(x))$: this function is positive,
    nonincreasing on $\R_-$ and nondecreasing on $\R_+$. Let $\nu\ge 0$ and $y = h^{-1}_{V,i+1}(\nu)$. We note that
$y \ge 0$ because $h_{V,i+1}(0)=0$ and $h_{V,i+1}$ is increasing by
the first point of this lemma. Thus, we have that
\begin{eqnarray*}
\nu & = &
\frac{F^{-1}(y)-a_{i+1}F^{-1}(\tilde{a}_{i+1}y)}{1-a_{i+1}}\\
& = &
F^{-1}(\tilde{a}_{i+1}y)+\frac{F^{-1}(y)-F^{-1}(\tilde{a}_{i+1}y)}{1-a_{i+1}}\\
& = &
F^{-1}(\tilde{a}_{i+1}y)+\frac{1}{1-a_{i+1}}\int^{y}_{\tilde{a}_{i+1}y}\hat{f}(\xi)d\xi \le  F^{-1}(y)+\frac{1-\tilde{a}_{i+1}}{1-a_{i+1}}y\hat{f}(y)=:g_{i+1}(y)
\end{eqnarray*}
Hence, we obtain that $g_{i+1}$ is increasing on $\mathbb{R}$ and then, $y\ge g^{-1}_{i+1}(\nu)$.
Let $z = \tilde{a}_{i}h^{-1}_{V,i}\left(\nu\right)\ge0$. We have:
\begin{eqnarray*}
\nu & = &
\frac{F^{-1}\left(\frac{z}{\tilde{a}_{i}}\right)-a_{i}F^{-1}(z)}{1-a_{i}}\\
& = & F^{-1}(z)+\frac{F^{-1}\left(\frac{z}{\tilde{a}_{i}}\right)-F^{-1}(z)}{1-a_{i}}\\
& = &
F^{-1}(z)+\frac{1}{1-a_{i}}\int^{\frac{z}{\tilde{a}_{i}}}_{z}\hat{f}(\xi)d\xi
 \ge
F^{-1}(z)+\frac{\left(\frac{1}{\tilde{a}_{i}}-1\right)}{1-a_{i}}z\hat{f}(z)=:\bar{g}_i(z)
\end{eqnarray*}
Therefore, if~\eqref{eq:condit_manip_price_vol_mod} holds, we get that
$g_{i+1}(x)\leq\bar{g}_i(x)$ for all $x\ge0$. Then, we have
 $g^{-1}_{i+1}(x)\geq g^{-1}_i(x)$, and therefore
$$y\ge g^{-1}_{i+1}(\nu)\geq g^{-1}_i(\nu) \ge z.$$
The same arguments for $\nu\le 0$ give $y\le g^{-1}_{i+1}(\nu)\leq g^{-1}_i(\nu) \le z$.
}
\item Using the above definition, we have $\sgn(x) \bar{g}_{N}(x)\geq
    \sgn(x) F^{-1}(x)$, and therefore we get
\begin{equation*}
\sgn(\nu) F(\nu)\geq \sgn(\nu)\bar{g}^{-1}_{N}(\nu)\ge \sgn(\nu) z=\sgn(\nu)\tilde{a}_Nh^{-1}_{V,N}\left(\nu\right).
\end{equation*}
\end{enumerate}
\end{proof}

\begin{lemma}\label{lemma_3_vol}
Let $a\in(0,1)$ and $b>0$ such that $ab\le 1$. We have
$G(x)-\frac{1}{b}G(abx)\ge 0$ for $x\in \R$, and  $G(x)-\frac{1}{b}G(abx)\underset{|x|\rightarrow + \infty }\rightarrow + \infty$.
\end{lemma}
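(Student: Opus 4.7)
Set $\phi(x):=G(x)-\tfrac{1}{b}G(abx)$. My plan is to exploit the identity $G'=F^{-1}$, which follows from $\tilde F'(y)=yf(y)$, $(F^{-1})'(y)=1/f(F^{-1}(y))$ and the chain rule applied to $G=\tilde F\circ F^{-1}$. Differentiating gives
$$\phi'(x)=F^{-1}(x)-a\,F^{-1}(abx),$$
and the whole argument reduces to controlling the sign of $\phi'$ by means of the hypotheses $a\in(0,1)$ and $ab\le 1$.

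The key observation is that $abx$ lies between $0$ and $x$ in both sign regimes, since multiplying by $ab\le 1$ preserves the sign and shrinks the modulus. Because $F$ is strictly increasing with $F(0)=0$ (as $f>0$), the same holds for $F^{-1}$, so $F^{-1}(abx)$ has the sign of $x$ and satisfies $|F^{-1}(abx)|\le|F^{-1}(x)|$. Combined with $a<1$ this gives, for $x\ge 0$,
$$\phi'(x)\ge F^{-1}(x)-a\,F^{-1}(x)=(1-a)F^{-1}(x)\ge 0,$$
and the symmetric chain of inequalities for $x\le 0$ gives $\phi'(x)\le (1-a)F^{-1}(x)\le 0$. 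Integrating from $0$, using $\phi(0)=0$, yields at once $\phi\ge 0$ on $\mathbb{R}$ together with the uniform lower bound
$$\phi(x)\ge (1-a)\,G(x)\qquad\text{for all }x\in\mathbb{R}.$$

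It remains to prove $G(x)\to+\infty$ as $|x|\to\infty$, which upgrades this lower bound to the required divergence. By~\eqref{infinite_d_ordres}, $F^{-1}(x)\to+\infty$ as $x\to+\infty$, so for any $K>0$ there exists $X>0$ with $F^{-1}(s)\ge K$ for $s\ge X$; integrating, $G(x)\ge G(X)+K(x-X)\to+\infty$. The case $x\to-\infty$ is symmetric: $F^{-1}(s)\le -K$ for $s\le -X$ implies $G(x)=-\int_x^0 F^{-1}(s)\,ds\ge K(-x-X)+G(-X)\to+\infty$ (the integrand is nonpositive on $(-\infty,0]$, so $G\ge 0$ there as well).

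The only slightly delicate point is the uniform sign control of $\phi'$, where one must handle the positive and negative half-lines separately, each time invoking monotonicity of $F^{-1}$ together with $ab\le 1$. Once this is done, the monotone structure of $\phi$ around its unique minimum at $0$ and the coercivity of $G$ make the conclusion immediate; in particular the degenerate case $ab=1$ gives $\phi(x)=(1-a)G(x)$, consistently with the general bound.
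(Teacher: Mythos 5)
Your proof is correct. It reaches the same key intermediate bound as the paper, namely $\phi(x):=G(x)-\tfrac1b G(abx)\ge (1-a)G(x)$, but by a different route. The paper argues in two cases: for $b>1$ it uses convexity of $G$ together with $G(0)=0$ to get $G(abx)\le ab\,G(x)$ (valid since $ab\le 1$) and hence $\phi\ge(1-a)G$; for $b\le 1$ it instead substitutes $u=bv$ in the second integral and splits
$\phi(x)=\int_{ax}^{x}F^{-1}(u)\,du+\int_0^{ax}\bigl(F^{-1}(u)-F^{-1}(bu)\bigr)du\ge |x|(1-a)F^{-1}(|ax|)$,
which gives the divergence directly without invoking coercivity of $G$. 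You avoid the case split entirely by differentiating: the sign analysis of $\phi'(x)=F^{-1}(x)-aF^{-1}(abx)$, using only that $F^{-1}$ is increasing and odd in sign and that $ab\le 1$ shrinks $|x|$, yields the uniform bound $\phi\ge(1-a)G$ for every admissible $b$ in one stroke; you then pay for this by having to prove separately that $G(x)\to+\infty$ as $|x|\to\infty$, which you do correctly from~\eqref{infinite_d_ordres}. Your version is more unified (and exposes that the paper's dichotomy on $b$ is not really needed for the nonnegativity part), while the paper's second branch has the minor advantage of producing an explicit quantitative lower bound. Both arguments ultimately rest on the same fact, the monotonicity of $G'=F^{-1}$.
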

\begin{proof} Since $G$ is convex ($G'=F^{-1}$ is increasing) and $G(0)=0$, $G(abx)\le
  abG(x)$. If $b > 1$, we then have $G(x)-\frac{1}{b}G(abx)\ge G(x)(1-a)$
  which gives the result. If $b\le 1$, we have
\begin{eqnarray*}
G(x)-\frac{1}{b}G(ab x) & = & \int^{x}_{0}F^{-1}(u)du-\frac{1}{b}\int^{ab x}_{0}F^{-1}(u)du 
 =  \int^{x}_{0}F^{-1}(u)du-\int^{ax}_{0}F^{-1}(b v)dv \\
& = & \int^{x}_{ax}F^{-1}(u)du+\int^{ax}_{0}\left(F^{-1}(u)-F^{-1}(b
  u)\right)du  \ge  |x|(1-a)F^{-1}(|ax|) \underset{|x|\rightarrow + \infty
}\rightarrow \infty. \qedhere
\end{eqnarray*}
\end{proof}

\begin{proofT}{thm_vol_mod} We rewrite the cost function~\eqref{eq_to_min_vol_mod_gral}
  to minimize as follows:
\begin{eqnarray*}
C^V ({\bm \xi}, {\bm
    t}) & = &
\sum^{N}_{n=0}\lambda(t_{n})\left[G\left(\frac{E_{n}+\xi_{n}}{\lambda(t_{n})}\right)-G\left(\frac{E_{n}}{\lambda(t_{n})}\right)\right]
\\
& = &
\lambda(t_{N})G\left(\frac{\sum^{N}_{i=0}\xi_{i}e^{-\sum^{N}_{k=i+1}\alpha_{k}}}{\lambda(t_{N})}\right)-\lambda(0)G(0)\\
& + & \sum^{N-1}_{n=0}\left[\lambda(t_{n})G\left(\frac{\sum^{n}_{i=0}\xi_{i}e^{-\sum^{n}_{k=i+1}\alpha_{k}}}{\lambda(t_{n})}\right)-\lambda(t_{n+1})G\left(\frac{e^{-\alpha_{n+1}}\sum^{n}_{i=0}\xi_{i}e^{-\sum^{n}_{k=i+1}\alpha_{k}}}{\lambda(t_{n+1})}\right)\right]
\end{eqnarray*}
We define the linear map $T:\mathbb{R}^{N+1}\rightarrow\mathbb{R}^{N+1}$ by $
(T\xi)_{n}=\frac{\sum^{n}_{i=0}\xi_{i}e^{-\sum^{n}_{k=i+1}\alpha_{k}}}{\lambda(t_{n})}$,
so that
\begin{equation}
\label{bounded_cost_vol}  
C^V ({\bm \xi}, {\bm
    t}) =\lambda(t_{N})G((T\xi)_{N}) + \sum^{N-1}_{n=0}\left[\lambda(t_{n})G((T\xi)_{n})-\lambda(t_{n+1})G\left(\tilde{a}_{n+1}(T\xi)_{n}\right)\right].
\end{equation}  
Let us observe that $T$ is a linear bijection. By Lemma~\ref{lemma_3_vol} we
get that $C^V ({\bm \xi}, {\bm t})\ge 0$ and $C^V ({\bm \xi}, {\bm
  t})\underset{| {\bm \xi}|\rightarrow +\infty}\rightarrow +\infty$,
which gives the existence of a minimizer~${\bm \xi}^\star$ over ${\bm \xi}$, s.t. $\sum_{i=0}^N
\xi_i=-\vi$.
Thus, by  using~\eqref{relation derivatives model 1}, there must be a  Lagrange multiplier~$\nu$ such that
\begin{equation}
\label{relation between nu and h1}  
\nu=h_{V,i+1}\left(\frac{E_{i}+\xi^\star_{i}}{\lambda(t_{i})}\right),\:
i=0\ldots N-1, \text { and }\nu=F^{-1}\left(\frac{E_{N}+\xi^\star_{N}}{\lambda(t_{N})}\right).
\end{equation}  
We have $\frac{E_{i}+\xi^\star_{i}}{\lambda(t_{i})}=h^{-1}_{V,i+1}\left(\nu\right)$ and
then $E_{i+1}=\lambda(t_{i})a_{i+1}h^{-1}_{V,i+1}\left(\nu\right)$, for $0\le i\le N-1$.
Thus, we get
\begin{eqnarray*}
\xi^\star_{0}&=&\lambda(t_{0})h^{-1}_{V,1}\left(\nu\right),  \\
\xi^\star_{i}&=&\lambda(t_{i})h^{-1}_{V,i+1}\left(\nu\right)-\lambda(t_{i-1})a_{i}h^{-1}_{V,i}\left(\nu\right),\
1\le i\le N-1, \\
\xi^\star_{N}&=&F(\nu)\lambda(t_{N})-\lambda(t_{N-1})a_{N}h^{-1}_{V,N}\left(\nu\right)
\end{eqnarray*}
Furthermore, we note that 
\begin{equation*}
\sum^N_{i=0}\xi^\star_i=-\vi=\lambda(t_0)(1-a_1)h^{-1}_{V,1}(\nu)+\ldots +\lambda(t_{N-1})(1-a_N)h^{-1}_{V,N}(\nu)+F(\nu)\lambda(t_N).  
\end{equation*}
By Lemma~\ref{lemma_1_vol} The right side is an increasing bijection on~$\R$, and we deduce that there is
only one $\nu \in \R$ which satisfies the above equation. This give the
uniqueness of the minimizer~${\bm \xi}^\star$. Moreover, the
functions $F^{-1}$ and $h_{V,i}$ vanish in~$0$, and $\nu$ has the same sign as
$-\vi$, which gives that $\xi^\star_{0}$ and $\xi^\star_{N}$ have the same
sign as $-\vi$ by Lemma~\ref{lemma_1_vol}. Besides, if~\eqref{eq:condit_manip_price_vol_mod} holds, the
trades $\xi^\star_{i}$ have also the same sign as~$-\vi$.
\end{proofT}

Let us now prepare the proof of Theorem~\ref{thm_vol_mod_cont} and assume
that~$h_{V,t}$ is bijective increasing. We introduce for $0\le t\le T$,
\begin{equation}\label{def_cout}
  C^V(t,T,E_t,X_t)=\lambda(t)\left[G(\zeta_t)
  -G\left(\frac{E_t}{\lambda(t)}\right)\right]+\int_t^T F^{-1}(\zeta_u) \xi_u du
  +\lambda(T)[G(F(\nu))-G(\zeta_T)],
\end{equation}
where
\begin{eqnarray}
 &&\nu\in \R, s.t. -E_t+\int_t^T \lambda(u) \rho_u h_{V,u}^{-1}(\nu)du
 +\lambda(T)F(\nu)=-X_t, \label{def_nu_cost}\\
&&\zeta_u=h_{V,u}^{-1}(\nu), \ \xi_u=\lambda(u)[\frac{d\zeta_u}{du}+(\rho_u+\eta_u)\zeta_u].
\end{eqnarray}
Let us observe that $\nu \mapsto\int_t^T \lambda(u) \rho_u h_{V,u}^{-1}(\nu)du
+\lambda(T)F(\nu)$ is increasing an bijective on~$\R$, and~\eqref{def_nu_cost} admits a
unique solution.
The function $C^V(t,T,E_t,X_t)$ denotes the minimal cost to liquidate~$X_t$
shares on the time interval~$[t,T]$ given the current state~$E_t$. In
particular, we observe that
$$C^V(T,T,E_T,X_T)=\lambda(T)\left[G\left(\frac{E_T-X_T}{\lambda(T)} \right)- G\left(
    \frac{E_T}{\lambda(T)} \right)\right], $$
which is the cost of selling~$X_T$ shares at
time~$T$. Besides, an integration by parts gives that
\begin{equation}\label{cout_IPP}
  C^V(t,T,E_t,X_t)=-\lambda(t)G\left(\frac{E_t}{\lambda(t)}\right)
  +\int_t^T  \lambda(u)\left[
    (\rho_u+\eta_u) F^{-1}(\zeta_u)\zeta_u-\eta_uG(\zeta_u) \right]du+ \lambda(T)G(F(\nu)).
\end{equation}
The function $\zeta \mapsto (\rho_u+\eta_u)
F^{-1}(\zeta)\zeta-\eta_uG(\zeta)$ is nonnegative since it vanishes at~$0$,
and its derivative is equal to~$\rho_u h_{V,u}(\zeta)$ that has the same sign
as~$\zeta$. Since $G\ge 0$, we get:
\begin{equation}\label{cout_positif}
   C^V(0,T,0,\vi)\ge 0.
\end{equation}
 
Formula~\eqref{def_cout} can be guessed by simple but tedious
calculations: one has to consider the associated discrete problem on a regular
time-grid and then let the time-step going to zero. We do not present these
calculations here since we will prove directly by a verification argument that
this is indeed the minimal cost.

\begin{proofT}{thm_vol_mod_cont}
Let $(X_t, 0\le t\le T+)$ denote an admissible strategy that
liquidates~$\vi$. We consider $(E_t,0\le t\le T+)$ the solution of
$dE_t=dX_t-\rho_tE_tdt$, $\nu_t$ the solution of~\eqref{def_nu_cost} and
$\zeta_t=h_{V,t}^{-1}(\nu_t)$. We set
$$C_t=\int_0^t  
  F^{-1}\left(\frac{E_s}{ \lambda(s)} \right) dX^c_s +\sum_{0\le s< t}
  \lambda(s)\left[G\left(\frac{E_s+\Delta X_s}{ \lambda(s)}
    \right)-G\left(\frac{E_s}{ \lambda(s)} \right)\right] +C^V(t,T,E_t,X_t). $$
  Let us observe that $C_T=C^V(X)$ and $C_0=C^V(0,T,0,\vi)$. We are going to
  show that $dC_t\ge 0$, and that $dC_t=0$ holds only for~$X^\star$. This will
  in particular show that $C^V(X)\ge 0$ from~\eqref{cout_positif}.

Let us first consider the case of a jump $\Delta X_t>0$. Then, we have
$$\Delta C_t=\lambda(t)\left[G\left(\frac{E_t+\Delta X_t}{ \lambda(t)}
    \right)-G\left(\frac{E_t}{ \lambda(t)} \right)\right]+C^V(t+,T,E_{t+},X_{t+})-C^V(t,T,E_t,X_t).$$
Since $\Delta E_t =\Delta X_t$, the solution~$\nu_t$ of~\eqref{def_nu_cost} is
also the solution of $-E_{t+}+\int_t^T \lambda(u) \rho_u h_{V,u}^{-1}(\nu_t)du
 +\lambda(T)F(\nu_t)=-X_{t+}$, and then $\Delta C_t=0$.  Now, let us calculate $dC_t$. We set
 \begin{eqnarray*}
    \tilde{C}(t,T,E_t,X_t,v)
%%   &=&\lambda(t)\left[G(h_{V,t}^{-1}(v))
%%   -G\left(\frac{E_t}{\lambda(t)}\right)\right]+\lambda(T)[G(F(v))-G(h_{V,T}^{-1}(v))]\\&&
%%   +\int_t^T  \lambda(u) F^{-1}(h_{V,u}^{-1}(v))
%%   \left[\frac{dh_{V,u}^{-1}(v)}{du}+(\rho_u+\eta_u)h_{V,u}^{-1}(v) \right] du
%%   \\
   &=&\lambda(T)G(F(v)) -\lambda(t)G\left(\frac{E_t}{\lambda(t)}\right)\\
  &&+\int_t^T  \lambda(u)\left[ (\rho_u+\eta_u) F^{-1}(h_{V,u}^{-1}(v))
  h_{V,u}^{-1}(v)   - \eta_u G(h_{V,u}^{-1}(v)) \right]du .
\end{eqnarray*}
Then, we have from~\eqref{cout_IPP}:
\begin{eqnarray*}dC_t&=&F^{-1}\left(\frac{E_t}{ \lambda(t)} \right) dX^c_t-
  \lambda'(t)G\left(\frac{E_t}{\lambda(t)}\right) dt-F^{-1}\left(\frac{E_t}{ \lambda(t)}
\right)(dX^c_t-(\rho_t+\eta_t)E_tdt)\\
&&-\lambda(t)(\rho_t+\eta_t)F^{-1}(\zeta_t)\zeta_t dt +\lambda'(t)G(\zeta_t)dt
+\frac{\partial
  \tilde{C}}{\partial v}(t,T,E_t,X_t,\nu_t) d\nu_t.
\end{eqnarray*}
Since $\left[\lambda(T)  f(\nu_t)+\int_t^T
\lambda(u) \rho_u (h_{V,u}^{-1})'(\nu_t) du\right]d\nu_t-\lambda(t)\rho_th_{V,t}^{-1}(\nu_t)dt =d(E_t-X_t)=-\rho_t E_tdt$ and
\begin{eqnarray*}
  \partial_v\tilde{C}(t,T,E_t,X_t,v)&=&\lambda(T) v f(v)+\int_t^T
\lambda(u) \rho_u (h_{V,u}^{-1})'(v)
\left[F^{-1}(h_{V,u}^{-1}(v))+\frac{\rho_u+\eta_u}{\rho_u}
\frac{h_{V,u}^{-1}(v)}{f(h_{V,u}^{-1}(v))} \right]du \\
&=&v\left[\lambda(T)  f(v)+\int_t^T
\lambda(u) \rho_u (h_{V,u}^{-1})'(v)
du \right],
\end{eqnarray*}
we finally get
\begin{eqnarray}
dC_t&=&\lambda(t)\left[(\rho_t+\eta_t)\left(\frac{E_t}{\lambda(t)}F^{-1}\left(\frac{E_t}{\lambda(t)}\right)-\zeta_tF^{-1}(\zeta_t)\right)+\eta_t\left(G(\zeta_t)-G\left(\frac{E_t}{\lambda(t)}\right)\right)+\rho_t
  h_{V,t}(\zeta_t)\left(\zeta_t-\frac{E_t}{\lambda(t)} \right) \right]dt \nonumber\\
&:=&\lambda(t)\psi_t(\zeta_t)dt.
\end{eqnarray}
We have
$\psi_t'(\zeta)=-(\rho_t+\eta_t)\left(F^{-1}(\zeta)+\frac{\zeta}{f(F^{-1}(\zeta))}\right)+\eta_tF^{-1}(\zeta)+\rho_t
h_{V,t}(\zeta)+\rho_t h_{V,t}'(\zeta)(\zeta-\frac{E_t}{\lambda(t)} )=\rho_t
h_{V,t}'(\zeta)(\zeta-\frac{E_t}{\lambda(t)} )$. Since $h_{V,t}'>0$, $\psi_t$
vanishes at $\zeta=\frac{E_t}{\lambda(t)}$, and is positive for
$\zeta\not = \frac{E_t}{\lambda(t)}$.

Thus, if $X$ is an optimal strategy, we
necessarily have $\zeta_t=\frac{E_t}{\lambda(t)}$, $dt$-a.e. Then, we get by differentiating 
$\left[X_t-E_t+\int_t^T \lambda(u)\rho_u h_{V,u}^{-1}(\nu_t)
  du+\lambda(T)F(\nu_t)\right]=0$ that $\left[\int_t^T \lambda(u)\rho_u (h_{V,u}^{-1})'(\nu_t) du +\lambda(T)f(\nu_t)\right]d\nu_t =0,$
which gives $d\nu_t=0$ since $(h_{V,u}^{-1})'>0$ and $f>0$. Thus, we get that $\nu_t=\nu$ where $\nu$
is the solution of~\eqref{def_nu_vi_vol}. In particular, we get
$\Delta X_0=E_{0+}=\lambda(0)h_{V,0}^{-1}(0)=\Delta X^\star_0$ and
then~$X=X^\star$, which gives the uniqueness of the optimal strategy. Last we
observe that $\nu$ has the same sign as $-\vi$ and thus  $\xi^\star_0$ has the same sign as $-\vi$.
\end{proofT}

\begin{proofC}{cor_vol_mod_cont}
Since $\rho_t+\eta_t\ge 0$ and $xf'(F^{-1}(x))\ge 0$ by
Assumption~\ref{assumption_mod_vol}, we have $$h'_{V,t}(x)=\frac{\eta_t+2\rho_t}{\rho_t} \frac{1}{f(F^{-1}(x))} -
\frac{\eta_t+\rho_t}{\rho_t} \frac{xf'(F^{-1}(x))}{f(F^{-1}(x))^3} >0.$$
Also, we have $\sgn(x)h_{V,t}(x)\ge\sgn(x) F^{-1}(x)$ and then $\sgn(x)
h_{V,t}^{-1}(x)\le \sgn(x) F(x)$, which gives that the
last trade~$\xi^\star_T$ has the same sign as~$-\vi$.
%% It remains to look at the nonnegativity of the trades. Let us first observe
%% that $\nu$ has the same sign as $-\vi$, since $h_{V,t}$ in increasing and
%% vanishes at~$0$. Then, $\xi^\star_0$ has the same sign as $-\vi$. Besides, we
%% have $\sgn(x) h_{V,t}(x)\ge \sgn(x) F^{-1}(x)$ and then $\sgn(x)
%% h_{V,t}^{-1}(x)\le \sgn(x) F(x)$, which gives that $\xi^\star_T$ has the same
%% sign as $-\vi$.
Then, we have
$\frac{d\zeta_t}{dt}=-\frac{1}{h_{V,t}'(\zeta_t)}\frac{dh_{V,t}}{dt}(\zeta_t)$
and thus
\begin{eqnarray*}
  \xi^\star_t&=&\frac{\lambda(t)\zeta_t }{h_{V,t}'(\zeta_t)} \left[
  -\frac{d\left(\eta_t/\rho_t \right)}{dt}
  \frac{1}{f(F^{-1}(\zeta_t))}+(\rho_t+\eta_t)h_{V,t}'(\zeta_t)\right] \\
&=&\frac{\lambda(t)\zeta_t }{h_{V,t}'(\zeta_t)}
\left[\frac{1}{\rho_tf(F^{-1}(\zeta_t))}\left(\frac{\rho_t'\eta_t-\rho_t\eta_t'}{\rho_t}+(\rho_t+\eta_t)(2
      \rho_t+\eta_t)\right)-\frac{(\eta_t+\rho_t)^2}{\rho_t}\frac{\zeta_t
    f'(\zeta_t)}{f(F^{-1}(\zeta_t))^3} \right]
\end{eqnarray*}
is nonnegative if~\eqref{cond_ttpms_v_bs} holds since $h_{V,t}'>0$ and $\zeta_t f'(\zeta_t)\ge 0$. 
\end{proofC}

\begin{lemma}\label{lemma_implication} We have $\eqref{eq:condit_manip_price_vol_mod} \implies
  \eqref{eq:condit_t_t_manip_m_v}$ if $\rho_t+\eta_t \ge 0,\ t\ge 0$.
\end{lemma}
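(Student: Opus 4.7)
The plan is a short algebraic manipulation. Condition \eqref{eq:condit_manip_price_vol_mod} is a statement about two consecutive indices $i,i{+}1\in\{1,\dots,N{-}1\}$, so I would fix such an $i$ and introduce the abbreviations $a=a_i$, $b=\tilde a_i$, $c=a_{i+1}$, $d=\tilde a_{i+1}$. Since $\rho_t>0$, we have $a,c\in(0,1)$, and since $\rho_t+\eta_t\ge 0$, we have $b,d\in(0,1]$; in particular $\tilde a_N\le 1$, which takes care of the second part of \eqref{eq:condit_t_t_manip_m_v}.

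With these abbreviations, clearing denominators rewrites \eqref{eq:condit_manip_price_vol_mod} as
\begin{equation*}
b(1-a)(1-d)\le (1-b)(1-c),
\end{equation*}
and rewrites the non-trivial part of \eqref{eq:condit_t_t_manip_m_v} as
\begin{equation*}
c(1-d)(1-ab)\le (1-b)(1-cd).
\end{equation*}

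The key step is the identity $1-cd=(1-c)+c(1-d)$, which gives
\begin{equation*}
(1-b)(1-cd)=(1-b)(1-c)+c(1-b)(1-d).
\end{equation*}
Together with $1-ab-(1-b)=b(1-a)$, the target inequality becomes, after moving $c(1-b)(1-d)$ to the left-hand side,
\begin{equation*}
cb(1-a)(1-d)\le (1-b)(1-c).
\end{equation*}

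Since $c\in(0,1)$, this is weaker than the hypothesis $b(1-a)(1-d)\le (1-b)(1-c)$, so it holds. I do not expect any real obstacle here; the only care is to check that all factors $1-a, 1-b, 1-c, 1-d$ are nonnegative so that the monotonicity step ``multiply by $c<1$'' preserves the inequality, which is exactly where the assumption $\rho_t+\eta_t\ge 0$ is used.
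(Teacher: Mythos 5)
Your proof is correct and follows essentially the same route as the paper's: both clear denominators (all of which are positive under the stated hypotheses) and reduce the implication to a one-line comparison, the only cosmetic difference being that you absorb the slack through the factor $c=a_{i+1}<1$ after cancelling $c(1-b)(1-d)$ from both sides, whereas the paper absorbs it through $\tilde a_{i+1}\le 1$ applied to $(1-a_i)(1-a_{i+1})$. You also correctly pinpoint where $\rho_t+\eta_t\ge 0$ enters, namely to guarantee $\tilde a_i,\tilde a_{i+1},\tilde a_N\le 1$ so that the relevant factors are nonnegative.
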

\begin{proof}
We have
\begin{eqnarray*}\eqref{eq:condit_manip_price_vol_mod} \Leftrightarrow
\frac{1}{\tilde{a}_i}\frac{1-\tilde{a}_i}{1-a_i}\geq\frac{1-\tilde{a}_{i+1}}{1-a_{i+1}} 
&\Leftrightarrow&
\left(1-a_{i+1}\right)-\tilde{a}_i\left(1-a_{i+1}\right)\geq\tilde{a}_i\left(1-a_i\right)-\tilde{a}_i\tilde{a}_{i+1}\left(1-a_i\right) \\
&\Leftrightarrow&
\tilde{a}_{i+1}\left(1-a_i\right)+\frac{1}{\tilde{a}_i}(1-a_{i+1}) \ge
1-a_i+1-a_{i+1}.
\end{eqnarray*}
%% \\
%% & &\Leftrightarrow\tilde{a}_{i+1}\left(1-a_i\right)+\frac{1}{\tilde{a}_i}\left(1-a_{i+1}\right)\ge\lambda\left(1-a_i+1-a_{i+1}\right)
%% \end{eqnarray*}  
Since $\tilde{a}_{i+1} \le 1$,  we get
$
1-a_i+1-a_{i+1}  = 1-a_ia_{i+1}+(1-a_i)(1-a_{i+1}) 
\ge  1-a_ia_{i+1}+\tilde{a}_{i+1}(1-a_i)(1-a_{i+1}) $.  
Thus, $\eqref{eq:condit_manip_price_vol_mod}$ implies that:
\begin{eqnarray*}
\tilde{a}_{i+1}\left(1-a_i\right)+\frac{1}{\tilde{a}_i}(1-a_{i+1})\geq
1-a_ia_{i+1}+\tilde{a}_{i+1}(1-a_i)(1-a_{i+1}) \\
%% \Leftrightarrow \frac{1}{\tilde{a}_i}(1-a_{i+1})\geq
%% 1-a_ia_{i+1}-\tilde{a}_{i+1}a_{i+1}(1-a_i) \\
\Leftrightarrow 1-\tilde{a}_i+a_ia_{i+1}\tilde{a}_i-a_{i}\tilde{a}_{i+1}\geq
a_{i+1}-\tilde{a}_i\tilde{a}_{i+1}a_{i+1}+a_i\tilde{a}_ia_{i+1}\tilde{a}_{i+1}-\tilde{a}_{i+1}a_{i+1}
\\
\Leftrightarrow\left(1-\tilde{a}_i\right)\left(1-a_{i+1}\tilde{a}_{i+1}\right)\geq
a_{i+1}\left(1-\tilde{a}_{i+1}\right)\left(1-a_i\tilde{a}_i\right) 
\Leftrightarrow \eqref{eq:condit_t_t_manip_m_v}. \qedhere
\end{eqnarray*}
\end{proof}

\subsection{General LOB shape with model~$P$}
We first focus on discrete strategies on the time grid~${\bm t}$ such
as~\eqref{det_time_grid}. We introduce the following shorthand notation
$D_n=D_{t_n}$ for $0\le n\le N$ and have
$$ D_0=0, \ D_{n}=a_nF^{-1}\left(\frac{\xi_{n-1}}{\lambda(t_{n-1})}+F(D_{n-1})\right), \ 1\le n\le N.$$
We can write the cost function~\eqref{CM_def} as follows:
\begin{equation}  \label{eq_to_min_price_mod_gral}
C^P ({\bm \xi}, {\bm
    t})  = 
\sum^{N}_{n=0}\lambda(t_{n})\int^{D_{t_{n}+}}_{D_{t_{n}}}x f(x)dx= \sum^N_{n=0}\lambda(t_n)\left[ G\left(
    \frac{\lambda(t_n)F(D_{n})+\xi_n}{\lambda(t_n)} \right)-G(F(D_{n}))
\right].
\end{equation}

%% To prove the results given in Section~\ref{results_general_case_price}, we use the same ideas that in the
%% case of the volume impact reversion model in Section
%% \ref{proof_model_vol_general_case}.

We begin with the following lemmas that we use to characterize the
critical points of the optimization problem.

\begin{lemma}
\label{lemma_1_price_model_general_case}  
For $i=0,\ldots ,N-1$, we have the following equations:
\begin{equation*}
\frac{\partial C^P}{\partial
  \xi_i}=F^{-1}\left(\frac{\xi_i}{\lambda(t_i)}+F(D_i)\right)+\hat{a}_{i+1}\frac{f(D_{i+1})}{f\left(F^{-1}\left(\frac{\xi_{i}}{\lambda(t_i)}+F(D_i)\right)\right)}\left(\frac{\partial
  C^P}{\partial \xi_{i+1}}-D_{i+1}\right).
\end{equation*}  
\end{lemma}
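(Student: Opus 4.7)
The plan is a direct calculation by the chain rule, exploiting the Markovian structure of the state $D_n$. Once the recursion $D_{n+1}=a_{n+1}F^{-1}\!\left(F(D_n)+\xi_n/\lambda(t_n)\right)$ is used, the whole dependence of $(D_{i+1},D_{i+2},\dots)$ on $\xi_i$ enters only through $D_{i+1}$, so perturbations of $\xi_i$ and $\xi_{i+1}$ propagate identically beyond time $t_{i+1}$ up to a multiplicative factor. Finding this factor and relating the two partial derivatives is the heart of the argument.

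First I would compute $\partial D_n/\partial \xi_i$. Since $D_{i+1}=a_{i+1}F^{-1}\!\left(\xi_i/\lambda(t_i)+F(D_i)\right)$ and $D_i$ does not depend on $\xi_i$, I get
\[
\frac{\partial D_{i+1}}{\partial \xi_i}=\frac{a_{i+1}}{\lambda(t_i)\,f\!\left(F^{-1}(\xi_i/\lambda(t_i)+F(D_i))\right)}.
\]
Differentiating the recursion for $n\ge i+2$ yields
\[
\frac{\partial D_n}{\partial \xi_i}=\frac{a_n\,f(D_{n-1})}{f\!\left(F^{-1}(\xi_{n-1}/\lambda(t_{n-1})+F(D_{n-1}))\right)}\,\frac{\partial D_{n-1}}{\partial \xi_i},
\]
and the analogous formula holds with $\xi_i$ replaced by $\xi_{i+1}$. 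An immediate induction (or just a telescoping of the ratio) then shows that for every $n\ge i+2$,
\[
\frac{\partial D_n}{\partial \xi_i}=A_{i+1}\frac{\partial D_n}{\partial \xi_{i+1}},\qquad A_{i+1}:=\lambda(t_{i+1})f(D_{i+1})\frac{\partial D_{i+1}}{\partial \xi_i}=\hat a_{i+1}\frac{f(D_{i+1})}{f\!\left(F^{-1}(\xi_i/\lambda(t_i)+F(D_i))\right)},
\]
using $\hat a_{i+1}=a_{i+1}\lambda(t_{i+1})/\lambda(t_i)$. This is the step I expect to be the technical crux; once it is in hand everything else is mechanical.

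Next I would differentiate the cost~\eqref{eq_to_min_price_mod_gral} term by term. Using $G'=F^{-1}$, the $n=i$ term contributes $F^{-1}\!\left(\xi_i/\lambda(t_i)+F(D_i)\right)$, and for $n>i$ the $n$-th term contributes
\[
\lambda(t_n)f(D_n)\frac{\partial D_n}{\partial \xi_i}\left[F^{-1}\!\left(F(D_n)+\xi_n/\lambda(t_n)\right)-D_n\right],
\]
because $G'(F(D_n))=D_n$. Splitting off the $n=i+1$ summand and substituting $\partial D_n/\partial \xi_i = A_{i+1}\,\partial D_n/\partial \xi_{i+1}$ for $n\ge i+2$, the tail becomes $A_{i+1}$ times the corresponding sum in $\partial C^P/\partial \xi_{i+1}$. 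Recognising that the $n=i+1$ term combined with the missing $n=i+1$ contribution from $\partial C^P/\partial \xi_{i+1}$ telescopes (the $F^{-1}(F(D_{i+1})+\xi_{i+1}/\lambda(t_{i+1}))$ pieces cancel and leave $-D_{i+1}$), I obtain
\[
\frac{\partial C^P}{\partial \xi_i}=F^{-1}\!\left(\frac{\xi_i}{\lambda(t_i)}+F(D_i)\right)+A_{i+1}\left(\frac{\partial C^P}{\partial \xi_{i+1}}-D_{i+1}\right),
\]
which is the claimed identity after substituting the explicit expression for $A_{i+1}$.
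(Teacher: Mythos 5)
Your proposal is correct and follows essentially the same route as the paper: compute $\partial D_n/\partial\xi_i$ via the recursion for $D_n$, observe that for $n\ge i+2$ it equals $\hat a_{i+1}\frac{f(D_{i+1})}{f(F^{-1}(\xi_i/\lambda(t_i)+F(D_i)))}\,\partial D_n/\partial\xi_{i+1}$, then split off the $n=i+1$ term and telescope against $\partial C^P/\partial\xi_{i+1}$. As a minor point, you correctly retain the factor $\lambda(t_n)$ in the term-by-term differentiation of the cost (which the paper's first displayed line drops, apparently a typo), and this factor is in fact needed for the coefficient $\hat a_{i+1}$ to come out right.
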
  

\begin{proof}
  First, we have $\frac{\partial D_n}{\partial \xi_i}=0$ for $i\geq n$, and the following recursive equations:
 \begin{eqnarray*}
&&  \frac{\partial D_n}{\partial \xi_{n-1}}=\frac{a_n}{\lambda(t_{n-1})f
  \left(F^{-1}(\frac{\xi_{n-1}}{\lambda(t_{n-1})}+F(D_{n-1})) \right)},   \ \frac{\partial D_n}{\partial
   \xi_i}=\frac{\hat{a}_{i+1}f(D_{i+1})}{f\left(F^{-1}(\frac{\xi_i}{\lambda(t_i)}+F(D_i))\right)}\frac{\partial
 D_n}{\partial\xi_{i+1}}\text{ for }1\leq i\leq n-2.  
\end{eqnarray*}

From~\eqref{eq_to_min_price_mod_gral}, we get:
\begin{eqnarray*}
  \frac{\partial
    C^P}{\partial\xi_i}& = &
  F^{-1}\left(\frac{\xi_i}{\lambda(t_i)}+F(D_i)\right)+\sum^N_{n=i+1}\left[
  F^{-1}\left(F(D_n)+\frac{\xi_n}{\lambda(t_n)}\right)-D_n\right]f(D_n)\frac{\partial D_n}{\partial \xi_i}
\\
&=&  F^{-1}\left(\frac{\xi_i}{\lambda(t_i)}+F(D_i)\right)+
\frac{\hat{a}_{i+1}f(D_{i+1})}{f\left(F^{-1}(\frac{\xi_i}{\lambda(t_i)}+F(D_i)
  )\right)}\left[F^{-1}(F(D_{i+1})+\frac{\xi_{i+1}}{\lambda(t_{i+1})})-D_{i+1}\right] \\
&&+\frac{\hat{a}_{i+1}f(D_{i+1})}{f\left(F^{-1}(\frac{\xi_i}{\lambda(t_i)}+F(D_i)
  )\right)}\left[\frac{\partial
    C^P}{\partial\xi_{i+1}}
  -F^{-1}\left(\frac{\xi_{i+1}}{\lambda(t_{i+1})}+F(D_{i+1})\right) \right], %% \\
%% &&
%% \lambda(t_{i+1})f(D_{i+1})\frac{\partial
%%   D_{i+1}}{\partial\xi}\left[F^{-1}\left(\frac{\xi_{i+1}}{\lambda(t_{i+1})}+F(D_{i+1})\right)-D_{i+1}\right]
%% \\
%% & &+\sum^N_{n=i+2}\lambda(t_n)f(D_n)\frac{\partial D_n}{\partial
%%   \xi}\left[F^{-1}\left(\frac{\xi_n}{\lambda(t_n)}+F(D_n)\right)-D_n\right] \\
%% &=&
%% F^{-1}\left(\frac{\xi_i}{\lambda(t_i)}+F(D_i)\right)+\lambda(t_{i+1})f(D_{i+1})a_{i+1}\hat{f}\left(\frac{\xi_i}{\lambda(t_i)}+F(D_i)\right)\frac{1}{\lambda(t_i)}
%% \\
%% & &
%% \left[F^{-1}\left(\frac{\xi_{i+1}}{\lambda(t_{i+1})}+F(D_{i+1})\right)-D_{i+1}\right]+\hat{a}_{i+1}f(D_{i+1})\hat{f}\left(\frac{x_i}{\lambda(t_i)}+F(D_i)\right)
%% \\
%% & & \sum^N_{n=i+2}\lambda(t_n)f(D_n)\frac{\partial D_n}{\partial x_{i+1}}\left[F^{-1}\left(\frac{\xi_n}{\lambda(t_n)}+F(D_n)\right)-D_n\right]
\end{eqnarray*}
which gives the result.
\end{proof}

\begin{lemma}
\label{lemma_2_price_mod}
Under Assumption \ref{assumption_mod_price}, we have that:
\begin{enumerate}
  \item The function $x\mapsto xf(x)$ is increasing on $\mathbb{R}$ (or
equivalently, $\tilde{F}$ is convex).
\item We have $f\left(\frac{x}{a_i}\right)-\hat{a}_if(x) >0, \;\;i=1,\ldots ,N.$  
\item  The function
$$x\in \R,\ h_{P,i}(x)=x\frac{\left[\frac{1}{a_i}f(\frac{x}{a_i})-\hat{a}_if(x)\right]}{f\left(\frac{x}{a_i}\right)-\hat{a}_if(x)}$$
is well-defined, bijective increasing and satisfies $ \sgn(x) h_{P,i}(x) \ge |x|. $
%%The function $x\mapsto\frac{\hat{a}_if(x)}{f\left(\frac{x}{a_i}\right)}$ is
%% nondecreasing on $\mathbb{R}_{+}$ and nonincreasing on $\mathbb{R}_{-}$ and
%% takes values in $(0,1]$
\end{enumerate}    
\end{lemma}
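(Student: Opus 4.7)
The plan is to establish the three parts in succession, each invoking one additional hypothesis of Assumption~\ref{assumption_mod_price}.

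For part~1, since $\tilde{F}'(y)=yf(y)$, convexity of $\tilde{F}$ is equivalent to $(xf(x))'=f(x)+xf'(x)\ge 0$. Under Assumption~\ref{assumption_mod_price}(1), $f'\le 0$ on $\R_-$ and $f'\ge 0$ on $\R_+$, so $xf'(x)\ge 0$ everywhere, and combined with $f>0$ this gives strict inequality. For part~2, Assumption~\ref{assumption_mod_price}(2) guarantees $\hat{a}_i\in(0,1)$ (while $a_i\in(0,1)$ is automatic). Since $x/a_i$ has the same sign as $x$ with $|x/a_i|\ge|x|$, the monotonicity of $f$ already used gives $f(x/a_i)\ge f(x)$, so $f(x/a_i)-\hat{a}_i f(x)\ge (1-\hat{a}_i)f(x)>0$.

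For part~3, well-definedness follows from part~2. To prove $\sgn(x)h_{P,i}(x)\ge|x|$, I would compare the numerator and denominator of $h_{P,i}(x)/x$: their difference is $(1/a_i-1)f(x/a_i)>0$, so the ratio exceeds $1$ and both are strictly positive. This estimate also forces $h_{P,i}(x)\to\pm\infty$ as $x\to\pm\infty$; by continuity, strict monotonicity will then imply bijectivity on~$\R$.

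The heart of the proof is thus the monotonicity, and I expect it to be the main obstacle. I would rewrite $h_{P,i}(x)=N(x)/D(x)$ with $N(x)=\phi(x/a_i)-\hat{a}_i\phi(x)$ (for $\phi(y)=yf(y)$) and $D(x)=f(x/a_i)-\hat{a}_i f(x)$, then compute $N'D-ND'$. Using the identity $\phi'(y)f(y)-\phi(y)f'(y)=f(y)^2$, the algebra should split into a purely algebraic piece that I expect to factor as $\tfrac{1}{a_i}(A-\hat{a}_i B)(A-a_i\hat{a}_i B)$, with $A=f(x/a_i)$ and $B=f(x)$ (positive by part~2, since both factors dominate $D>0$), and a derivative piece proportional to $\tfrac{\hat{a}_i(1-a_i)}{a_i}f(x)f(x/a_i)[g(x)-g(x/a_i)]$ with $g(y)=yf'(y)/f(y)$. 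The second piece is nonnegative precisely by Assumption~\ref{assumption_mod_price}(3): on $\R_+$ one has $x/a_i\ge x>0$ with $g$ nonincreasing, and on $\R_-$ one has $x/a_i\le x<0$ with $g$ nondecreasing, so $g(x)\ge g(x/a_i)$ in both cases. Spotting the quadratic factorisation is the delicate step; once in hand, strict monotonicity of $h_{P,i}$, and hence its bijectivity on $\R$, follow at once.
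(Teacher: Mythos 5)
Your proposal is correct, and parts 1 and 2 coincide with the paper's argument (monotonicity of $f$ gives $xf'(x)\ge 0$ and $f(x/a_i)\ge f(x)$, while $\rho_t-\eta_t>0$ gives $\hat a_i<1$); the lower bound $\sgn(x)h_{P,i}(x)\ge |x|$ is also obtained the same way, since subtracting the denominator from the numerator is exactly the paper's rewriting $h_{P,i}(x)=x\bigl[1+\tfrac{(a_i^{-1}-1)f(x/a_i)}{f(x/a_i)-\hat a_i f(x)}\bigr]$. Where you genuinely diverge is the monotonicity. The paper never computes $h_{P,i}'$: it observes that, with the rewriting above, it suffices to show $x\mapsto f(x)/f(x/a_i)$ is nondecreasing on $\R_+$ and nonincreasing on $\R_-$ (so that $h_{P,i}$ is a product of suitably monotone positive factors), and this single derivative computation reduces to $\tfrac{xf'(x)}{f(x)}\ge \tfrac{(x/a_i)f'(x/a_i)}{f(x/a_i)}$, i.e.\ Assumption~\ref{assumption_mod_price}(3). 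Your route is the brute-force quotient rule on $N/D$ with $N(x)=\phi(x/a_i)-\hat a_i\phi(x)$, $\phi(y)=yf(y)$; I checked the algebra and your guessed splitting is exact: writing $A=f(x/a_i)$, $B=f(x)$, $g(y)=yf'(y)/f(y)$, one finds
\begin{equation*}
N'D-ND'=\tfrac{1}{a_i}\bigl(A-\hat a_i B\bigr)\bigl(A-a_i\hat a_i B\bigr)+\tfrac{\hat a_i(1-a_i)}{a_i}\,AB\,\bigl[g(x)-g(x/a_i)\bigr],
\end{equation*}
where the first factor is $D>0$ by part~2, the second dominates it since $a_i\hat a_i<\hat a_i$, and the bracket is nonnegative by Assumption~\ref{assumption_mod_price}(3). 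So both proofs consume exactly the same hypotheses at the same points; the paper's version buys a shorter computation at the cost of a slightly less transparent ``product of monotone functions'' step, while yours requires spotting the quadratic factorisation but then delivers an explicit strictly positive formula for $h_{P,i}'$ (which, incidentally, is essentially the continuous-time expression for $h_{P,t}'$ used in the proof of Corollary~\ref{cor_price_mod_cont}).
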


\begin{proof}
1. We have $\left(xf(x)\right)'>0$  since  $x f'(x) \ge 0 $ by Assumption~\ref{assumption_mod_price}.\\
2. We have for $x\in \R $,
$$
\lambda(t_{i-1})f(\frac{x}{a_i})-\lambda(t_i)a_if(x)  \ge \lambda(t_{i-1})f(x)(1-\hat{a}_i) > 0
$$
because $f\left(\frac{x}{a_i}\right)\ge f(x)$ and $\hat{a}_i<1$ by Assumption~\ref{assumption_mod_price}.\\
3. The function $h_{P,i}$ is well-defined thanks to the second point. We have
$ \sgn(x) h_{P,i}(x) \ge |x|$ since
\begin{equation*}
h_{P,i}(x)=x\left[1+\frac{a^{-1}_i}{1-\hat{a}_i\frac{f(x)}{f(\frac{x}{a_i})}} \right],  
\end{equation*}
and it is sufficient to check that $f(x)/f(x/a_i)$ is
 nondecreasing on $\mathbb{R}_{+}$ and nonincreasing on $\mathbb{R}_{-}$.
  We calculate
\begin{equation*}
\left(\frac{f(x)}{f\left(\frac{x}{a_i}\right)}\right)'=\frac{f'(x)f\left(\frac{x}{a_i}\right)-\frac{1}{a_i}f(x)f'\left(\frac{x}{a_i}\right)}{f\left(\frac{x}{a_i}\right)^2}  .
\end{equation*}
This is nonnegative on $\mathbb{R}_{+}$ and nonpositive on $\mathbb{R}_{-}$ if and
only if $\frac{xf'(x)}{f(x)}\geq\frac{xf'(x/a_i)}{a_if(x/a_i)}$ 
for $x \in \R$, which holds by Assumption~\ref{assumption_mod_price} since
$ |x| \le |x|/a_i$.  
\end{proof}

%% \begin{lemma}
%% \label{lemma_3_price_mod}
%% If $\hat{a}_i<1$, the function
%% $h_{P,i}(x)=x\frac{\left[\frac{1}{a_i}f(\frac{x}{a_i})-\hat{a}_if(x)\right]}{f\left(\frac{x}{a_i}\right)-\hat{a}_if(x)}$
%% is well-defined and increasing.
%% \end{lemma}

%% \begin{proofL}{lemma_3_price_mod}
%% First, we can note that
%% \begin{equation*}
%% h_{P,i}(x)=x\left[1+\frac{a^{-1}_i}{1-\hat{a}_i\frac{f(x)}{f(\frac{x}{a_i})}} \right]  
%% \end{equation*}

%% We see that the denominator of $h_{P,i}$ is positive by the second part of Lemma $\ref{lemma_2_price_mod}$.
%% Using the the third point of Lemma $\ref{lemma_2_price_mod}$ the function $h_{P,i}$ is positive and is nondecreasing on
%% $\mathbb{R}_{+}$ and nonincreasing on $\mathbb{R}_{-}$ as a function of
%% $x$. Therefore, we get the result.
%% \end{proofL}  

\begin{proofT}{thm_price_mod}
We remark that the cost~\eqref{eq_to_min_price_mod_gral} can be written as follows:
\begin{eqnarray*}
C^P ({\bm \xi}, {\bm
    t}) & = & \lambda(t_N)\tilde{F}\left( F^{-1}\left(F(D_N)+\frac{\xi_N}{\lambda(t_N)}\right)\right) \\
& + & \sum^{N-1}_{n=0}\lambda(t_n)\left[ \tilde{F}\left( F^{-1}\left(F(D_n)+\frac{\xi_n}{\lambda(t_n)} \right)\right)-\frac{\lambda(t_{n+1})}{\lambda(t_n)}\tilde{F}\left(a_{n+1}F^{-1}\left(F(D_n)+\frac{\xi_n}{\lambda(t_n)} \right) \right)\right].
\end{eqnarray*}
Since $\tilde{F}$ is convex by Lemma~\ref{lemma_2_price_mod} and $\tilde{F}(0)=0$, we have
$\tilde{F}\left(a_{n+1}x\right)\leq a_{n+1} \tilde{F}(x)$, for $x\in
\R$ and thus
\begin{eqnarray*}
C^P ({\bm \xi}, {\bm
    t})& \geq & \lambda(t_N)\tilde{F}\left( F^{-1}\left(F(D_N)+\frac{\xi_N}{\lambda(t_N)}\right)\right)
 +   \sum^{N-1}_{n=0}\lambda(t_n)\tilde{F}\left(
   F^{-1}\left(F(D_n)+\frac{\xi_n}{\lambda(t_n)}
   \right)\right)(1-\hat{a}_{n+1}). %%\underset{|x|\rightarrow \infty}\longrightarrow\infty \\
\end{eqnarray*}
In particular $C^P ({\bm \xi}, {\bm t})\ge 0$, since $\tilde{F}\ge
0$ and $\hat{a}_{n+1}< 1$ by Assumption~\eqref{assumption_mod_price}. Besides,
by setting $T({\bm
  \xi})=\left(\frac{\xi_0}{\lambda(t_0)},D_1+\frac{\xi_1}{\lambda(t_1)},\dots,D_N+\frac{\xi_N}{\lambda(t_N)}\right)$,
we can easily check that $|T({\bm  \xi})| \underset{|{\bm  \xi} |\rightarrow
  +\infty}\rightarrow +\infty$, which gives immediately that $C^P ({\bm \xi}, {\bm
    t})\underset{|{\bm  \xi} |\rightarrow
  +\infty}\rightarrow +\infty$ since $\tilde{F}(x)\underset{|x |\rightarrow
  +\infty}\rightarrow +\infty$.

Thus, there must be at least one minimizer of~$C^P ({\bm \xi}, {\bm t})$ on
$\{ {\bm
  \xi}\in \R^{N+1}, \sum_{i=0}^N \xi_i=-\vi
\}$, and
we denote by~$\nu$ a Lagrange multiplier such that $\frac{\partial
  C^P}{\partial \xi_i}=\nu$. By Lemma~\ref{lemma_1_price_model_general_case} we obtain:
\begin{equation*}
\nu = h_{P,i+1}(D_{i+1}),\;\; i=0,\ldots ,N-1.  
\end{equation*}
We also have $\frac{\partial
  C^P}{\partial
  \xi_N}=F^{-1}\left(F(D_N)+\frac{x_N}{\lambda(t_N)}\right)=\nu$, and we get ($i=1,\ldots ,N-1$):
\begin{equation*}
\xi^\star_0=\lambda(t_0)F\left(\frac{h^{-1}_{P,1}(\nu)}{a_{1}}\right), \ 
\xi^\star_{i}=\lambda(t_i)\left[
  F\left(\frac{h^{-1}_{P,i+1}(\nu)}{a_{i+1}}\right)-F\left(h^{-1}_{P,i}(\nu)\right)\right], \   
\xi^\star_N=\lambda(t_N)\left[ F(\nu)-F(h^{-1}_{P,N}(\nu))\right].  
\end{equation*}
Besides, we have
\begin{equation}
\label{constraint model 2}  
\lambda(t_N)F(\nu)+\sum^N_{i=1}\lambda(t_{i-1})\left[F\left(\frac{h^{-1}_{P,i}(\nu)}{a_{i}}\right)-\frac{\lambda(t_i)}{\lambda(t_{i-1})}F(h^{-1}_{P,i}(\nu))\right] =-\vi.
\end{equation}
Since $F$ is increasing bijective on~$\mathbb{R}$ and the function
$y\mapsto F(y)-\frac{\lambda(t_i)}{\lambda(t_{i-1})} F(a_i y)$ is increasing
(its derivative is positive by Lemma~\ref{lemma_2_price_mod}), there is a
unique solution to~\eqref{constraint model 2}, and $\nu$ has the same sign as $-\vi$. Thus~${\bm \xi}^\star$ is the
unique optimal strategy. Moreover, the initial and the last trade have the
same sign as~$-\vi$ since $\sgn(\nu)h_{P,N}(\nu)\ge |\nu|$.
\end{proofT}

We now prepare the proof of Theorem~\ref{thm_price_mod_cont}. For sake of
clearness, we will work under assumption~$(i)$ and assume
that $\rho_t\left(1+\frac{xf'(x)}{f(x)}\right)-\eta_t>0$ for any $x\in \R$ and
that $h_{P,t}$ is bijective and increasing. However, a close look at the proof
below is sufficient that the same arguments also work under assumption~$(ii)$.

Contrary to model~$V$, it is more
convenient to work with the process~$D$ rather than~$E$ (both are related by~$D_t=F^{-1}(E_t/\lambda(t))$. We introduce for $0\le t\le T$,
\begin{equation}\label{def_coutP}
  C^P(t,T,D_t,X_t)=\lambda(t)\left[G(\zeta_t)
  - \tilde{F}(D_t) \right]+\int_t^T \zeta_u \xi_u du
  +\lambda(T)[\tilde{F}(\nu)-G(\zeta_T)],
\end{equation}
where
\begin{eqnarray}
 &&\nu\in \R, s.t. -E_t+\int_t^T \lambda(u)\left[ \rho_u h_{P,u}^{-1}(\nu)
   f(h_{P,u}^{-1}(\nu))-\eta_u F\left(h_{P,u}^{-1}(\nu)\right) \right]du
 +\lambda(T)F(\nu)=-X_t, \label{def_nu_costP}\\
&&\zeta_u=h_{P,u}^{-1}(\nu), \ \xi_u=\lambda(u)f(\zeta_u)[\frac{d\zeta_u}{du}+\rho_u\zeta_u].
\end{eqnarray}
Let us observe that $x\mapsto \rho_u x f(x)- \eta_uF(x)$ is increasing:
its derivative is equal
to~$f(x)\left(\rho_u\left(1+\frac{xf'(x)}{f(x)}\right)-\eta_u \right)$ and is
positive by assumption. Therefore, the left hand side of~\eqref{def_nu_costP}
is an increasing bijection on~$\R$ and there is a unique solution~$\nu$
to~\eqref{def_nu_costP}. The function $C^P(t,T,D_t,X_t)$ represents the minimal cost to liquidate~$X_t$
shares on~$[t,T]$ given the current state~$D_t$. We have in
particular that
$C^P(T,T,D_T,X_T)=\lambda(T)\left[G\left(\frac{E_T-X_T}{\lambda(T)} \right)- G\left(
    \frac{E_T}{\lambda(T)} \right)\right], $
which is the cost of selling~$X_T$ shares at
time~$T$. Besides, an integration by parts gives that
\begin{equation}\label{coutP_IPP}
  C^P(t,T,D_t,X_t)=-\lambda(t) \tilde{F}(D_t)
  +\int_t^T  \lambda(u)\left[
    \rho_u f(\zeta_u)\zeta_u^2 -\eta_u \tilde{F}(\zeta_u) \right]du+ \lambda(T)\tilde{F}(\nu).
\end{equation}
The function $\zeta \mapsto \rho_u f(\zeta)\zeta^2 -\eta_u
\tilde{F}(\zeta)$ is nonnegative: it vanishes for $\zeta=0$ and
its derivative is equal to $\zeta f(\zeta)\left(\rho_u\left(2+\frac{\zeta
      f'(\zeta)}{f(\zeta)} \right)   -\eta_u \right)$ and has the same sign as~$\zeta$ by
assumption. Since $\tilde{F}\ge 0$, this gives
\begin{equation}\label{cout_positifP}
  C^P(0,T,0,\vi) \ge 0.
\end{equation}

\begin{proofT}{thm_price_mod_cont}
  Let $(X_t, 0\le t\le T+)$ denote an admissible strategy that
liquidates~$\vi$. We consider $(E_t,0\le t\le T+)$ the solution of 
$dE_t=dX_t+\eta_tE_tdt-\rho_t\lambda(t)f(F^{-1}(E_t/\lambda(t)))F^{-1}(E_t/\lambda(t))dt$,
$D_t=F^{-1}(E_t/\lambda(t))$, $\nu_t$ the solution of~\eqref{def_nu_costP}
and $\zeta_t=h_{P,t}^{-1}(\nu_t)$. We set
$$C_t=\int_0^t  
  D_s dX^c_s +\sum_{0\le s< t}
  \lambda(s)\left[G\left(\frac{E_s+\Delta X_s}{ \lambda(s)}
    \right)-G\left(\frac{E_s}{ \lambda(s)} \right)\right] +C^P(t,T,D_t,X_t). $$
  Let us observe that $C_T=C^P(X)$ and $C_0=C^P(0,T,0,\vi)$. We will
  show that $dC_t\ge 0$, and that $dC_t=0$ holds only for~$X^\star$. This will
  in particular prove that $C^P(X)\ge 0$ from~\eqref{cout_positifP}.

Let us first consider the case of a jump $\Delta X_t>0$. Then, we have
$$\Delta C_t=\lambda(t)\left[G\left(\frac{E_t+\Delta X_t}{ \lambda(t)}
    \right)-G\left(\frac{E_t}{ \lambda(t)} \right)\right]+C^P(t+,T,D_{t+},X_{t+})-C^P(t,T,D_t,X_t).$$
Since $\Delta E_t =\Delta X_t$, we have $\nu_t=\nu_{t+}$
from~\eqref{def_nu_costP} and then $\Delta C_t=0$ since $\tilde{F}(D_t)=G(E_t/\lambda(t))$.
 Now, let us calculate $dC_t$. We set
 \begin{eqnarray*}
    \tilde{C}(t,T,D_t,X_t,v)
   &=&\lambda(T)\tilde{F}(v) -\lambda(t)\tilde{F}\left(D_t\right)
  +\int_t^T  \lambda(u)\left[ \rho_u f(h_{P,u}^{-1}(v))
  h_{P,u}^{-1}(v)^2   - \eta_u \tilde{F}(h_{P,u}^{-1}(v)) \right]du .
\end{eqnarray*}
Since $dD_t^c=-\rho_tD_t dt +\frac{dX^c_t}{\lambda(t)f(D_t)}$, we have from~\eqref{coutP_IPP}:
\begin{eqnarray*}dC_t&=&D_t dX^c_t-
  \lambda'(t) \tilde{F}(D_t) dt +\lambda(t) \rho_t f(D_t)D_t^2dt-D_t dX^c_t 
-\lambda(t)[\rho_t f(\zeta_t) \zeta_t^2 -\eta_t \tilde{F}(\zeta_t)]dt  \\&&+
\frac{\partial
  \tilde{C}}{\partial v}(t,T,D_t,X_t,\nu_t) d\nu_t.
\end{eqnarray*}
Since $d(E_t-X_t)=\lambda(t)\left[\eta_tF(D_t)-\rho_t D_t f(D_t) \right]dt$,
we get from~\eqref{def_nu_costP}
\begin{eqnarray}\label{calcul_dnu_P}
\left[\int_t^T \lambda(u)(h_{P,u}^{-1})'(\nu_t) \left[
    (\rho_u-\eta_u)f(h_{P,u}^{-1}(\nu_t))+ \rho_u h_{P,u}^{-1}(\nu_t)     
   f'(h_{P,u}^{-1}(\nu_t))
    \right]du
 +\lambda(T)f(\nu_t)\right]d\nu_t \hspace{2cm}&&\\-\lambda(t)\left[ \rho_t h_{P,t}^{-1}(\nu_t)
   f(h_{P,t}^{-1}(\nu_t))-\eta_t F\left(h_{P,t}^{-1}(\nu_t)\right) \right]dt
=\lambda(t)\left[\eta_tF(D_t)-\rho_t D_t f(D_t) \right]dt. && \nonumber
\end{eqnarray}
On the other hand, we have
\begin{eqnarray*}
  \partial_v\tilde{C}(t,T,E_t,D_t,v)&=&\lambda(T) v f(v)+\int_t^T
\lambda(u) (h_{P,u}^{-1})'(v) h_{P,u}^{-1}(v) 
\left[ (2\rho_u-\eta_u)f(h_{P,u}^{-1}(v))+ \rho_u h_{P,u}^{-1}(v)     
   f'(h_{P,u}^{-1}(v))\right]du \\
&=&v\left[\lambda(T)  f(v)+\int_t^T\lambda(u) (h_{P,u}^{-1})'(v) 
\left( (\rho_u-\eta_u)f(h_{P,u}^{-1}(v))+ \rho_u h_{P,u}^{-1}(v)     
   f'(h_{P,u}^{-1}(v))\right)
du \right],
\end{eqnarray*}
and we get $\frac{\partial
  \tilde{C}}{\partial v}(t,T,D_t,X_t,\nu_t) d\nu_t
=\lambda(t)\nu_t[\eta_t(F(D_t)-F(\zeta_t))
+\rho_t(\zeta_tf(\zeta_t)-D_tf(D_t))].$ We finally obtain:
\begin{eqnarray}
dC_t&=&\lambda(t) \psi_t(\zeta_t)dt, \text{ with } \\
\psi_t(\zeta)&=&\eta_t(\tilde{F}(\zeta)-\tilde{F}(D_t)) +\rho_t
  (D_t^2f(D_t)-\zeta^2f(\zeta))+h_{P,t}(\zeta) \left(\eta_t(F(D_t)-F(\zeta))
+\rho_t(\zeta f(\zeta)-D_tf(D_t)) \right). \nonumber
\end{eqnarray}
We have $\psi_t(D_t)=0$ and get  that $\psi'_t(\zeta)=h_{P,t}'(\zeta)\left[\eta_t(F(D_t)-F(\zeta))
+\rho_t(\zeta f(\zeta)-D_tf(D_t)) \right]$ by simple calculations. On the one
hand, we have $h_{P,t}'(\zeta)>0$. On the other hand, the bracket is positive
on $\zeta>D_t$ and negative on $\zeta<D_t$ since its derivative is equal to
$(\rho_t-\eta_t)f(\zeta)+\rho_t\zeta f(\zeta)$, which is positive by
assumption. Thus, $D_t$ is the unique minimum of~$\psi_t$: $\psi_t(D_t)=0$ and
$\psi_t(\zeta)>0$ for $\zeta \not = D_t$.

Thus, if $X$ is an optimal strategy, we necessarily have $\zeta_t=D_t$,
$dt$-a.e. From~\eqref{calcul_dnu_P}, we get
$$\left[\int_t^T \lambda(u)(h_{P,u}^{-1})'(\nu_t) \left[
    (\rho_u-\eta_u)f(h_{P,u}^{-1}(\nu_t))+ \rho_u h_{P,u}^{-1}(\nu_t)     
   f'(h_{P,u}^{-1}(\nu_t))
    \right]du
 +\lambda(T)f(\nu_t)\right]d\nu_t=0,$$
and thus $d\nu_t=0$ since $(h_{P,u}^{-1})'$ and $x\mapsto (\rho_u-\eta_u)f(x)+
\rho_u x f'(x)$ are positive functions by assumption. We get that $\nu_t=\nu$,
where $\nu$ is the solution of~\eqref{def_nu_vi_pr}. In particular, we have
$\Delta X_0=\lambda(0)F(D_{0+})=\lambda(0)F(h_{P,0}^{-1}(\nu))=\Delta X^\star_0$ and
then~$X=X^\star$. This gives the uniqueness of the optimal strategy. Last, 
 $\xi^\star_0$ has the same sign as~$-\vi$ since $\nu$ and $-\vi$ have the same sign.
\end{proofT}
\begin{proofC}{cor_price_mod_cont}
By Assumption~\ref{assumption_mod_price} we have $\rho_t-\eta_t>0$, $xf'(x)\ge
0$ and $x\partial_x(\frac{xf'(x)}{f(x)})\le 0$, which gives:
$$h_{P,t}'(x)=\frac{\left(
    \rho_t\left(2+\frac{xf'(x)}{f(x)}\right)-\eta_t \right)\left(
    \rho_t\left(1+\frac{xf'(x)}{f(x)}\right)-\eta_t \right)-\rho_t^2 x\partial_x(\frac{xf'(x)}{f(x)})} { \left(
    \rho_t\left(1+\frac{xf'(x)}{f(x)}\right)-\eta_t \right)^2}>0.$$
Also, we have $\sgn(x) h_{P,t}(x)\ge |x|$, and  $h_{P,t}$ is thus bijective
on~$\R$. We deduce that $\sgn(x) h^{-1}_{P,t}(x)\le |x|$, which gives that the
last trade  $\xi^\star_T$ has the same sign as~$-\vi$.

Let us assume moreover that~\eqref{cond_ttpms_p_bs} holds. Let $\gamma_t= \frac{\lambda(t)f(\zeta_t) \zeta_t}{h'_{P,t}(\zeta_t)\left(
    1+\frac{\zeta_tf'(\zeta_t)}{f(\zeta_t)}-\frac{\eta_t}{\rho_t} \right)^2
}>0$. Then,
\begin{eqnarray*}
  \xi_t&=&\gamma_t \left[\frac{\rho_t'\eta_t-\rho_t\eta_t'}{\rho_t^2}+\rho_t\left(
    1+\frac{\zeta_tf'(\zeta_t)}{f(\zeta_t)}-\frac{\eta_t}{\rho_t} \right)\left(
    2+\frac{\zeta_tf'(\zeta_t)}{f(\zeta_t)}-\frac{\eta_t}{\rho_t}
  \right)-\zeta_t \partial_x(\frac{xf'(x)}{f(x)})|_{x= \zeta_t} \right] \\
&\ge &\gamma_t \left[\frac{\rho_t'\eta_t-\rho_t\eta_t'}{\rho_t^2}+\rho_t\left(
    1-\frac{\eta_t}{\rho_t} \right)\left(
    2-\frac{\eta_t}{\rho_t}
  \right)\right] \text{ by Assumption~\ref{assumption_mod_price}.}\\
&=&\gamma_t \left(\frac{2\rho_t-\eta_t}{\rho_t}\right)^2\left[\left(\frac{\rho_t-\eta_t}{2\rho_t-\eta_t}\right)'+\rho_t\left(\frac{\rho_t-\eta_t}{2\rho_t-\eta_t}\right) \right]\ge 0 \text{ by~\eqref{cond_ttpms_p_bs}.} \qedhere
\end{eqnarray*}
\end{proofC}

%\newpage
\bibliography{ref_papier_lob_time}
\bibliographystyle{plain}
    
\end{document}